\def\UrlSpecials{\do\~{\kern -.15em\lower .7ex\hbox{~}\kern .04em}} \catcode`~=13 
\newcommand{\calA}{\mathcal{A}}
\newcommand{\calB}{\mathcal{B}}
\newcommand{\calC}{\mathcal{C}}
\newcommand{\calK}{\mathcal{K}}
\newcommand{\calL}{\mathcal{L}}
\newcommand{\calM}{\mathcal{M}}
\newcommand{\calP}{\mathcal{P}}
\newcommand{\calS}{\mathcal{S}}
\newcommand{\calT}{\mathcal{T}}
\newcommand{\calY}{\mathcal{Y}}
\newcommand{\bk}{\mathbf{k}}
\newcommand{\rmd}{\mathrm{d}}
\newcommand{\rmQ}{\mathrm{Q}}
\newcommand{\bbE}{\mathbb{E}}
\newcommand{\bbF}{\mathbb{F}}
\newcommand{\bbN}{\mathbb{N}}
\newcommand{\bbP}{\mathbb{P}}
\newcommand{\bbR}{\mathbb{R}}
\newcommand{\bbV}{\mathbb{V}}
\newcommand{\bbZ}{\mathbb{Z}}
\DeclareMathAlphabet{\mathbsf}{OT1}{cmss}{bx}{n}
\DeclareMathAlphabet{\mathssf}{OT1}{cmss}{m}{sl}
\newcommand{\rvA}{\mathsf{A}}
\newcommand{\rvB}{\mathsf{B}}
\newcommand{\rvf}{\mathsf{f}}
\newcommand{\rvF}{\mathsf{F}}
\newcommand{\rvg}{\mathsf{g}}
\DeclareSymbolFont{bsfletters}{OT1}{cmss}{bx}{n}  
\DeclareSymbolFont{ssfletters}{OT1}{cmss}{m}{n}
\DeclareMathSymbol{\bsfGamma}{0}{bsfletters}{'000}
\DeclareMathSymbol{\ssfGamma}{0}{ssfletters}{'000}
\DeclareMathSymbol{\bsfDelta}{0}{bsfletters}{'001}
\DeclareMathSymbol{\ssfDelta}{0}{ssfletters}{'001}
\DeclareMathSymbol{\bsfTheta}{0}{bsfletters}{'002}
\DeclareMathSymbol{\ssfTheta}{0}{ssfletters}{'002}
\DeclareMathSymbol{\bsfLambda}{0}{bsfletters}{'003}
\DeclareMathSymbol{\ssfLambda}{0}{ssfletters}{'003}
\DeclareMathSymbol{\bsfXi}{0}{bsfletters}{'004}
\DeclareMathSymbol{\ssfXi}{0}{ssfletters}{'004}
\DeclareMathSymbol{\bsfPi}{0}{bsfletters}{'005}
\DeclareMathSymbol{\ssfPi}{0}{ssfletters}{'005}
\DeclareMathSymbol{\bsfSigma}{0}{bsfletters}{'006}
\DeclareMathSymbol{\ssfSigma}{0}{ssfletters}{'006}
\DeclareMathSymbol{\bsfUpsilon}{0}{bsfletters}{'007}
\DeclareMathSymbol{\ssfUpsilon}{0}{ssfletters}{'007}
\DeclareMathSymbol{\bsfPhi}{0}{bsfletters}{'010}
\DeclareMathSymbol{\ssfPhi}{0}{ssfletters}{'010}
\DeclareMathSymbol{\bsfPsi}{0}{bsfletters}{'011}
\DeclareMathSymbol{\ssfPsi}{0}{ssfletters}{'011}
\DeclareMathSymbol{\bsfOmega}{0}{bsfletters}{'012}
\DeclareMathSymbol{\ssfOmega}{0}{ssfletters}{'012}
\newcommand{\tilB}{\tilde{B}}
\newcommand{\tilM}{\tilde{M}}
\newcommand{\tilu}{\tilde{u}}
\newcommand{\tilx}{\tilde{x}}
\newcommand{\barP}{\bar{P}}
\newcommand{\barX}{\bar{X}}
\newcommand{\blambda}{\bm{\lambda}}
\newcommand{\trho}{\tilde{\rho}}
\newcommand{\iid}{i.i.d.\ }
\DeclareMathOperator*{\argmax}{arg\,max}
\DeclareMathOperator*{\argmin}{arg\,min}
\DeclareMathOperator{\poly}{poly}
\newtheorem{theorem}{Theorem}
\newtheorem{proposition}[theorem]{Proposition}
\newtheorem{corollary}[theorem]{Corollary}
\newtheorem{definition}{Definition}
\newtheorem{remark}{Remark}
\newcommand{\qednew}{\nobreak \ifvmode \relax \else
      \ifdim\lastskip<1.5em \hskip-\lastskip
      \hskip1.5em plus0em minus0.5em \fi \nobreak
      \vrule height0.75em width0.5em depth0.25em\fi}
\title{Unequal Message Protection: Asymptotic and Non-Asymptotic Tradeoffs}
\author[1]{Yanina Y.\ Shkel, {\em Student Member, IEEE}\thanks{This paper was presented in part at the International Symposium on Information Theory, Istanbul, Turkey, July 2013.}
}
\author[2]{Vincent Y.~F.\ Tan, {\em Member, IEEE}}
\author[3]{Stark C.\ Draper, {\em Member, IEEE}}
\affil[1]{Department of Electrical and Computer Engineering, University of Wisconsin - Madison}
\affil[2]{Department of Electrical and Computer Engineering, National University of Singapore}
\affil[3]{Department of Electrical and Computer Engineering, University of Toronto}
\newcommand{\assrtd}{\left(\{\calM_i\}_{i=1}^m, \rvf, \rvg \right)}
\newcommand{\Mepsilon}{\left( (M_i)_{i=1}^m,  (\epsilon_i)_{i=1}^m \right)}
\newcommand{\MepsilonCL}{\left( (M_{n,i})_{i=1}^{m_n},  (\epsilon_i)_{i=1}^{m_n} \right)}
\newcommand{\MepsilonMD}{\left( (M_{n,i})_{i=1}^{m_n},  (\epsilon_{n,i})_{i=1}^{m_n} \right)}
\newcommand{\Indic}[1]{\mathds{1} \left\{ #1 \right\}}
\begin{document}
\flushbottom
\maketitle

\begin{abstract} 
We study a form of unequal error protection that we term ``unequal message protection'' (UMP). The message set of a UMP code is a union of $m$ disjoint message classes. Each class has its own error protection requirement, with some classes needing better error protection than others. We analyze the tradeoff between rates of message classes and the levels of error protection of these codes. We demonstrate that there is a clear performance loss compared to homogeneous (classical) codes with equivalent parameters. This is in sharp contrast to previous literature that considers UMP codes. To obtain our results we generalize finite block length achievability and converse bounds due to Polyanskiy-Poor-Verd\'{u}. We evaluate our bounds for the binary symmetric and binary erasure channels, and analyze the asymptotic characteristic of the bounds in the fixed error and moderate deviations regimes. In addition, we consider two questions related to the practical construction of UMP codes. First, we study a ``header'' construction that prefixes the message class into a header followed by data protection using a standard homogeneous code. We show that, in general, this construction is not optimal at finite block lengths. We further demonstrate that our main UMP achievability bound can be obtained using coset codes, which suggests a path to implementation of tractable UMP codes. 
\end{abstract}
 
\section{Introduction}
\label{sec:intro}

We consider a channel coding problem of communicating a random message $w$, selected from a set of messages $\calM$, over a noisy channel $W$. Our problem is different from the classical channel coding set up in the following ways. First, we dispense with the usual assumption that messages in $\calM$ are equiprobable. Second, we consider {\em unequal error protection} (UEP), that is, some information is provided better error guarantees than other. Our main object of study is message-wise UEP codes which we term ``unequal message protection'' (UMP) codes. The message set of a UMP code is a union of $m$ disjoint message classes, $\calM = \{\calM_i\}_{i=1}^m$. Each class has its own error protection requirement, with some classes needing better error protection than others. We assume that messages within the same class are equally likely to be selected for transmission, but messages from different message classes could have different probabilities of selection. In this way, UMP codes are well suited for modeling a non-uniform prior on the message set as well as unequal error protection.

Formally, a general  channel from $\rvA$ to $\rvB$ is a stochastic kernel $W(b|a)$ satisfying $\sum_{b\in\rvB} W(b|a)=1$ for all $a\in\rvA$. Consider the following {\em one-shot} definition of a UMP code. In other words, the channel $W$ is only used once.
\begin{definition}[UMP code] \label{def:UMPCode} 
An $\Mepsilon$-UMP code for $W$ is a tuple $\assrtd$ consisting of 
\begin{enumerate}
\item $m$ disjoint message classes $\{\calM_1,\ldots, \calM_m\}$ forming the message set $\calM:=\cup_{i=1}^m\calM_i$ and satisfying $|\calM_i|=M_i$ for each $i \in \{1, 2, \dots m\}$
\item An encoder $\rvf: \calM\to\rvA$
\item A decoder $\rvg: \rvB\to\calM$
\end{enumerate}
such that for all $i \in \{1, 2, \dots m\}$, the average error probabilities for each message class satisfy
\begin{equation} 
\frac{1}{M_i}\sum_{w  \in \calM_i} W (\rvB\setminus \rvg^{-1}(w) | \rvf(w) )\le\epsilon_i. \label{eq:UMP-avg}
\end{equation}
If the maximum probability of error for each class also satisfies 
\begin{equation} 
\max_{w \in \calM_i} W (\rvB\setminus \rvg^{-1}(w) | \rvf(w) )\le\epsilon_i \label{eq:UMP-max}
\end{equation}
we refer to the code as an $\Mepsilon$-UMP code (maximum probability of error).
\end{definition}
We call a code with one class of codewords ($m=1$) a `homogeneous code'; this corresponds to the traditional channel coding framework. Paralleling~\cite{PPV2010, thesis:Polyanskiy}, a homogeneous code with $M$ codewords and average (resp.\ maximum) error probability $\epsilon$ will be referred to as an $(M,\epsilon)$-homogeneous code (average probability of error) (resp.\ (maximum  probability of error)).

To motivate the present problem we note that it is related to a number of classical problems. First, the maximum vs. average error paradigm for homogeneous codes is intimately connected to UMP codes. In channel coding with an average probability of error criterion we are concerned with one error constraint: this is immediately captured by UMP codes with one class. In channel coding with a maximum probability of error criterion we are concerned with $M$ error constraints: the error probability of each codeword. The UMP set up is a generalization of the two since it allows for error constraint of arbitrary groupings of messages. Formally, we state the following proposition. 
\begin{proposition}
\label{prop:max}
There exists an $(M, \epsilon)$-homogeneous code (average probability of error) for $W$ if and only if there exists an $\Mepsilon$-UMP code for $W$ such that $m \geq 1$, $M_i \geq M$ and $\epsilon_i \leq \epsilon$ for some $i \in \{1,2, \dots, m\}$.
Likewise, there exists an $(M, \epsilon)$-homogeneous code (maximum probability of error) for $W$ if and only if there exists an $\Mepsilon$-UMP code for $W$ such that $m\geq M$, $M_i \geq 1$ and $\epsilon_i \leq \epsilon$ for all $i \in \{1,2, \dots, m\}$.
\end{proposition}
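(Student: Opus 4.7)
The plan is to prove each of the two equivalences by direct construction. Two elementary operations suffice: \emph{embedding} a homogeneous code into the UMP framework (either as a single-class code or as $M$ singleton classes), and \emph{restricting} a UMP code to a chosen subset of its messages while \emph{expanding} the decoder's preimages to cover all of $\rvB$. The only monotonicity I need is that enlarging the decoding region $\rvg^{-1}(w)$ can only shrink the error event $\rvB\setminus\rvg^{-1}(w)$, and hence can only decrease the per-message error probability; this is what makes restriction work cleanly for both the average and maximum criteria.

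For the average-error equivalence, the forward direction is immediate: an $(M,\epsilon)$-homogeneous code $(\rvf,\rvg)$ is already an $((M),(\epsilon))$-UMP code with $m=1$ and $\calM_1=\calM$, so the conclusion holds with $i=1$. For the converse, given a UMP code $(\{\calM_j\}_{j=1}^m,\rvf,\rvg)$ with some index $i$ satisfying $M_i\ge M$ and $\epsilon_i\le\epsilon$, I set $p_w := W(\rvB\setminus\rvg^{-1}(w)|\rvf(w))$ for $w\in\calM_i$ and pick $\calM'\subseteq\calM_i$ to be the $M$ messages with smallest $p_w$. A standard sorting argument yields that the average of the $M$ smallest values is at most the overall average over $\calM_i$, which is in turn bounded by $\epsilon_i\le\epsilon$. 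Modifying $\rvg$ to a decoder $\rvg':\rvB\to\calM'$ that agrees with $\rvg$ whenever $\rvg(b)\in\calM'$ and equals an arbitrary fixed element of $\calM'$ otherwise only enlarges preimages, so the average over $\calM'$ remains at most $\epsilon$ and one obtains the desired $(M,\epsilon)$-homogeneous code (average).

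For the maximum-error equivalence, the forward direction splits each codeword of the $(M,\epsilon)$-homogeneous code (maximum) into its own class: with $m=M$ and $\calM_i=\{w_i\}$, the average over $\calM_i$ equals the per-codeword error probability, which is at most $\epsilon$, so the UMP constraints hold with $M_i=1$ and $\epsilon_i=\epsilon$. For the converse, I exploit that an average bound $\epsilon_i$ on a class $\calM_i$ guarantees at least one message $w_i^\star\in\calM_i$ with $p_{w_i^\star}\le\epsilon_i\le\epsilon$. Choosing such a representative from each of $M$ classes (which exist since $m\ge M$) and applying the same decoder-expansion construction as above produces $M$ codewords whose maximum error probability is at most $\epsilon$, i.e., an $(M,\epsilon)$-homogeneous code (maximum). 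There is no genuine obstacle in this proof, which is essentially a definitional unpacking; only the sorting inequality and the preimage-monotonicity deserve explicit mention.
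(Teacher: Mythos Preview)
Your proposal is correct and is precisely the definitional unpacking that the paper has in mind; the paper's own proof is the single sentence ``Both assertions follow directly from Definition~\ref{def:UMPCode},'' and your embedding/restriction constructions, together with the sorting and preimage-monotonicity observations, are exactly what is needed to make that sentence rigorous.
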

\begin{proof}
Both assertions follow directly from Definition  \ref{def:UMPCode}.
\end{proof}

Thus, UMP codes simultaneously capture classical channel coding with an average error probability constraint and classical channel coding with the maximum probability of error constraint, as well as a whole spectrum in between.\footnote{One may note after reading Proposition~\ref{prop:max} that the notion of an $\Mepsilon$-UMP code (maximum probability of error), see~(\ref{eq:UMP-max}), is superfluous. The same object could be represented by a UMP code with $\sum_{i=1}^m M_i$ message classes, containing one codeword in each class, and having $M_i$ classes with average error probabilities $\epsilon_i$. Nevertheless, we keep the notion of a UMP code with maximum probability of error since it is conceptually and notationally convenient to do so.} In light of this observation studying fundamental limits of UMP setting is interesting from a purely theoretical perspective. 

Secondly, UMP codes can be connected to the problem of lossless joint source-channel coding by imposing a prior distribution on the message set $\calM$. In fact, message-wise UEP has appeared explicitly or implicitly in a number of works on joint source-channel coding~\cite{Csiszar1982, WIK2011,KosVer2012, FK2013, mine:STD2014:jscc}. The main distinction between the present problem and joint source-channel coding is that in the present setting the goal is to have error guarantees for all $m$ classes simultaneously, whereas in joint source-channel coding only the expected error over the whole code is studied.  Finally, we should mention that special classes of UMP codes have been used in streaming communication~\cite{Kudryashov1979, mine:NSD2013, mine:SD2010, mine:SDN2011}. We will discuss this application of UMP codes in some greater detail in Section~\ref{sec:conclusion}.

The rest of this paper is structured as follows. For the remainder of this section we present additional definitions and discussion concerning UMP codes, as well as introduce information theoretic quantities used throughout the paper. In Section~\ref{sec:problem} we review prior work and outline the main contribution of this paper. In Section~\ref{sec:finite} we prove our finite block length achievability and converse bounds. In Section~\ref{sec:dmc} we evaluate these bounds for the binary symmetric and binary erasure channels. We also present a construction based on coset codes and numerically compare the performance of our UMP bounds to the header construction that prefixes the message class into a header followed by data protection using a standard homogeneous code. In Section~\ref{sec:asymptotic} we present an asymptotic analysis of UMP codes in the fixed error and moderate deviations regimes. We end with concluding remarks in Section~\ref{sec:conclusion}.

\subsection{Additional Definitions and Notation}
When we use the term `UMP code' we refer to the triple $\assrtd$. It may be convenient also to refer to a {\em UMP codebook} which is the collection of particular codewords associated with $\assrtd$. We denote the UMP codebook by $\calC = \bigcup_{w \in \calM} \{ \rvf(w) \}$. The UMP codebook is a union of subcodebooks associated with each message class. That is, $\calC = \bigcup \calC_i$ where $\calC_i = \bigcup_{w\in \calM_i} \{ \rvf(w) \}$.

%
%

We may be interested in additional performance metrics for UMP codes. For example, we could study the overall error of the code in addition to the errors associated with each class. This is captured by notion of expected error.

\begin{definition}[Expected Error] \label{def:ExpErr}
The expected error of an $\Mepsilon$-UMP code induced by probability vector ${\bm \mu} =(\mu_1, \dots, \mu_m)$ is 
\begin{align}
\epsilon ({\bm \mu} )  = \sum_{i=1}^m \mu_i \epsilon_i. \label{eq:experr}
\end{align}
\end{definition}

We also note that the achievability bounds presented in this paper are generalizations of homogeneous bounds developed for the maximum probability of error criterion. Proposition~\ref{prop:max} suggest why adopting some achievability techniques that work for the average, but not the maximum, probability of error paradigm is challenging. If such adaptation were possible then we could derive a homogeneous bound with maximum probability of error criterion. However, we could still adopt bounds for average probability of error paradigm to bound the expected error of the code. We will take this approach in Theorems~\ref{thm:DTavg} and~\ref{thm:RCU} of Section~\ref{sec:finite}.

If $W^n$ is a sequence of channels indexed by $n$ (for example, $W^n$ is a DMC),  we may be interested in the normalized entropy of the message set assuming that the probability of selecting a message in class $i$ is $\mu_i$. We refer to this quantity as the expected rate.
\begin{definition}[Expected Rate]
\label{def.expectedRate}
The expected rate of an $\Mepsilon$-UMP code over channel $W^n$ induced by probability vector  ${\bm \mu} =(\mu_1, \dots, \mu_m)$ is 
\begin{align}
R ({\bm \mu})  =  \frac{1}{n}   \sum_{i=1}^m \mu_i \left(\log M_i - \log \mu_i\right) \label{eq:expectedRate}
\end{align}
bits per channel use.
\end{definition}

Throughout this paper $i$ will always denote the index of a class in a UMP code, $m$ the number of classes, and $n$ the channel block length. When we study asymptotic bounds for UMP codes we will consider the situation in which the number of classes scales in block length. We will denote this scaling by $m_n$.

When we present the single-shot finite block length bounds for UMP codes in Section~\ref{sec:finite} there is no scaling in $m$ and so we use the notation of $\Mepsilon$-UMP codes. For fixed error asymptotic analysis we use $\MepsilonCL$-UMP codes. We emphasize that the error probabilities are fixed, while the number of message classes is allowed to scale in $n$. For moderate deviations asymptotic analysis we let rate and error probability scale with block length and use the notation $\MepsilonMD$-UMP codes. Again, this is to emphasize that error probabilities, number of message classes, and messages class sizes, scale with $n$.

We will use sans-serif letters to indicate alphabets in single shot setting; for example, $\rvA$ will usually denote the input alphabet, and $\rvB$ will denote the output alphabet for the channel $W$. When we apply the single-shot bounds to DMCs with transition matrix $W$ and input/output alphabets $\calA$, $\calB$ we will apply them to the channel  $W^n$ and take $\rvA = \calA^n$, $\rvB = \calB^n$. Calligraphic letters will denote sets and we will use $\Indic{\calS}$ to denote the indicator function on some set $\calS$. Finally, we define output distributions $PW$ as $PW(y) = \sum_{x} P(x) W(y|x)$ and $W_x(y) =W(y|x)$.

\subsection{Information Theoretic Quantities}
 

To state our bounds we define the {\em information density} of $(X,Y)$ with joint distribution $P_{XY}$ as
\begin{align}
\imath_{X;Y}(x;y) := \log\frac{\rmd P_{Y|X=x}}{\rmd P_{Y}}(y).
\end{align}

We also define two functions that relate to hypothesis testing. Consider a random variable $Y$ defined on $\mathsf{B}$ that can take probability measure $P$ or $Q$. A randomized test between these two distributions is defined by a random transformation $P_{Z|B}: \mathsf{B} \to \{0,1\}$ where $0$ indicates that the test chooses $Q$. The best false alarm achievable among all randomized test with detection probability at least $\alpha$ is given by
\begin{align}
\beta_{\alpha}(P,Q) :=  \inf_{P_{Z|Y} : \sum_{b \in \mathsf{B}}     P_{Z|Y}(1|b) P(b)\geq \alpha} \sum_{b \in \mathsf{B}}   P_{Z|Y}(1|b)Q(b),
\end{align}
where the minimizer $P_{Z|Y}^*$ is guaranteed to be attained by the Neyman-Pearson lemma, see for example~\cite[Appendix B]{PPV2010}. 

In addition, we define a related measure of performance for the composite hypothesis test between $Q$ and a collection $\{P_{Y|X=x}\}_{x\in \rvF}$
\begin{align}
\kappa_{\tau}(\rvF,Q) := \inf_{P_{Z|Y}: \inf _{x\in \rvF} P_{Z|Y} (1|x) \geq \tau} \sum_{b \in \rvB} Q_Y(b)P_{Z|Y}(1|b) \label{eq:kappa}
\end{align}

For our asymptotic analysis we introduce the following information theoretic quantities. Denote by $\calP$ the $(|\calA| -1)$-dimensional simplex over $\bbR^{|\calA|}$ of input probability distributions. For any fixed $P\in \calP$ define: 
\begin{itemize}
\item mutual information as
\begin{align}
I(P,W) = \bbE[\imath_{X;Y}(X;Y)] = \sum_{x\in \calA, y\in \calB} P(x) W(y|x) \log\frac{W(y|x)}{PW(y)}
\end{align}
\item conditional information variance as
\begin{align}
V(P,W)= \bbE[\bbV ar(\imath_{X;Y}(X;Y)|X)] = \sum_{x\in \calA} P(x) \sum_{y\in \calB}  \left(\log\frac{W(y|x)}{PW(y)}-D(W(\cdot|x) \| PW) \right)^2,
\end{align}
\item the channel capacity as
\begin{align}
C=\max_{P\in \calP} I(P,W),
\end{align} 
\item subset of capacity achieving distributions as
\begin{align}
\Pi= \left\{P \in \calP: I(P,W)=C \right\},
\end{align}
\item maximal and minimal conditional variance as
\begin{align}
V_{\max} &= \max_{P \in \Pi} V(P,W)\\
V_{\min} &=\min_{P\in \Pi}V(P,W)
\end{align}
\item and the $\epsilon$-dispersion as
\begin{align}
V_\epsilon = \left \{\begin{array}{cc}
V_{\min}, & \epsilon<1/2 \\
V_{\max}, & \epsilon\geq1/2 \end{array} \right.
\end{align}
\item and finally information spectrum divergence as
\begin{align}
D_s^\epsilon(P\| Q ) :=\max\bigg\{ R \in\bbR: P\Big( \Big\{x: \log\frac{P(x)}{Q(x)}\le R \Big\}\Big)\le\epsilon\bigg\}.\label{eq:InfSpcDiv}
\end{align}
\end{itemize}
\section{Problem Overview}
\label{sec:problem}

\subsection{Prior Work}

Prior work on message-wise UEP has been limited to the asymptotic setting and to discrete memoryless channels (DMC). The first study was by Csisz\'{a}r~\cite{Csiszar1982} who showed that if codewords in message class $i$ are generated at rate $R_i$, then each class of codewords can have a reliability function $E(R_i)$, where $E(R)$ is the reliability function for a homogeneous ($m=1$) codebook of rate $R$.\footnote{Provided the number message classes scales sub exponentially in channel block length $n$} A similar result, that there is no apparent performance loss from several message classes being packed into the same UMP codebook, was later obtained as part of the study of error exponents for UEP schemes by Borade-Nakibo\u{g}lu-Zheng~\cite{BNZ2009}. 

The focus of this paper is on fixed error and moderate deviations asymptotic analyses, rather than analyses of large deviations setting, is in~\cite{Csiszar1982,BNZ2009}. First, consider fixing an error probability requirement for each class and study how fast corresponding rates can grow in $n$. This question has received a lot of attention in recent literature for the homogeneous case. Let $M^\ast(\epsilon, W)$ be the largest possible homogeneous code that attains error probability $\epsilon$ over an arbitrary single-shot channel $W$(cf.~\cite[Definition 2]{thesis:Polyanskiy}). Strassen~\cite{Strassen1962}, showed that for positive dispersion DMC $W$ the following holds
\begin{align}
\log M^\ast(\epsilon, W^n) =  nC -\sqrt{nV_{\epsilon}}Q^{-1}(\epsilon) + \theta(n)  \label{eq:Strassen}
\end{align}
where $Q(\cdot)$ is the tail probability of a standard normal distribution and $\theta(n) = O(\log n)$. Since then a number of works~\cite{PPV2010, thesis:Polyanskiy, TomTan2013, preprint:AltWag2013} have obtained sharper bounds on the remainder term $\theta(n)$, of which we will make use in this paper. 

Recently, Wang-Ingber-Kochman~\cite{WIK2011} derived similar fixed-error asymptotic results for the message-wise UEP problem studied here. They demonstrated that the $\epsilon$-dispersion of each class of codewords in a codebook with $m_n$ message classes matches the $\epsilon$-dispersion of each class individually, provided $m_n$ grows at most as fast as a polynomial in block length $n$. Using the notation of our paper, their result states that there is a sequence of $\MepsilonCL$-UMP codes satisfying,
\begin{align}
\log M_{n,i} = nC -\sqrt{nV_{\epsilon}}Q^{-1}(\epsilon_i) +\theta_i(n). \label{eq:WIK}
\end{align}
where $\theta_i(n)=  O(\log n)$. Just like the study of error exponents in~\cite{Csiszar1982, BNZ2009} this setting together with the assumption of polynomial (or smaller) scaling of $m_n$ does not expose any tradeoffs between different classes of a UMP code.

In the asymptotic analysis presented in~\cite{Strassen1962, thesis:Polyanskiy, TomTan2013, preprint:AltWag2013, WIK2011} the tolerated probability of error is fixed and the gap to capacity drops as $\frac{1}{\sqrt{n}}$. Another natural question to ask is what happens if the rate of a code approaches capacity, but at a slower rate than in~(\ref{eq:Strassen}). This {\em moderate deviations} behavior was studied for $m=1$ by Altu\u{g} and Wagner in~\cite{AltWag2010} for DMCs with $V_{\min}>0$ and strictly positive entries. The positive entry assumption was later relaxed by Altu\u{g}-Wagner in~\cite{preprint:AW2012}, and by Polyanksiy-Verd\'{u} in~\cite{PolVer2010}. Polyanksiy and Verd\'{u} also addressed the zero dispersion case for DMC and the additive Gaussian noise channels (AWGN). The moderate divisions results state that for positive dispersion DMC $W$, and any sequence of positive real numbers $(\rho_n)_{n\geq 1}$ such that
\begin{align}
\rho_n \to 0, \mbox{ and } n\rho_n^2 \to \infty \label{eq:intro:rho}
\end{align}
there exists a sequence of $(M_n, \epsilon_n)$-homogeneous codes over $W$ that satisfy
\begin{align}
\log M_n = nC - n\rho_n \label{eq:M_n}
\end{align}
and
\begin{align}
\limsup_{n \to \infty} \frac{1}{n\rho^2_n} \log \epsilon_n \leq -\frac{1}{2V}.
\end{align}
Conversely, for any sequence of real numbers $(\rho_n)_{n =1}^\infty$ satisfying~(\ref{eq:intro:rho}) and any sequence of $(M_n, \epsilon_n)$-codes satisfying~(\ref{eq:M_n}) it must be the case that
\begin{align}
\liminf_{n \to \infty} \frac{1}{n\rho^2_n} \log \epsilon_n \geq -\frac{1}{2V}.
\end{align}

We will call $\frac{1}{2V}$ `moderate deviations exponent'  and $\rho_n^2$ the `speed of convergence'. This result lies between the fixed error asymptotic analysis of~\cite{Strassen1962} and the large deviations analysis~\cite{book:Gallager}. To the best of the authors' knowledge, no study of UMP codes in the moderate deviations settings has been done to date.
 
\subsection{Main Results}
In this work we present a detailed analysis of UMP codes. We focus on finite block length bounds, as well as different asymptotic regimes and scaling of $m_n$ than those considered in~\cite{Csiszar1982, BNZ2009, WIK2011}. The collection of theorems presented in this work demonstrate that there is a clear performance loss in the rates of message classes and the levels of error protection compared to homogeneous codes with equivalent parameters. 

To expose the tradeoffs between different classes of messages in an UMP code we begin by first deriving finite block length bounds in Section~\ref{sec:finite}. Our approach generalizes homogeneous achievability and converse bounds due to Polyanskiy-Poor-Verd\'{u}~\cite{thesis:Polyanskiy, PPV2010}. It turns out that in the non-asymptotic regime tradeoffs are readily apparent and have a pleasing parameterization. Let $M^\ast(\epsilon, W)$ be as before and define
\begin{align}
\calL_m = \{{\blambda} = (\lambda_1, \dots, \lambda_m) : \sum_{i=1}^m \lambda_i = 1, \lambda_i \geq 0 \quad \forall i\}. \label{eq:calL_m}
\end{align}
Our bounds reveal that for any $\blambda \in \calL_m$ there is a $\Mepsilon$-UMP code that (roughly) satisfies 
\begin{align*}
M_i \leq \lambda_i M^\ast(\epsilon_i, W), \quad i\in\{1,\dots, m\}.
\end{align*}
Conversely, every UMP code must satisfy this for some $\blambda \in \calL_m$. Thus, this parameterization characterizes our achievability bounds (cf. Corollary~\ref{cor:DTbnd} and Theorem~\ref{thm:kappabeta}) and our converse bounds (cf. Theorem~\ref{thm:meta2}).

Next, in Section~\ref{sec:asymptotic} we analyze the asymptotic behavior of our bounds for DMCs, including situations in which the number of message classes scales with the channel block length. Such scalings are characterized by:
\begin{itemize}
\item a non-decreasing sequence $m_n\in \bbN$ that can scale arbitrarily in $n$, 
\item a sequence of error probabilities $(\epsilon_i)_{i=1}^\infty$ such that all error probabilities are bounded away from zero and one,
\item a doubly semi-infinite two-dimensional array $\Lambda$ parametrized by $n$ and $i$.
\end{itemize}
For any such sequence $m_n$ we define
\begin{align}
\calL = \{\Lambda : (\Lambda_{n,1}, \dots, \Lambda_{n, m_n}) \in \calL_{m_n} \quad \forall n, \mbox{ and } \Lambda_{n,i} = 0 \mbox{ if } i>m_n\} \label{eq:calL}
\end{align}
where $\Lambda_{n,i}$ is the element of $\Lambda$ in the $n$th row and $i$th column.
This set up allows us to make the following asymptotic statement (cf. Theorem~\ref{thm:asmp}).  
Any sequence of $\MepsilonCL$-UMP codes over a positive dispersion DMC $W$ must satisfy
\begin{align}
\log M_{n,i} \leq nC - \sqrt{nV} Q^{-1}(\epsilon_i)+ \theta_i(n) -   \log \frac{1}{\Lambda_{n,i}}  
\end{align}
for some $\Lambda \in \calL$ where (similar to the $m=1$ case), $\theta_i(n) = O(1)$ if $W$ is singular and symmetric and $\theta_i(n) = \frac{1}{2}\log n +O(1)$ otherwise. On the other hand, for any $\Lambda \in \calL$  there is a sequence of $\MepsilonCL$-UMP codes over a positive dispersion DMC $W$ such that
\begin{align}
\log M_{n,i} \geq nC - \sqrt{nV} Q^{-1}(\epsilon_i)  + \tilde{\theta}_i(n) -   \log \frac{1}{\Lambda_{n,i}}
\end{align} 
where $\tilde{\theta}_i(n) = O(1)$. 
Paralleling the finite block length case the performance loss of UMP codes compared to homogeneous codes with the same error probability is captured by the set $\calL$.

Finally, we analyze UMP codes in the moderate deviations regime (cf. Theorem~\ref{thm:md}). Fix $\Lambda \in \calL$ and assume that a given collection of sequences $\left ((\rho_{n,i})_{n\geq 1}\right)_{i\geq 1}$ is such that any fixed $i$ the sequence $(\rho_{n,i})_{n\geq 1}$ satisfies~(\ref{eq:intro:rho}). Then there exists a sequence of $\MepsilonMD$-UMP codes satisfying
\begin{align}
 M_{n,i} = \lfloor 2^{nC - n\rho_{n,i} } \rfloor \label{eq:intro:MDrate}
\end{align}
and
\begin{align}
\limsup_{n \to \infty} \frac{1}{n\left(\rho_{n,i} - \frac{1}{n}\log\frac{1}{\Lambda_{n,i}}\right)^2} \log \epsilon_{n,i} \leq -\frac{1}{2V}
\end{align}
for each $1\leq i \leq \infty$. Conversely, any sequence of $\MepsilonMD$-UMP codes satisfying~(\ref{eq:intro:MDrate}) must satisfy
\begin{align}
\liminf_{n \to \infty} \frac{1}{n\left(\rho_{n,i} - \frac{1}{n}\log\frac{1}{\Lambda_{n,i}}\right)^2} \log \epsilon_{n,i} \geq -\frac{1}{2V}
\end{align}
for some $\Lambda \in \calL$. In other words, each class of the UMP code has moderate deviations exponent $\frac{1}{2V_{\min}}$ and speed of convergence $\left(\rho_{n,i} - \frac{1}{n}\log\frac{1}{\Lambda_{n,i}}\right)^2$. Recall, a sequence of homogeneous codes approaching capacity at the same rate converged to the moderate deviations exponent with speed of $\rho_{n,i}^2$, and thus the loss in the moderate deviation setting is also captured by the set $\calL$.

\subsection{On Construction Of Good UMP Codes}
One may immediately observe that for a DMC the problem of constructing UMP codes has an immediate and asymptotically optimal (in terms of rate) solution. To encode a message from one of $m$ classes for transmission over a codebook of block length $n$ allocate the first $n_0$ symbols to a {\em header} that encodes the class $i \in \{1, \dots, m\}$ of the transmitted message. Allocate the remaining $n-n_0$ symbols to transmit the message $w\in\calM_i$ by using a homogeneous code. As long as $m$ grows sub-exponentially in $n$ the rate of each message class in this header-based construction can approach capacity. This is an appealing solution since it allows us to leverage existing codes as building blocks for UMP codes. 

However, as shown in Section~\ref{sec:dmc}, the header construction is suboptimal in the finite block length regime. There is simple geometric intuition for the suboptimality. The header construction is equivalent to taking the decoding space and partitioning it into separate regions, with each region used to pack codewords from one class. The more general approach taken by our Theorems~\ref{thm:DTbnd} and~\ref{thm:kappabeta} is equivalent to mixing the classes throughout the whole decoding space. This allows for a more efficient packing of the codewords in the UMP codebook. See Figure~\ref{fig:Header} for an illustration of this idea. A more formal demonstration of the suboptimality is provided in Figures~\ref{fig:BSCplot1} through \ref{fig:BECplot2} where the header code bounds are compared to UMP coding bounds for the binary symmetric channel (BSC) and the binary erasure channel (BEC). 

\begin{figure}
\centering
\includegraphics[width=5 in]{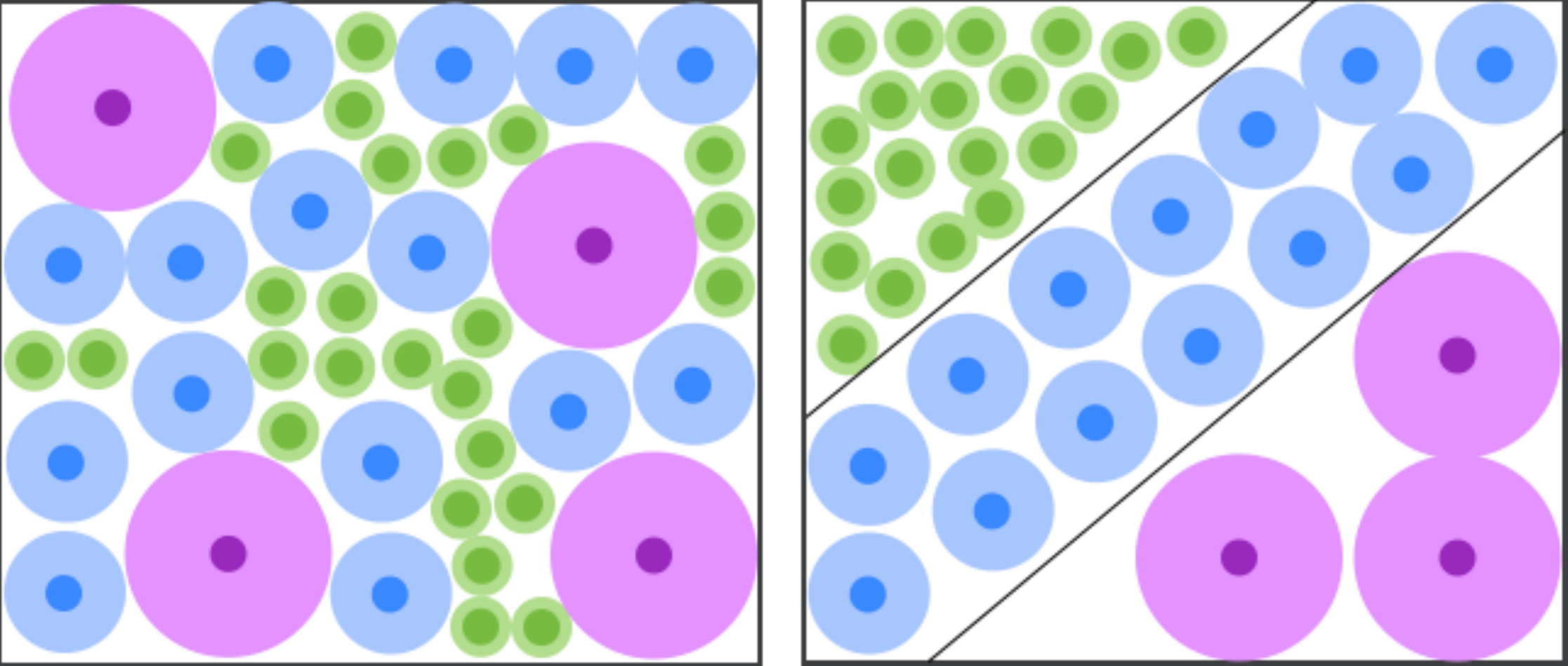}
\caption{A general UMP coding construction (left) compared with a header-based construction (right). The more general construction allows for a better packing of codewords in an UMP code.}
 \label{fig:Header}
\end{figure}

In lieu of the `header' construction we demonstrate that the performance guarantees given by Corollary~\ref{cor:DTbnd} can be achieved with a UMP code formed by taking a union of coset codes. By encoding each class with its own coset code we can construct a UMP code with good encoding complexity and decoding complexity that scales as the number of classes $m$. This result is presented for the BSC and the BEC in Theorem~\ref{thm:coset}.
\section{Finite Block Length Bounds}
\label{sec:finite}

In this section we consider an abstract channel $W$with input/output alphabets $\rvA$, $\rvB$ used once to transmit a message.

\subsection{Achievability Bounds}
We begin by extending the dependence testing (DT) for maximal probability of error bound~\cite[Theorem 21]{PPV2010} to UMP coding in Theorem~\ref{thm:DTbnd}. We follow~\cite{PPV2010} and present a compact version of the UMP DT bound in Corollary~\ref{cor:DTbnd}. Corollary~\ref{cor:DTbnd} demonstrates how the resulting family of codes is parametrized by $\calL_m$(cf.~(\ref{eq:calL_m})). In Theorem~\ref{thm:kappabeta} we extend the $\kappa \beta$-bound~\cite[Theorem 25]{PPV2010} to the UMP coding case: this extension admits the parameterization by the same $\calL_m$ as Corollary~\ref{cor:DTbnd}. Finally, a consequence of Proposition~\ref{prop:max} is that it is difficult to extend homogeneous bounds that do not work for a maximal probability of error paradigm to UMP coding. To circumvent this we make statements about the expected error of a UMP code by extending the average probability of error DT and random coding union (RCU) bounds~\cite[Theorems 16 \& 17]{PPV2010} in Theorem~\ref{thm:RCU}.

\begin{theorem}[UMP Achievability Bound]
\label{thm:DTbnd}
Let 
\begin{itemize}
\item $\calM = \bigcup_{i=1}^m \calM_i$ be a message set with $m$ disjoint message classes and $|\calM_i| = M_i$,
\item $(P_{X_i} )_{i=1}^m$ be (not necessarily distinct) distributions on $\rvA$,
\item $\left\{ \tau_i: \rvA \to [0, \infty] \right\}_{i=1}^m$ be measurable mappings,
\end{itemize}
then there exists an $\Mepsilon$-UMP code over the channel $W$ with maximum probability of error for each class not exceeding
\begin{align}
\epsilon_i &\leq \bbP \left[\imath_{X_i;Y_i}(X_i;Y_i) \leq \log\tau_i (X_i) \right] +(M_i -1) \sup_x \bbP\left[\imath_{X_i; Y_i}(x; Y_i) >  \log \tau_i (x) \right] \notag\\
&\quad + \sum_{j = 1}^{i-1} M_j \sup_x \bbP\left[\imath_{X_j; Y_j}(x; Y_i) >  \log \tau_j(x) \right] \label{eq:DTbnd}
\end{align}
where $P_{X_iY_i}(x,y)  = P_{X_i}(x) W(y|x)$ (the joint distribution induced across the channel by $P_{X_i}$) and $P_{Y_i} (y) = P_{X_i}W(y)$ (channel output distribution induced by $P_{X_i}$).
\end{theorem}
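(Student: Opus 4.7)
The plan is to adapt the sequential random coding construction behind PPV's dependence testing bound for maximum probability of error (their Theorem 21) to the multi-class UMP setting. I would build the codebook one codeword at a time, processing classes in the natural order $1, 2, \ldots, m$ and, within each class, by index; each fresh codeword in class $i$ is drawn randomly from $P_{X_i}$ and then selected greedily so that its expected per-codeword error (conditional on the previously-placed codewords) is bounded by a deterministic quantity.

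For decoding, I would use a sequential threshold rule: upon receiving $y$, scan the codebook in the same order as the encoder, and output the first codeword $x$ (belonging to some class $j$) whose information density satisfies $\imath_{X_j;Y_j}(x;y) > \log \tau_j(x)$; declare an error if no codeword passes. With this decoder, the $w$-th codeword in class $i$ is mis-decoded only if one of three events occurs: (A) its own information density fails the class-$i$ threshold; (B) some lower-indexed codeword in class $i$ passes the class-$i$ threshold against $y$; or (C) some codeword in an earlier class $j < i$ passes its class-$j$ threshold against $y$. Classes $j > i$ do not appear in the bound precisely because the decoder examines them only after class $i$ has failed to produce a decoded codeword, so they cannot corrupt the decoding of a class-$i$ message.

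Fixing the codewords placed prior to the $w$-th slot of class $i$ and taking expectation only over the fresh draw $X_i \sim P_{X_i}$ together with the resulting channel output $Y \sim W(\cdot\mid X_i)$, event (A) contributes exactly $\bbP[\imath_{X_i;Y_i}(X_i;Y_i) \leq \log \tau_i(X_i)]$. For events (B) and (C) the key identity is that, for any fixed earlier codeword $x'$ belonging to some class $j$, the probability that $x'$ passes its threshold against $Y$ equals $\bbP[\imath_{X_j;Y_j}(x';Y_i) > \log \tau_j(x')]$, where $Y_i \sim P_{Y_i}$, because marginalising over the fresh $X_i$ yields $Y \sim P_{Y_i}$ independently of the class to which $x'$ belongs. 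A union bound over the $w-1$ earlier codewords in class $i$ and the $M_j$ codewords in each earlier class $j < i$, followed by replacing each fixed $x'$ by a supremum over $x$, produces a conditional expectation bound that depends on $w$ only through the factor $(w-1)$.

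Since the conditional expectation is bounded by a deterministic quantity, at least one realisation of the fresh codeword achieves this bound, and iterating the greedy selection from $w=1$ through all codewords in all classes produces a deterministic codebook in which each codeword $w$ in class $i$ satisfies the per-codeword error bound. Taking the maximum over $w$ within class $i$ replaces $(w-1)$ by $(M_i-1)$ to give exactly~(\ref{eq:DTbnd}). The main obstacle is this sequential existence step: one cannot directly convert a plain random-codebook expectation into a maximum-probability-of-error guarantee, and the greedy per-codeword selection (ordered across classes) is the standard device for avoiding the factor-of-two penalty that a naive expurgation argument would incur, while simultaneously producing the asymmetric $j<i$ dependence seen in~(\ref{eq:DTbnd}).
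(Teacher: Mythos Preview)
Your proposal is correct and follows essentially the same approach as the paper: a sequential threshold decoder scanning classes in order, combined with a greedy (sequential random coding) codeword selection that bounds each codeword's conditional expected error by the deterministic expression in~(\ref{eq:DTbnd}) via a union bound over the miss event and false alarms from earlier codewords. Your explanation of why only classes $j<i$ appear and of the marginalisation $Y\sim P_{Y_i}$ is exactly the mechanism the paper uses.
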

In the proof, we follow the sequential random coding technique used in~\cite[Theorem 21]{PPV2010}. In this way, we first construct the codebook for class $1$, then for class $2$, up to class $m$. The main modification is for decoding rule to vary across classes: we decode to the first codeword $c_{i,w}$ such that $\imath_{X_i;Y_i}(c_{i,w};y) > \log \tau_i(c_{i,w})$. See Appendix~\ref{appnd:finite} for the proof. By letting $m=1$ we obtain $\epsilon \leq \bbP \left[\imath_{X;Y}(X;Y) \leq \log\tau (X) \right] +(M -1) \sup_x \bbP\left[\imath_{X; Y}(x; Y) >  \log \tau (x) \right]$ which recovers~\cite[Theorem 21]{PPV2010} exactly.

Theorem~\ref{thm:DTbnd} presents bounds for the probability of error for each message class in an UMP code. By loosening these bounds we obtain the following parametrization by $\calL_m$(cf.~(\ref{eq:calL_m})).

\begin{corollary}[UMP Achievability Bound - Compact Version]
\label{cor:DTbnd}
Let $\calM$ be as in Theorem~\ref{thm:DTbnd} and suppose that the family of input distributions $(P_{X_i} )_{i=1}^m$  have the property that $P_{X_i}W = P_{X_j}W$ for all $i,j \in \{1,2, \dots, m\}$.\footnote{This holds, for example, if (i) all $m$ input distributions are the same or (ii) all $m$ input distributions are capacity achieving.} Then, for any $\blambda \in \calL_m$ there exists an $\Mepsilon$-UMP code with maximum probability of error for each class not exceeding
\begin{align}
\epsilon_i &\leq \bbP \left[\imath_{X_i; Y_i}(X_i;Y_i) \leq \log \frac{M_i}{\lambda_i} \right] + \frac{M_i}{\lambda_i} \sup_x \bbP\left[\imath_{X_i;Y_i}(x; Y_i) >  \log \frac{M_i}{\lambda_i}\right]. \label{eq:DTlambda}
\end{align}
If the CDF of $\bbP \left[\imath_{X_i;Y_i}(x; Y_i) \leq \alpha \right]$ does not depend on $x$ for any $\alpha$ we can restate~(\ref{eq:DTlambda}) as
\begin{align}
\epsilon_i \leq \bbE \left[ \exp\left\{ -\left[\imath_{X_i;Y_i}(X_i;Y_i)-\log \frac{M_i}{\lambda_i}\right]^+\right\} \right]. \label{eq:DTcompact}
\end{align} 
In~(\ref{eq:DTlambda}) and ~(\ref{eq:DTcompact}) the probability ad the expectation is taken with respect to $P_{X_iY_i}(x,y) = P_{X_i}(x)W(y|x)$.
\end{corollary}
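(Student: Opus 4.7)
The plan is to specialize Theorem~\ref{thm:DTbnd} to the common-output-distribution setting. The hypothesis $P_{X_i}W=P_{X_j}W=:P_Y$ for all $i,j$ implies that the information density $\imath_{X_j;Y_j}(x;y)=\log(W(y|x)/P_Y(y))$ is a single function of $(x,y)$ independent of the class index $j$. In particular, every cross-class probability appearing in~(\ref{eq:DTbnd}) collapses to an evaluation of this common information density at the threshold $\log\tau_j$, so $Y_i$ and $Y_j$ can be identified throughout.

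My first step is to substitute the constant choice $\tau_i(x)=M_i/\lambda_i$ into Theorem~\ref{thm:DTbnd}. The first summand then matches the first term of~(\ref{eq:DTlambda}) directly. The key analytic tool for the remaining terms is the change-of-measure / Markov inequality $\bbP_{Y\sim P_Y}[W(Y|x)/P_Y(Y)>\tau]\leq 1/\tau$, valid for every $x$ and $\tau>0$ because $\bbE_{Y\sim P_Y}[W(Y|x)/P_Y(Y)]=1$. This immediately gives $M_j\sup_x\bbP[\imath>\log(M_j/\lambda_j)]\leq\lambda_j$ for each $j$.

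The main obstacle is that naive termwise Markov-bounding yields only $\sum_{j\leq i}\lambda_j$ rather than the compact coefficient $(M_i/\lambda_i)\sup_x\bbP[\imath>\log(M_i/\lambda_i)]$ promised by~(\ref{eq:DTlambda}). To obtain the compact form I would re-derive the bound using a weighted-score decoder that outputs the codeword $c$ maximizing $S(c,Y):=\imath(c;Y)-\log(M_{\mathrm{class}(c)}/\lambda_{\mathrm{class}(c)})$ provided this maximum is positive. For transmitted $c_{i,w}$, the Markov-type bound $\bbP_{c'\sim P_X}[\imath(c';y)\geq t]\leq e^{-t}$ applied to each competing random codeword in an i.i.d.\ random-coding analysis produces a per-competitor contribution of $(\lambda_j/M_j)\,\bbE[\mathbf{1}\{S_{i,w}>0\}e^{-S_{i,w}}]$; summing using $\sum_{(j,k)\neq(i,w)}\lambda_j/M_j=1-\lambda_i/M_i$ (which relies on $\sum_j\lambda_j=1$) produces exactly the factor $M_i/\lambda_i-1\leq M_i/\lambda_i$. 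A sequential-construction argument in the spirit of~\cite[Thm.~21]{PPV2010}, now carried out against this weighted-score decoder, then upgrades the ensemble bound to the existence of a single UMP code meeting the stated maximum per-class error.

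For~(\ref{eq:DTcompact}) the elementary identity $\exp(-[u-\log\tau]^+)=\min(1,\tau e^{-u})$ combined with the standard change-of-measure relation $\bbE_{P_{XY}}[e^{-\imath(X;Y)}f(X,Y)]=\bbE_{P_X\times P_Y}[f(X,Y)]$ rewrites the right-hand side of~(\ref{eq:DTlambda}) in the compact expectation form. Under the stated CDF hypothesis, the expectation over $X_i$ coincides with the supremum over $x$, so~(\ref{eq:DTlambda}) and~(\ref{eq:DTcompact}) agree in this case.
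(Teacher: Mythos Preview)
Your opening analysis is right: under the common-output hypothesis the cross-class terms in~(\ref{eq:DTbnd}) collapse, and with $\tau_j(x)=M_j/\lambda_j$ the Markov/change-of-measure bound gives $M_j\sup_x\bbP[\imath>\log(M_j/\lambda_j)]\le\lambda_j$. You also correctly spot that naively summing yields only $\sum_{j\le i}\lambda_j$, which is useless.

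The gap is in your proposed repair. The weighted-score (argmax) decoder you introduce does give, via the calculation you sketch, an \emph{ensemble-average} per-codeword error of the right compact form. But the sequential construction of~\cite[Thm.~21]{PPV2010} that upgrades random coding to a deterministic code with \emph{maximum} per-codeword error relies crucially on the decoder being of threshold type: when codeword $c_{i,w}$ is added at the end of the list, the decoding regions of all earlier codewords are unaffected, so their already-certified errors stay valid. With an argmax decoder, every newly added codeword can steal decoding region from every earlier one, so the inductive guarantee breaks. Your last sentence (``a sequential-construction argument \ldots\ then upgrades the ensemble bound'') does not go through for this decoder.

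The paper's fix is much simpler and stays entirely within the threshold decoder of Theorem~\ref{thm:DTbnd}. Set $A_j:=\frac{M_j}{\lambda_j}\sup_x\bbP[\imath(x;Y)>\log(M_j/\lambda_j)]$. Since the order in which the classes are constructed in Theorem~\ref{thm:DTbnd} is arbitrary, one may relabel so that $A_1\le A_2\le\cdots\le A_m$. Then the troublesome sum becomes
\[
\sum_{j=1}^{i} M_j\sup_x\bbP[\cdots]=\sum_{j=1}^{i}\lambda_j A_j\le A_i\sum_{j=1}^{i}\lambda_j\le A_i=\frac{M_i}{\lambda_i}\sup_x\bbP\Big[\imath(x;Y_i)>\log\tfrac{M_i}{\lambda_i}\Big],
\]
which is exactly the second term of~(\ref{eq:DTlambda}). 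This ordering trick is the missing idea; once you have it, the rest of your write-up (including the derivation of~(\ref{eq:DTcompact}) via $\exp(-[u]^+)=\min(1,e^{-u})$ and the change of measure) is fine.
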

\begin{proof}
Fix $\bm \lambda \in \calL_m$ and define
\begin{align*}
A_i = \frac{M_i}{\lambda_i}\sup_{x \in \rvA}\bbP\left[\imath_{X_i;Y_i}(x; Y_i) >  \log\tau_i(x)\right].
\end{align*}
The order in which we generate sub-codebooks for different classes in Theorem~\ref{thm:DTbnd} is arbitrary; so for a given message set, input distributions, and $\blambda \in \calL_m$ we may assume without loss of generality that $A_1 \leq A_2 \leq \dots \leq  A_m$. Observe that by loosening~(\ref{eq:DTbnd}) we obtain
\begin{align}
\epsilon_i &\leq \bbP \left[\imath_{X_i;Y_i}(X_i;Y_i) \leq \log \tau_i(x) \right] +\sum_{j = 1}^{i} M_j \sup_{x \in \rvA}\bbP\left[\imath_{X_j;Y_j}(x; Y_i) >  \log \tau_j(x)\right] \\
&= \bbP \left[\imath_{X_i;Y_i}(X_i;Y_i) \leq \log \tau_i(x)\right] +\sum_{j = 1}^{i} M_j \sup_{x \in \rvA}\bbP\left[\imath_{X_j;Y_j}(x; Y_j) >  \log \tau_j(x)\right]  \label{eq:sameY}\\
&= \bbP \left[\imath_{X_i;Y_i}(X_i;Y_i) \leq \log\tau_i(x)  \right] +\sum_{j = 1}^{i} \lambda_j A_j \\
&\leq \bbP \left[\imath_{X_i;Y_i}(X_i;Y_i) \leq \log \tau_i(x)\right] +A_i \\
&= \bbP \left[\imath_{X_i;Y_i}(X_i;Y_i) \leq \log \tau_i(x) \right] +\frac{M_i}{\lambda_i} \sup_{x \in \rvA}\bbP\left[\imath_{X_i;Y_i}(x; Y_i) >  \log\tau_i(x) \right] \label{eq:DTforMD}
\end{align}
where~(\ref{eq:sameY}) follows since $Y_i$ and $Y_j$ have the same distribution. Setting $\tau_i(x) = \frac{M_i}{\lambda_i}$ for all $x \in \rvA$ and $i \in \{1, 2, \dots, m\}$ shows~(\ref{eq:DTlambda}). To show~(\ref{eq:DTcompact}) observe that under the stated condition bound~(\ref{eq:DTlambda}) yields for any $x\in \rvA$
\begin{align}
\epsilon_i &\leq \bbP \left[\imath_{X_i;Y_i}(X_i;Y_i) \leq \log\frac{M_i}{\lambda_i}\right] + \frac{M_i}{\lambda_i}\bbP\left[\imath_{X_i;Y_i}(x; Y_i) >  \log\frac{M_i}{\lambda_i} \right]\\
&=P_{Y|X=x} \left[\imath_{X_i;Y_i}(x;Y_i) \leq \log\frac{M_i}{\lambda_i}\right] + \frac{M_i}{\lambda_i}P_{Y}\left[\imath_{X_i;Y_i}(x; Y_i) >  \log\frac{M_i}{\lambda_i} \right].
\end{align}
The result follows by repeating the argument in equations (2.129) through (2.132) of~\cite{thesis:Polyanskiy} and taking expectation  with respect to $X$ for each class.
\end{proof}

The following $\kappa \beta$-bound for UMP codes addresses the case where the codewords are constrained to belong to a subset $\rvF\subset \rvA$ for all $m$ classes. A natural extension of UMP coding to cost constraints would allow for each class to have its own cost constant $\rvF_i$. An extension of the $\kappa \beta$-bound for such a code would be interesting, and we leave it to future work. Our main motivation for presenting the bound below is to demonstrate how the same parameterization by $\calL_m$ can be applied in the case of greedy codebooks construction.
\begin{theorem}[UMP $\kappa \beta$-Bound]
\label{thm:kappabeta}
For any ${\bm \lambda} \in \calL_m$, any $\tau$ such that $0< \tau< \epsilon_i \forall i$, and any distribution $Q_Y$ on $\mathsf{B}$, there exists an $\Mepsilon$-UMP code with codewords selected from $\rvF \subset \rvA$ satisfying,
\begin{align}
M_i \geq \left \lfloor \frac{\lambda_i\kappa_\tau (\mathsf{F}, Q_Y)}{\sup_{x\in \mathsf{F}} \beta_{1-\epsilon_i+\tau}(x, Q_Y)} \right \rfloor. \label{eq:KappaBetaExt}
\end{align}
For $i=m$ we further have
\begin{align}
M_m \geq  \frac{\lambda_m\kappa_\tau (\mathsf{F}, Q_Y)}{\sup_{x\in \mathsf{F}} \beta_{1-\epsilon_m+\tau}(x, Q_Y)}. \label{eq:KappaBetaExt2}
\end{align}
\end{theorem}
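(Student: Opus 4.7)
The plan is to adapt the proof of~\cite[Theorem 25]{PPV2010} by combining its greedy decoding-region packing argument with the sequential class-by-class codebook construction used in the proof of Theorem~\ref{thm:DTbnd}. First I would fix $\blambda \in \calL_m$, $\tau \in (0, \min_i \epsilon_i)$, and a distribution $Q_Y$, and let $P^*_{Z|Y}$ be an optimizer of~(\ref{eq:kappa}), so that $\sum_y Q_Y(y) P^*_{Z|Y}(1|y) = \kappa_\tau(\mathsf{F}, Q_Y)$ while $\sum_y W(y|x) P^*_{Z|Y}(1|y) \geq \tau$ for every $x \in \mathsf{F}$. For each $x \in \mathsf{F}$ and each class $i$, let $T^i_x$ denote a (possibly randomized) Neyman--Pearson test between $W(\cdot|x)$ and $Q_Y$ at significance level $1 - \epsilon_i + \tau$, so that $T^i_x$ has acceptance probability at least $1 - \epsilon_i + \tau$ under $W(\cdot|x)$ and at most $B_i := \sup_{x' \in \mathsf{F}} \beta_{1 - \epsilon_i + \tau}(x', Q_Y)$ under $Q_Y$.

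The codebook would be built class by class in the order $i = 1, 2, \ldots, m$; within class $i$, I would select codewords $c^i_1, c^i_2, \ldots$ greedily and assign each chosen $c$ the decoding region $D_c$ obtained by intersecting the acceptance set of $T^i_c$ with that of $P^*_{Z|Y}$ and then removing all previously assigned decoding regions. The decoder uses a first-hit rule in the construction order, mirroring the treatment in Theorem~\ref{thm:DTbnd}, so that later-class codewords cannot disturb the error guarantees of earlier ones. The core inductive step is the standard $\kappa\beta$ argument: whenever the cumulative $Q_Y$-mass of the $D_c$'s assigned so far is strictly below $\kappa_\tau(\mathsf{F}, Q_Y)$, there must exist some $x \in \mathsf{F}$ whose fresh (uncovered) portion of $T^i_x$ retains $W(\cdot|x)$-mass at least $1 - \epsilon_i$. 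This is shown by contradiction: if no such $x$ existed, then the already-covered region would provide at least $\tau$ acceptance under $W(\cdot|x)$ for every $x \in \mathsf{F}$ and so be a feasible competitor for~(\ref{eq:kappa}) with $Q_Y$-mass strictly below $\kappa_\tau$, contradicting optimality of $P^*_{Z|Y}$.

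For each class $i < m$ I would stop the greedy selection just before the cumulative $Q_Y$-budget exceeds $(\lambda_1 + \cdots + \lambda_i)\kappa_\tau(\mathsf{F}, Q_Y)$; since every added codeword consumes at most $B_i$ of budget and the class-$i$ allocation is $\lambda_i \kappa_\tau(\mathsf{F}, Q_Y)$, this gives $M_i B_i > \lambda_i \kappa_\tau - B_i$ and hence $M_i \geq \lfloor \lambda_i \kappa_\tau / B_i \rfloor$, which is~(\ref{eq:KappaBetaExt}). For $i = m$ no subsequent class competes for residual budget, so the inductive step can be continued until the cumulative $Q_Y$-mass actually reaches $\kappa_\tau(\mathsf{F}, Q_Y)$; this yields $M_m B_m \geq \lambda_m \kappa_\tau$ with no floor, giving~(\ref{eq:KappaBetaExt2}). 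The main obstacle I anticipate is the bookkeeping: a single composite test $P^*_{Z|Y}$ must be shared across all $m$ classes while each class's per-codeword error still meets its own $\epsilon_i$, and the per-class budget allocation must deliver a floor for $i < m$ but no floor for $i = m$ --- the first-hit decoding order combined with the class-ordered greedy construction is what makes this accounting go through cleanly.
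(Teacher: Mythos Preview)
Your approach is essentially the paper's: both build the code class by class using greedy selection, with per-codeword Neyman--Pearson tests against $Q_Y$, a first-hit decoder so later classes cannot disturb earlier ones, and the $\kappa$-test contradiction to guarantee that a fresh codeword exists whenever the $Q_Y$-mass of the already-assigned regions is below $\kappa_\tau(\rvF,Q_Y)$. The paper frames this as an explicit induction on $m$ (taking the first $m-1$ classes at the prescribed sizes $M_i=\lfloor \lambda_i\kappa_\tau/B_i\rfloor$ by the inductive hypothesis, then greedily filling class $m$ until the process halts), whereas you describe a single pass with a running $Q_Y$-budget allocation; these amount to the same bookkeeping.

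There is one genuine slip. Intersecting each decoding region with an optimizer $P^*_{Z|Y}$ of~(\ref{eq:kappa}) is both unnecessary and inconsistent with your own error analysis. Your existence claim shows that some $x$ has $W_x\bigl(T^i_x\setminus\text{covered}\bigr)\ge 1-\epsilon_i$, but the region you actually assign is the smaller set $T^i_x\cap\{P^*_{Z|Y}=1\}\setminus\text{covered}$, which need not carry $W_x$-mass $1-\epsilon_i$: all you know about $\{P^*_{Z|Y}=1\}$ under $W_x$ is that its mass is at least $\tau$, and $\tau<\epsilon_i$, so the inclusion--exclusion lower bound on the intersection is useless. The paper never introduces $P^*_{Z|Y}$. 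The role of $\kappa_\tau$ there is purely as a \emph{lower} bound on the $Q_Y$-mass of the union of assigned regions once the greedy process terminates, since that union is then itself a feasible test in~(\ref{eq:kappa}). If you simply drop the intersection, your argument goes through verbatim: the per-codeword $Q_Y$-cost bound $Q_Y(T^i_c=1)\le B_i$ already holds without restricting to $\{P^*_{Z|Y}=1\}$, and the contradiction step is exactly the one you wrote.
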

The proof follows by induction. For the base case we use homogeneous $\kappa \beta$-bound~\cite[Theorem 25]{PPV2010}. For the inductive case we show that if we back off by $\lambda_i$ in the number of codewords generated in previous $m-1$ classes it is possible to add codewords to the $m$th class. See Appendix~\ref{appnd:finite} for proof. By letting $m=1$ we obtain $M \geq  \frac{\kappa_\tau (\mathsf{F}, Q_Y)}{\sup_{x\in \mathsf{F}} \beta_{1-\epsilon+\tau}(x, Q_Y)}$ which recovers~\cite[Theorem 25]{PPV2010}.

Recall that one advantage of the UMP coding framework is its ability to model a non-uniform prior on messages. To this end we study the expected error of Definition~\ref{def:ExpErr} via the following bounds. 
\begin{theorem}[Expected Error via DT-type Bound] \label{thm:DTavg}
Let
\begin{itemize}
\item $\calM = \bigcup_{i=1}^m \calM_i$ be a message set with $m$ disjoint message classes and $|\calM_i| = M_i$,
\item  $(P_{X_i} )_{i=1}^m$ be a family of distributions with the property that $P_{X_i}W = P_{X_j}W$ for all $i,j \in \{1,2, \dots, m\}$,
\item $\bm \mu$ be a probability vector of length $m$.
\end{itemize}
Then for some error vector $(\epsilon_i)_{i=1}^m$ there exists an $\Mepsilon$-UMP code with expected error induced by $\bm{\mu}$ not exceeding
\begin{align}
 \epsilon({\bm \mu}) \leq \sum_{i=1}^m \mu_i \bbE \left[ \exp\left\{ -\left[\imath_{X_i;Y_i}(X_i;Y_i)-\log \frac{M_i}{\lambda_i}\right]^+\right\} \right]  \label{eq:experrDT} 
\end{align} 
where all expectations are taken with respect to $P_{X_i Y_i} (x,y) = P_{X_i}(x) W(y|x)$.
\end{theorem}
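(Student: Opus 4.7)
The plan is to combine the sequential random-coding argument from the proof of Theorem~\ref{thm:DTbnd} with the average-over-the-codebook analysis behind the average-error DT bound \cite[Theorem~16]{PPV2010}. Since $\bm\mu$ enters only as a convex combination of class errors, it is natural to work directly with the expected (over the random ensemble) per-class error; this avoids the sup-over-$x$ step that forced the symmetry hypothesis in Corollary~\ref{cor:DTbnd} and lets the compact $\bbE[\exp\{\cdots\}]$ form emerge automatically.

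Concretely, for each class $i$ and each $w\in\calM_i$ I would draw $c_{i,w}\sim P_{X_i}$ independently, and decode with a sequential threshold rule: scan codewords in some fixed order across classes, outputting the first $c_{j,v}$ satisfying $\imath_{X_j;Y_j}(c_{j,v};y)>\log(M_j/\lambda_j)$. A union bound over the $M_i-1$ other class-$i$ codewords and over the $M_j$ codewords of each earlier class $j$, followed by expectation over the random codebook and over a uniformly drawn message in $\calM_i$, yields
\begin{align*}
\bbE[\bar\epsilon_i] \le \bbP\!\left[\imath_{X_i;Y_i}(X_i;Y_i)\le\log\tfrac{M_i}{\lambda_i}\right] + (M_i-1)\,q_i + \sum_{j<i} M_j\,q_j,
\end{align*}
where $q_j := \bbP_{X_j,Y_j'}[\imath_{X_j;Y_j}(X_j;Y_j')>\log(M_j/\lambda_j)]$ with $Y_j'\sim P_{Y_j}$ independent of $X_j$. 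The hypothesis $P_{X_i}W=P_{X_j}W$ is used twice here: it guarantees that the channel output has the common distribution $P_Y$ no matter which class was transmitted, so that the interfering terms depending on the actual channel output reduce to the $q_j$'s above; and it underwrites the standard change-of-measure estimate $q_j\le \lambda_j/M_j$.

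Next, paralleling the reordering step in the proof of Corollary~\ref{cor:DTbnd}, I set $B_j := (M_j/\lambda_j)q_j$ and, without loss of generality, relabel the classes so that $B_1\le\cdots\le B_m$. Then $(M_i-1)q_i + \sum_{j<i} M_j q_j \le \sum_{j\le i}\lambda_j B_j \le B_i \sum_{j\le i}\lambda_j \le B_i = (M_i/\lambda_i)\,q_i$, which collapses the interference into a single two-term DT-type bound. Applying the change-of-measure identity
\begin{align*}
\bbP[\imath_{X;Y}(X;Y)\le\log\gamma] + \gamma\,\bbP_{X,Y'}[\imath_{X;Y}(X;Y')>\log\gamma] = \bbE\!\left[\exp\{-[\imath_{X;Y}(X;Y)-\log\gamma]^+\}\right]
\end{align*}
with $\gamma = M_i/\lambda_i$ (this is the identity that underlies \cite[Theorem~16]{PPV2010}) yields $\bbE[\bar\epsilon_i]\le \bbE[\exp\{-[\imath_{X_i;Y_i}(X_i;Y_i)-\log(M_i/\lambda_i)]^+\}]$. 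Multiplying by $\mu_i$, summing over $i$, and passing from the ensemble expectation to a specific realization furnishes an $\Mepsilon$-UMP code whose $\bm\mu$-weighted error obeys the claim.

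The main technical subtlety, in my view, is justifying the reordering step despite the use of potentially distinct input distributions $(P_{X_i})_{i=1}^m$: the proof of Theorem~\ref{thm:DTbnd} only needs independence between the transmitted and interfering codewords, but to telescope the interference through the sorted $B_j$'s I need all the output distributions to agree so that $q_j$ is well-defined independently of which class's codeword is actually driving the channel. This is precisely where $P_{X_i}W=P_{X_j}W$ enters in a nontrivial way. Once that is in hand, the union bound, the reordering, and the DT identity are each routine, and the resulting bound is valid for every $\blambda\in\calL_m$, so one may subsequently optimize $\blambda$ (or match $\blambda$ to $\bm\mu$) as desired.
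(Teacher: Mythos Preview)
Your proposal is correct and follows essentially the same approach as the paper: random codebook with codewords drawn independently from $P_{X_i}$, sequential threshold decoding with thresholds $\log(M_j/\lambda_j)$, a union bound on the per-class error, the reordering trick from Corollary~\ref{cor:DTbnd} (enabled by $P_{X_i}W=P_{X_j}W$) to collapse the interference into a single term, the DT change-of-measure identity to obtain the compact exponential form, and finally Shannon's random-coding argument applied to $\epsilon(\bm\mu)=\sum_i\mu_i\epsilon_i$ to extract a specific code. Your identification of where the common-output-distribution hypothesis is needed matches the paper's use of it.
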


\begin{theorem}[Expected Error via RCU-type Bound] \label{thm:RCU}
Let
\begin{itemize}
\item $\calM = \bigcup_{i=1}^m \calM_i$ be a message set with $m$ disjoint message classes and $|\calM_i| = M_i$,
\item $(P_{X_i} )_{i=1}^m$ be a familiy of (not necessarily distinct) distributions on $\rvA$,
\item $\tau_1, \dots, \tau_m \in [1, \infty]$ be $m$ real valued decoding parameters,
\item $\bm \mu$ be a probability vector of length $m$.
\end{itemize}
Then for some error vector $(\epsilon_i)_{i=1}^m$ there exists an $\Mepsilon$-UMP code with expected error induced by $\bm{\mu}$ not exceeding
\begin{align}
&\epsilon(\bm{\mu}) \leq \sum_{i=1}^m \mu_i \mathbb{E}_{X_iY_i}\left[ \min \left\{1, \sum_{j=1}^{m} \left(M_j - \Indic{i=j} \right)f_{i,j}(X_i, Y_i) \right\}\right], \label{eq:experrRCU}
\end{align}
where 
\begin{align}
&f_{i,j}(x, y)  = \mathbb{P}\left[ \left.  \imath_{X_j;Y_j}(X_j; Y_i) \geq \log \frac{\tau_i}{\tau_j} +\imath_{X_i;Y_i}(X_i;Y_i) \right| X_i=x,Y_i=y \right]
\end{align}
where $P_{X_i,Y_i} (x,y) = P_{X_i}(x) W(y|x)$ and $P_{X_j,Y_j,Y_i}(x,y,z)  = P_{X_j} (x)W(y|x)P_{X_i}W(z)$.
\end{theorem}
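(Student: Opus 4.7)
The plan is to establish Theorem~\ref{thm:RCU} via a random-coding argument with a class-dependent maximum-metric decoder, in the spirit of the homogeneous random coding union (RCU) bound~\cite[Theorem 16]{PPV2010} that Theorem~\ref{thm:RCU} generalizes. First I would generate the codebook at random: for each class $i \in \{1,\dots,m\}$, draw the $M_i$ codewords $\{C_{i,w}\}_{w \in \calM_i}$ independently from $P_{X_i}$, with all $m$ classes mutually independent. The decoder, upon receiving $y$, outputs the message $\hat{w} \in \calM_j$ that maximizes the class-weighted metric $\imath_{X_j;Y_j}(C_{j,\hat{w}};y) - \log \tau_j$ over all pairs $(j,\hat{w})$, breaking ties arbitrarily. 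The parameters $\tau_j$ act as class-dependent biases that compensate for the differing input distributions, and this particular choice is precisely what produces the ratio $\tau_i/\tau_j$ inside $f_{i,j}$.

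Next I would bound the ensemble-average error probability for a message drawn uniformly from class $i$. Conditional on the transmitted pair $(X_i, Y_i) \sim P_{X_i Y_i}$, a specific competing codeword in class $j$ (distributed as $P_{X_j}$ independently of $X_i,Y_i$) beats the true codeword in the decoder's metric if and only if $\imath_{X_j;Y_j}(X_j; Y_i) - \log \tau_j \geq \imath_{X_i;Y_i}(X_i; Y_i) - \log \tau_i$; rearranging this inequality gives exactly the event whose conditional probability defines $f_{i,j}(X_i, Y_i)$. Applying the union bound over the $M_j - \Indic{i=j}$ competitors in class $j$ (the indicator removes the one self-copy when $j=i$), and truncating at $1$ (since any probability is at most $1$), yields the pointwise bound $\min\{1, \sum_j (M_j - \Indic{i=j}) f_{i,j}(X_i,Y_i)\}$ on the conditional error probability. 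Averaging over $(X_i, Y_i) \sim P_{X_iY_i}$ yields the ensemble-average class-$i$ error $\bar{\epsilon}_i$, and the ensemble-average expected error is then $\sum_i \mu_i \bar{\epsilon}_i$, which matches the right-hand side of~(\ref{eq:experrRCU}). The probabilistic method supplies a deterministic codebook whose expected error is no larger than the ensemble mean, and its realized per-class errors form the vector $(\epsilon_i)_{i=1}^m$ asserted by the theorem.

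The main obstacle, in my view, is the careful bookkeeping required by the class-dependent decoder: the competitor count $M_j - \Indic{i=j}$ and the asymmetric ``cross'' pairwise events (the competitor $X_j$ is drawn from $P_{X_j}$ but compared against the signal $Y_i$ induced by $P_{X_i}$ through $W$) must line up exactly with the joint law $P_{X_j,Y_j,Y_i}(x,y,z) = P_{X_j}(x) W(y|x) P_{X_i}W(z)$ stipulated in the theorem statement, which is the only place the differing input distributions interact. A related subtlety, already visible from Proposition~\ref{prop:max} and emphasized in the discussion following it, is that the ensemble argument only controls the $\bm{\mu}$-weighted sum $\sum_i \mu_i \epsilon_i$ rather than each $\epsilon_i$ individually; this is exactly why the conclusion is phrased as a bound on expected error with an unspecified error vector rather than as a simultaneous per-class guarantee.
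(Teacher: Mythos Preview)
Your proposal is correct and follows the paper's proof essentially step for step: independent random codebook with codewords in class $j$ drawn i.i.d.\ from $P_{X_j}$, a biased maximum-information-density decoder, conditional union bound over all competitors truncated at $1$, then Shannon's argument applied to the $\bm{\mu}$-weighted expected error. The only slip is a sign in the decoder bias: maximizing $\imath - \log\tau_j$ rearranges to a threshold of $\log(\tau_j/\tau_i)$ rather than the $\log(\tau_i/\tau_j)$ appearing in $f_{i,j}$; the paper uses the metric $\log\tau_j + \imath$, which produces the stated form.
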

We follow the random coding construction of~\cite[Theorems 16 \& 17]{PPV2010}. For the DT-type bound we vary the thresholds across the different classes as in Theorem~\ref{thm:DTbnd}. For the RCU-type bound we offset the information density in class $i$ by $\log \tau_i$ and decode to the codeword with the largest modified empirical information density. Finally we apply Shannon's random coding argument after the expectation across all possible codebooks of $\epsilon({\bm \mu})$ is computed. The proof is given in Appendix~\ref{appnd:finite}. One particularly interesting choice for biasing factors is $\tau_i = \frac{M_i}{\mu_i}$. With this choice the decoding rule used to derive~(\ref{eq:experrRCU}) reduces to MAP decoding. By letting $m=1$~(\ref{eq:experrRCU}) reduces to $\epsilon \leq \mathbb{E}\left[ \min \left\{1,  (M-1)\mathbb{P}\left[ \left.  \imath_{X;Y}(\bar{X}; Y) \geq \imath_{X;Y}(X;Y) \right| X,Y \right] \right\}\right]$ which recovers~\cite[Theorems 16]{PPV2010} exactly.

\subsection{Converse Bounds}

The following is a corollary of~\cite[Theorem 26]{PPV2010}.
\begin{corollary}
\label{thm:meta1}
Consider two channels  $(\rvA,\rvB, P_{Y|X})$ and $(\rvA, \rvB, Q_{Y|X})$. Fix a UMP code with $m$ classes of messages, $\assrtd$. Let $(\epsilon_{i})_{i=1}^m$ and $(\epsilon_{i} ')_{i=1}^m$ be the respective probabilities of error for channels $P_{Y|X}$ and $Q_{Y|X}$. Let $P_{X}^i = Q_{X}^i$ be the probability distribution on $\rvA$ induced by the encoder given that a $w \in\calM_i$ was transmitted. Then we have
\begin{align}
\beta_{1-\epsilon_{i}}(P_{XY}^i, Q_{XY}^i) &\leq 1-\epsilon_{i} ' ,\quad\forall\, 1\le i\le m.
\end{align}
\end{corollary}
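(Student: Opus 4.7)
The plan is to reduce the UMP problem, one class at a time, to the homogeneous meta-converse of~\cite[Theorem 26]{PPV2010}. For each fixed class $i$, I would extract a homogeneous sub-code $(\calM_i, \rvf|_{\calM_i}, \rvg)$: its message set is $\calM_i$, its encoder is $\rvf$ restricted to $\calM_i$, and its decoding regions are the disjoint subsets $\{\rvg^{-1}(w) : w \in \calM_i\}$ of $\rvB$ (these need not partition $\rvB$, but that is immaterial for the meta-converse). By~(\ref{eq:UMP-avg}), this sub-code has average error probability $\epsilon_i$ over $P_{Y|X}$ and $\epsilon_i'$ over $Q_{Y|X}$, with common induced input distribution $P_X^i = Q_X^i$. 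Applying the homogeneous meta-converse to this sub-code yields $\beta_{1-\epsilon_i}(P_{XY}^i, Q_{XY}^i) \leq 1-\epsilon_i'$, which is exactly the desired claim.

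For transparency, the underlying argument proceeds by exhibiting an explicit hypothesis test. I would define the deterministic test
\begin{align*}
T_i(x,y) := \Indic{\rvg(y) \in \calM_i \text{ and } \rvf(\rvg(y)) = x}.
\end{align*}
Assuming $\rvf$ is injective on $\calM_i$, for $x = \rvf(w)$ with $w \in \calM_i$ the event $\{T_i(x,y) = 1\}$ coincides with $\{y \in \rvg^{-1}(w)\}$. A direct computation then gives $P_{XY}^i[T_i = 1] = \frac{1}{M_i} \sum_{w \in \calM_i} P_{Y|X}(\rvg^{-1}(w) \mid \rvf(w)) = 1 - \epsilon_i$, and the analogous calculation using $Q_X^i = P_X^i$ with $Q_{Y|X}$ in place of $P_{Y|X}$ yields $Q_{XY}^i[T_i = 1] = 1 - \epsilon_i'$. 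Since $T_i$ is a valid test achieving detection probability $1-\epsilon_i$ under $P_{XY}^i$, the infimum defining $\beta_{1-\epsilon_i}(P_{XY}^i, Q_{XY}^i)$ is at most $Q_{XY}^i[T_i = 1] = 1 - \epsilon_i'$.

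I do not anticipate any serious obstacle: the statement is essentially bookkeeping once the per-class reduction has been made. The only point requiring care is the injectivity of $\rvf$ on $\calM_i$, which is without loss of generality -- two messages sharing a codeword contribute identically to both $\epsilon_i$ and $\epsilon_i'$ and so can be merged, or alternatively $T_i(x,y)$ can be replaced by $\Indic{y \in \bigcup_{w \in \calM_i : \rvf(w)=x} \rvg^{-1}(w)}$, whose multiplicities are exactly compensated by the factor $P_X^i(x)$ appearing in the expectation.
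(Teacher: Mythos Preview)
Your proposal is correct and matches the paper's own proof, which consists of the single sentence ``The result follows by appealing to~\cite[Theorem 26]{PPV2010} separately for each class of codewords.'' Your second and third paragraphs simply unpack that reference (exhibiting the explicit test and handling the injectivity caveat), so you have supplied strictly more detail than the paper does while following the identical route.
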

The result follows by appealing to~\cite[Theorem 26]{PPV2010} separately for each class of codewords.

We now apply Corollary~\ref{thm:meta1} to extend Theorem 27 in~\cite{PPV2010} to UMP codes.
\begin{theorem}
\label{thm:meta2}
Let $\calP(\rvA)$ be the space of all probability distributions on $\rvA$, and $\calP(\rvB)$ be the space of all probability distributions on $\rvB$. We can make the following statements about $\Mepsilon$-UMP codes.  For some ${\bm \lambda} \in \calL_m$ and any  $Q_{Y} \in \calP(\rvB)$,
\begin{align}
\inf_{P^i_{X}} M_i\beta_{1-\epsilon_{i}} (P^i_{XY}, P^i_X \times Q_{Y}) \leq \lambda_i \label{eq:converse1}
\end{align}
for all $1\leq i \leq m$. 
We can further restate~(\ref{eq:converse1}) as
\begin{align}
\inf_{P^1_{X}  \times \dots  \times P^m_{X}} \sup_{Q_Y} \sum_{i=1}^m M_i\beta_{1-\epsilon_{i}} (P^i_{XY}, P^i_X  \times  Q_{Y})  \leq 1 \label{eq:converse2}
\end{align}
where the $\inf$ is over the $m$-fold Cartesian product of $\mathcal{P}(\mathsf{A})$ and the $\sup$ is over $\mathcal{P}(\mathsf{B})$.
\end{theorem}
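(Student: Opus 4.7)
My plan is to reduce the problem to Corollary~\ref{thm:meta1} by choosing an auxiliary channel under which the output is independent of the input, and then to exploit the fact that a deterministic decoder induces disjoint decoding regions whose masses under any output distribution sum to at most one. Fix any $Q_Y \in \calP(\rvB)$, set $Q_{Y|X=x} := Q_Y$ for every $x \in \rvA$, and let $P^{i,*}_X$ denote the actual input distribution induced on $\rvA$ by the encoder $\rvf$ when the transmitted message is drawn uniformly from $\calM_i$.

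Under the auxiliary channel, $\hat{W} = \rvg(Y)$ depends only on $Y \sim Q_Y$ and is independent of the transmitted codeword, so the probability of decoding to a given $w$ is $Q_Y(\rvg^{-1}(w))$, and the average correct probability for class $i$ is $1 - \epsilon'_i = \frac{1}{M_i}\sum_{w \in \calM_i} Q_Y(\rvg^{-1}(w))$. The key observation, which will drive the sum constraint $\bm{\lambda} \in \calL_m$, is that the preimages $\{\rvg^{-1}(w)\}_{w \in \calM}$ are pairwise disjoint measurable subsets of $\rvB$, so
\begin{align*}
\sum_{i=1}^{m} M_i(1 - \epsilon'_i) = \sum_{w \in \calM} Q_Y(\rvg^{-1}(w)) \leq Q_Y(\rvB) = 1.
\end{align*}

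Applying Corollary~\ref{thm:meta1} class by class to the pair of channels $(P_{Y|X}, Q_{Y|X})$ with the common input distribution $P^{i,*}_X$ yields $\beta_{1-\epsilon_i}(P^{i,*}_{XY}, P^{i,*}_X \times Q_Y) \leq 1 - \epsilon'_i$. Multiplying by $M_i$ and setting $\lambda_i := M_i(1-\epsilon'_i)$ (with any residual slack absorbed into one coordinate so that $\bm{\lambda} \in \calL_m$) gives $M_i \beta_{1-\epsilon_i}(P^{i,*}_{XY}, P^{i,*}_X \times Q_Y) \leq \lambda_i$, and because $\inf_{P^i_X} M_i \beta_{1-\epsilon_i}(P^i_{XY}, P^i_X \times Q_Y)$ is no larger than its value at $P^{i,*}_X$, this establishes~(\ref{eq:converse1}). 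For~(\ref{eq:converse2}), I instantiate the outer infimum at the concrete choice $P^i_X := P^{i,*}_X$ for every $i$; summing the per-class bounds gives $\sum_i M_i \beta_{1-\epsilon_i}(P^{i,*}_{XY}, P^{i,*}_X \times Q_Y) \leq 1$ for every $Q_Y$, so the supremum over $Q_Y$ is also bounded by $1$, and the infimum over the $m$-fold product of input simplices is bounded by this particular choice.

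The only subtlety I anticipate concerns the quantifier structure: the $\bm{\lambda}$ in~(\ref{eq:converse1}) is allowed to depend on $Q_Y$ through the decoding-region masses $\{Q_Y(\rvg^{-1}(w))\}_{w \in \calM}$, and the reduction from $\inf_{P^i_X}$ to the encoder-induced $P^{i,*}_X$ must be taken in the direction that weakens (rather than strengthens) the inequality. With that bookkeeping in place, the proof is essentially a summation of the per-class meta-converses delivered by Corollary~\ref{thm:meta1}.
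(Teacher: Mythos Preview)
Your proposal is correct and follows essentially the same approach as the paper: you both fix an auxiliary channel $Q_{Y|X}=Q_Y$ so that the per-class correct-decoding probability under $Q_Y$ becomes $\lambda_i/M_i$, invoke Corollary~\ref{thm:meta1} class by class, and then sum. Your explicit disjointness argument for $\sum_i \lambda_i \le 1$ and your remark about absorbing slack are slightly more careful than the paper, which simply asserts $\sum_i \lambda_i = 1$ (justified since $\rvg:\rvB\to\calM$ is total), but the substance is identical.
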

\begin{proof}
We proceed by fixing $\barP^i_X = Q^i_X$ and $Q^i_{Y|X} = Q_Y$ for an arbitrary $Q_Y$ (same for all $i$). Suppose that under this distribution $Q_Y$, the probability of decoding to a message from class $i$ is $\lambda_i$. In this case $\epsilon_{i}' = 1-\frac{\lambda_i}{M_i}$. Then we have
\begin{align}
\beta_{1-\epsilon_{i}}(\bar{P}_{XY}^i, \bar{P}^i_X \times Q_{Y}) &\leq \frac{\lambda_i}{M_i},
\end{align}
where $\barP_{XY}^i := \bar{P}_{X}^i\times P_{Y|X}$. Multiplying through by $M_i$ yields equation~(\ref{eq:converse1}). Now, adding the bounds for each class yields
\begin{align}
\sum_{i=1}^m M_i\beta_{1-\epsilon_{i}} (\barP^i_{XY}, \barP^i_X \times Q_{Y}) \leq \sum_{i=1}^m \lambda_i = 1. \label{eq.metaSum}
\end{align}
Since the above holds for all $Q_Y$ we have
\begin{align}
\sup_{Q_Y \in \calP(\rvB)} \sum_{i=1}^m M_i\beta_{1-\epsilon_{i}} (\barP^i_{XY}, \barP^i_X \times Q_{Y}) \leq 1.
\end{align}
And, since we have the freedom to choose any input distribution for each code word class
\begin{align}
\inf_{P^1_{X}\times \dots \times P^k_{X}} \sup_{Q_Y \in  \calP(\rvB)} \sum_{i=1}^m  M_i\beta_{1- \epsilon_{i}} (P^i_{XY}, P^i_X  \times  Q_{Y}) \leq 1.
\end{align}
This gives equation~(\ref{eq:converse2}). 
\end{proof}

Finally, the following result regarding constant composition codes will be useful for our asymptotic analysis. 
\begin{corollary}
\label{cor:constant}
Fix $Q_Y$ on $\rvB$ and suppose that $\beta_{\alpha} \left(P_{Y|X=x}, Q_Y\right) $ is constant for all $x\in \rvF \subset \rvB$. Then every $\Mepsilon$-UMP code with codewords belonging to $\rvF$ satisfies,
\begin{align}
M_i \leq \frac{\lambda_i}{\beta_{1-\epsilon_i} \left(P_{Y|X=x}, Q_Y\right)} \label{eq:converse3}
\end{align}
for some $\blambda \in \calL_m$ and all $1\leq i \leq m$. 
\end{corollary}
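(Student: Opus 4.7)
My plan is to reduce the corollary to Theorem~\ref{thm:meta2} via a conditioning/Jensen argument on the underlying hypothesis tests. Let $\bar{P}^i_X$ denote the input distribution on $\rvA$ induced by the encoder when $w$ is drawn uniformly from $\calM_i$. Because every codeword of the code lies in $\rvF$, the support of $\bar{P}^i_X$ is contained in $\rvF$. Inspecting the proof of Theorem~\ref{thm:meta2}, the witness for the infimum in~(\ref{eq:converse1}) is exactly this induced distribution, so there is some $\blambda\in\calL_m$ for which
\begin{align*}
M_i\,\beta_{1-\epsilon_i}\bigl(\bar{P}^i_{XY},\,\bar{P}^i_X \times Q_Y\bigr) \leq \lambda_i, \qquad 1\leq i\leq m.
\end{align*}
It therefore suffices to show that, under the constant-composition hypothesis, $\beta_{1-\epsilon_i}(\bar{P}^i_{XY},\bar{P}^i_X \times Q_Y) \geq \beta_{1-\epsilon_i}(P_{Y|X=x},Q_Y)$ for any $x\in\rvF$; by assumption the right-hand side does not depend on the choice of $x\in\rvF$, so the statement is well-posed.

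For this reduction I would fix an arbitrary (possibly randomized) test $T\colon\rvA\times\rvB\to[0,1]$ for the joint hypothesis problem and, for each $x\in\rvF$, view $T(x,\cdot)$ as a test on $\rvB$ between $P_{Y|X=x}$ and $Q_Y$. Writing $\alpha_x := \sum_y P_{Y|X=x}(y)T(x,y)$ and $\beta_x := \sum_y Q_Y(y)T(x,y)$, the definition of $\beta_\alpha(\cdot,\cdot)$ gives $\beta_x \geq f(\alpha_x)$, where $f(\alpha):=\beta_\alpha(P_{Y|X=x},Q_Y)$ is independent of $x\in\rvF$ by assumption. The joint detection and false alarm are averages $\bbE_{\bar{P}^i_X}[\alpha_X]\geq 1-\epsilon_i$ and $\bbE_{\bar{P}^i_X}[\beta_X]$. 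Using that $\alpha\mapsto\beta_\alpha(P,Q)$ is convex (via randomization between tests) and non-decreasing (its feasible set shrinks as $\alpha$ grows), Jensen's inequality followed by monotonicity yields
\begin{align*}
\bbE[\beta_X] \;\geq\; \bbE[f(\alpha_X)] \;\geq\; f\bigl(\bbE[\alpha_X]\bigr) \;\geq\; f(1-\epsilon_i).
\end{align*}
Taking the infimum over $T$ then gives the sought inequality, and multiplying through by $M_i$ and chaining with the bound from Theorem~\ref{thm:meta2} produces~(\ref{eq:converse3}).

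The main subtlety is the direction of the inequalities in the display above: convexity of $f$ alone provides only $\bbE[f(\alpha_X)]\geq f(\bbE[\alpha_X])$, and to propagate this down to $f(1-\epsilon_i)$ one must additionally invoke monotonicity of $f$ so that the slack $\bbE[\alpha_X]\geq 1-\epsilon_i$ is converted into $f(\bbE[\alpha_X])\geq f(1-\epsilon_i)$. Both properties are standard for $\beta_\alpha(P,Q)$, so once they are recorded the argument concludes routinely; there are no further obstacles beyond verifying that the specific witness $\bar{P}^i_X$ from the proof of Theorem~\ref{thm:meta2} is indeed supported on $\rvF$, which is immediate from the hypothesis on the code.
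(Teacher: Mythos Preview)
Your proof is correct and follows essentially the same route as the paper: invoke Theorem~\ref{thm:meta2} with the encoder-induced input distribution, then reduce the mixture quantity $\beta_{1-\epsilon_i}(\bar{P}^i_{XY},\bar{P}^i_X\times Q_Y)$ to the constant pointwise value $\beta_{1-\epsilon_i}(P_{Y|X=x},Q_Y)$. The paper simply cites \cite[Theorem~29]{PPV2010} for this reduction, whereas you reprove its content directly via the conditioning and convexity/monotonicity argument; the two are the same in substance.
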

This follows directly from Theorem~\ref{thm:meta2} and~\cite[Theorem 29]{PPV2010}.

\section{Binary Symmetric and Binary Erasure Channels}
\label{sec:dmc}
In this section we evaluate the UMP bound of Corollary~\ref{cor:DTbnd} for the BSC and BEC.  The bound is evaluated for the BSC in Corollary~\ref{cor:DTBSC} and the BEC in Corollary~\ref{cor:DTBEC}. The evaluation of the converse bound of Theorem~\ref{thm:meta2} is straightforward given previous results in~\cite{PPV2010, Polaynskiy2013}. We provide it here for completeness in Corrollary~\ref{cor:metaBSC} and Corrollary~\ref{cor:metaBEC}. In Theorem~\ref{thm:coset} we show that the UMP bounds in Corollaries~\ref{cor:DTBSC} and~\ref{cor:DTBEC} can be obtained using unions of coset codes. This suggest a path to  tractable implementation of UMP codes.

We further use this section to investigate construction of UMP codes using only existing homogeneous codes. We formally state the resulting ``header bounds'' based on the homogeneous DT bound in Corollaries~\ref{cor:HeaderBSC} and~\ref{cor:HeaderBEC} and converse ``header bounds'' based on the meta converse in Corollaries~\ref{cor:HeaderConverseBSC} and~\ref{cor:HeaderConverseBEC}. Our plots in Figures~\ref{fig:BSCplot1} through~\ref{fig:BECplot2} demonstrate that, in general, the header construction is suboptimal in the finite block length regime. Specifically, the plots of the BSC (resp. BEC) of UMP bounds v.s. the header achievability bound (also based on the DT bound) provided in Figure~\ref{fig:BSCplot1} (resp. Figure~\ref{fig:BECplot1}) demonstrates that the UMP codes perform much better. When we compare UMP achievability to the header converse for the BSC in Figure~\ref{fig:BSCplot2} the results are less clear. We attribute this difference to the gap between the DT bound and the converse that is presented for homogeneous codes for the BSC. Nevertheless, for the BEC for which the gap is known to be smaller, the UMP achievability bound beats the header converse bound, cf. Figure~\ref{fig:BECplot2}. 

\subsection{Binary Symmetric Channel}
The BSC($p, n$) is the channel from $\rvA$ to $\rvB$, $\rvA = \rvB = \{0,1\}^n$, with stochastic kernel defined by
\begin{align}
W^n(y^n|x^n)= p^{|y^n - x^n|} (1-p)^{n-|y^n - x^n|}
\end{align}
where $|y^n-x^n|$ denotes the Hamming weight of the binary vector $y^n - x^n$.

\begin{corollary}[UMP Bound, BSC]\label{cor:DTBSC}
For any $\blambda \in \calL_m$, there exists an $\Mepsilon$-UMP code (maximum probability of error) for the BSC($p,n$) with
\begin{align}
\epsilon_i\! \leq \!\sum_{t=0}^n\! {n \choose t} p^t (1-p)^{n-t}\! \min \left[\!1, \!\frac{M_i}{\lambda_i}\! 2^{-n} p^{-t} (\!1 \!- \! p\!)^{t-n}\!\right]\!. \label{eq:bsc}
\end{align}
\end{corollary}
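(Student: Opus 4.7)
My plan is to specialize the compact UMP achievability bound in Corollary~\ref{cor:DTbnd}, equation~(\ref{eq:DTcompact}), to the BSC$(p,n)$ by choosing every input distribution to be uniform on $\{0,1\}^n$.

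First, I will take $P_{X_i}$ to be the uniform distribution on $\rvA = \{0,1\}^n$ for every $i \in \{1,\dots,m\}$. Because the BSC$(p,n)$ is a symmetric channel, this choice induces the uniform distribution on $\rvB = \{0,1\}^n$ as the output law, so $P_{X_i}W^n = P_{X_j}W^n$ for all $i,j$, which is precisely the hypothesis needed to invoke Corollary~\ref{cor:DTbnd}. I then need to check the extra symmetry hypothesis that lets me use the compact form~(\ref{eq:DTcompact}), namely that the CDF of $\imath_{X_i;Y_i}(x;Y_i)$ under $Y_i\sim P_{Y_i}$ does not depend on $x$. Since $P_{Y_i}$ is uniform, the Hamming distance $|Y_i - x|$ has distribution $\mathrm{Binomial}(n,1/2)$ regardless of $x$, and the information density $\imath_{X_i;Y_i}(x;y)$ depends on $(x,y)$ only through $|y-x|$; this gives the required $x$-invariance.

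Second, I will compute the information density explicitly. For a pair $(x,y)$ with $t := |y-x|$,
\begin{align}
\imath_{X_i;Y_i}(x;y) = \log \frac{W^n(y|x)}{2^{-n}} = n + \log\bigl(p^t(1-p)^{n-t}\bigr),
\end{align}
so that $2^{-\imath_{X_i;Y_i}(x;y)} = 2^{-n} p^{-t}(1-p)^{t-n}$. Under the true joint distribution (with $X_i$ uniform and $Y_i$ the channel output), the random variable $T := |Y_i - X_i|$ is $\mathrm{Binomial}(n,p)$.

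Third, substituting into~(\ref{eq:DTcompact}) and using $\exp\{-[u]^+\} = \min\{1,\exp(-u)\}$, the bound becomes
\begin{align}
\epsilon_i \leq \mathbb{E}\!\left[\min\!\left\{1,\, \tfrac{M_i}{\lambda_i}\, 2^{-\imath_{X_i;Y_i}(X_i;Y_i)}\right\}\right]
= \sum_{t=0}^n \binom{n}{t} p^t (1-p)^{n-t} \min\!\left\{1,\, \tfrac{M_i}{\lambda_i}\, 2^{-n} p^{-t}(1-p)^{t-n}\right\},
\end{align}
which is exactly~(\ref{eq:bsc}).

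There is no real obstacle here; the work is essentially bookkeeping. The only subtlety worth double-checking is that the two symmetry conditions used above (identical output distribution across classes, and $x$-invariance of the CDF of $\imath$) both hold simultaneously under the uniform choice of $P_{X_i}$, which they do thanks to the additive-noise structure of the BSC.
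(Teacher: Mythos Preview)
Your proof is correct and follows exactly the same approach as the paper: choose the uniform input distribution, compute the information density for the BSC, and substitute into the compact bound~(\ref{eq:DTcompact}). The paper's own proof is a terse two-line version of what you wrote; you have simply made explicit the verification of the two symmetry hypotheses (common output law across classes and $x$-invariance of the CDF of $\imath$) that the paper leaves to the reader.
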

\begin{proof}
Following~\cite{PPV2010} we notice that with the equiprobable input distribution on $X^n$ the information density is $i_{X^n; Y^n}(x^n; y^n) = n \log (2-2\delta) +t \log \frac{\delta}{1-\delta}$.  The result follows by computing~(\ref{eq:DTcompact}).
\end{proof}

\begin{corollary}[Header Achievability Bound, BSC]\label{cor:HeaderBSC}
For any $0\leq n_0 \leq n$, there exists an $\Mepsilon$-UMP code for the BSC($p,n$) with
\begin{align}
\epsilon_i &\!\leq\! \sum_{t=0}^{n_0} {n_0 \choose t} p^t (1-p)^{n_0-t}\min \left[1, (m-1) 2^{-n_0-1} p ^{-t} (1-p)^{t-n_0}\right] \notag \\
&\,\,+ \sum_{t=0}^{n-n_0} {n - n_0 \choose t} p^t (1-p)^{(n - n_0)-t} \min \left[1, (M_i-1) 2^{-(n-n_0)-1} p ^{-t} (1-p)^{t-(n-n_0)}\right].
\end{align}
\end{corollary}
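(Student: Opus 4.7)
The plan is to use a standard header-plus-data concatenation and bound the per-class error by a union bound. First I would split the $n$ BSC uses into the first $n_0$ (the header) and the remaining $n-n_0$ (the data), encoding a message $w \in \calM_i$ as the pair $(h_i, c_{i,w})$ where $h_i \in \{0,1\}^{n_0}$ is a header codeword tagging the class $i$ and $c_{i,w} \in \{0,1\}^{n-n_0}$ is a codeword from a class-$i$ data code. The receiver first decodes the header to some $\hat{\imath}$ and then decodes the data portion using the data code of class $\hat{\imath}$.

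Next I would apply a homogeneous DT-style achievability bound for the average probability of error (the $m=1$ specialization of Corollary~\ref{cor:DTbnd}, i.e.\ equation~(\ref{eq:DTcompact})) twice: once to the header -- viewed as a homogeneous code with $m$ codewords over BSC$(p,n_0)$ -- and once to each class-$i$ data code -- viewed as a homogeneous code with $M_i$ codewords over BSC$(p,n-n_0)$. Evaluating the information density at the uniform input distribution, namely $\imath_{X^n;Y^n}(x^n;y^n) = n\log 2 + |y^n-x^n|\log p + (n-|y^n-x^n|)\log(1-p)$ as in the proof of Corollary~\ref{cor:DTBSC}, reproduces exactly the two summands appearing in the statement.

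Finally, by the union bound, for any $w \in \calM_i$ the overall decoding error implies either that the header was decoded to $\hat{\imath} \ne i$ or that the header was correct but the chosen data codeword was wrong; averaging over $w \in \calM_i$ yields
\begin{align*}
\epsilon_i \;\le\; P(\hat{\imath} \neq i \mid i\text{ sent}) \;+\; P(\hat{w} \neq w \mid \hat{\imath} = i).
\end{align*}
The main subtlety I anticipate is the per-class versus averaged error for the header: the DT bound naturally controls the error averaged over all $m$ header messages, whereas the UMP guarantee we need is per class. I would resolve this by Shannon's random-coding argument using the iid uniform ensemble for the $m$ header codewords, which, together with the symmetry of the BSC under uniform inputs, makes the ensemble-expected header error the same for every class and equal to the first summand. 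Drawing each class's data codebook independently from its own iid ensemble and invoking a standard existence argument then produces a single realization that meets both summands simultaneously for every $i$.
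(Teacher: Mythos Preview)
Your approach matches the paper's: split the $n$ uses into header and payload, apply the homogeneous average-error DT bound for the BSC (the paper cites \cite[Theorem~34]{PPV2010}) once to the length-$n_0$ header with $m$ messages and once to each length-$(n-n_0)$ data code with $M_i$ messages, and combine via the union bound. One small correction: your parenthetical points to Corollary~\ref{cor:DTbnd}/equation~(\ref{eq:DTcompact}), but that is the \emph{maximum}-error version (which would produce $M\cdot 2^{-n}$ rather than $(M-1)\cdot 2^{-n-1}$); the constants $(m-1)2^{-n_0-1}$ and $(M_i-1)2^{-(n-n_0)-1}$ in the statement come from the average-error DT bound, and your extra discussion of the per-class header error via BSC symmetry is a detail the paper's one-line proof leaves implicit.
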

\begin{proof}
The result follows by applying~\cite[Theorem 34]{PPV2010} twice: once to construct a homogenous code with $m$ codewords over BSC($p, n_0$) and again to construct a homogeneous code with $M_i$ codewords over BSC($p, n-n_0$).
\end{proof}
\begin{figure}
\centering
\includegraphics[width=4.2 in]{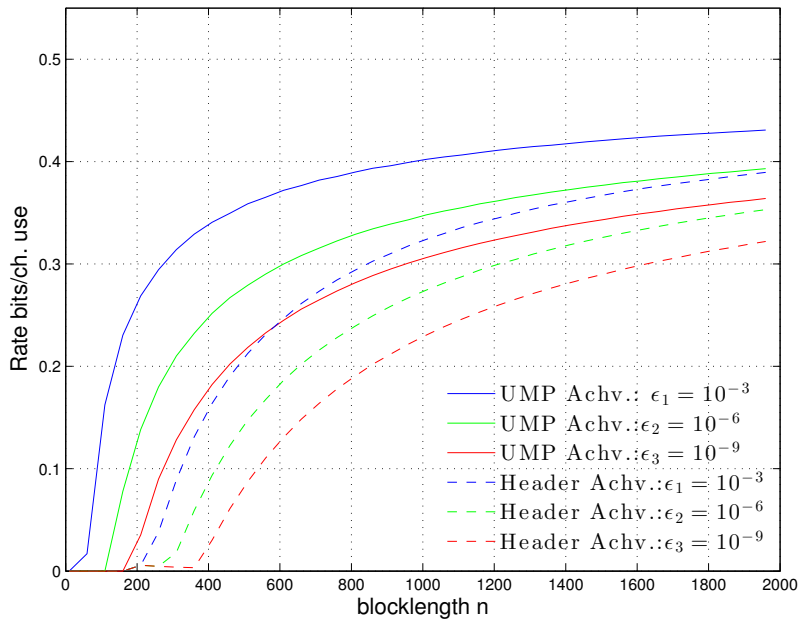}
\caption{Comparison for BSC($0.11,n$) of UMP Code in Corollary~\ref{cor:DTBSC} vs. header codes in Corollary~\ref{cor:HeaderBSC}, with $m=3$. For UMP Code the parameter $\blambda =( \frac{1}{3}, \frac{1}{3}, \frac{1}{3})$ was selected. For the header code only values of $n_0$ that can support at least one codeword in every class were considered. The best rate across all of such codes is plotted for each class.}
 \label{fig:BSCplot1}
\end{figure}

Letting $m=1$ and $n_0 = 0$ Corollary~\ref{cor:HeaderBSC} reduces to
\begin{align}
\epsilon &\leq \sum_{t=0}^{n} {n \choose t} p^t (1-p)^{n-t} \min \left[1, (M-1) 2^{-n-1} p ^{-t} (1-p)^{t-n}\right] \label{eq:homoBSC}
\end{align}
which is exactly~\cite[Theorem 34]{PPV2010}. 
Comparing~(\ref{eq:homoBSC}) and~(\ref{eq:bsc}) we can attribute the $M-1$ term being replaced by $\frac{M_i}{\lambda_i}$ to the presence of multiple classes in the code and $2^{-n-1}$ being replaced by $2^{-n}$ to the fact that we use maximum probability of error bound to obtain Corollary~\ref{cor:DTBSC}.

\begin{corollary}[UMP Converse, BSC]\label{cor:metaBSC}
Any $\Mepsilon$-UMP code over  BSC($p,n$) must satisfy
\begin{align}
M_i \leq \frac{\lambda_i}{\beta_{1-\epsilon_i}^n}, \quad \forall i = 1, \dots, m \mbox{ and some } \blambda\in\calL_m
\end{align}
where $\beta_{\alpha}^n$ is defined as
\begin{align}
\beta_\alpha^n &= (1-\rho)\beta_L + \rho\beta_{L+1} \label{eq:BECbeta}\\
\beta_l &= \sum_{j=0}^l {n \choose j} 2^{-n},
\end{align}
and where $0\leq \rho\leq 1$, and the integer $L$ are defined by
\begin{align}
\alpha &= (1-\rho)\alpha_L + \rho \alpha_{L+1}\\
\alpha_l &=\sum_{j=1}^{l-1} {n \choose j} (1-p)^{n-j}p^j.
\end{align}
\end{corollary}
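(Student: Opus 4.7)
The plan is to deduce this converse by specializing Theorem~\ref{thm:meta2} to the BSC via Corollary~\ref{cor:constant}, exploiting the strong symmetry of the channel. First I would pick $Q_Y$ to be the equiprobable distribution on $\{0,1\}^n$ and take $\rvF = \rvA = \{0,1\}^n$. For any $x^n \in \{0,1\}^n$ the pair $\bigl(W^n(\cdot|x^n), Q_Y\bigr)$ is obtained from $\bigl(W^n(\cdot|0^n), Q_Y\bigr)$ by the invertible coordinate-wise XOR by $x^n$; since $Q_Y$ is invariant under this bijection and $\beta_\alpha$ depends only on the joint likelihood structure, we have
\begin{equation}
\beta_{1-\epsilon_i}\bigl(W^n(\cdot|x^n), Q_Y\bigr) \;=\; \beta_{1-\epsilon_i}\bigl(W^n(\cdot|0^n), Q_Y\bigr) \quad \text{for all } x^n \in \{0,1\}^n. \nonumber
\end{equation}
This places us exactly in the setting of Corollary~\ref{cor:constant}, so we immediately obtain $M_i \le \lambda_i / \beta_{1-\epsilon_i}^n$ for some $\blambda \in \calL_m$, where $\beta_{1-\epsilon_i}^n$ is the common value above.

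The remaining step is to compute $\beta_\alpha^n$ via the Neyman-Pearson lemma. Taking $x^n=0^n$, the ``target'' measure $W^n(\cdot|0^n)$ places mass $p^t(1-p)^{n-t}$ on each string of Hamming weight $t$, while $Q_Y$ places uniform mass $2^{-n}$. The likelihood ratio $p^t(1-p)^{n-t}/2^{-n}$ is a monotone function of $t$ (strictly decreasing for $p<1/2$), so the optimal test is a threshold test on the Hamming weight with possible randomization at a single boundary value $L$. The acceptance probability under $W^n(\cdot|0^n)$ is then the piecewise-linear interpolation $\alpha = (1-\rho)\alpha_L + \rho\alpha_{L+1}$ of the Binomial$(n,p)$ partial sums, and the acceptance probability under $Q_Y$ is the corresponding interpolation $(1-\rho)\beta_L+\rho\beta_{L+1}$ of the Binomial$(n,1/2)$ partial sums. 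Solving the first equation determines the integer $L$ and the randomization parameter $\rho \in [0,1]$ uniquely, and substituting into the second yields the stated formula~(\ref{eq:BECbeta}).

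There is no real obstacle: the whole argument amounts to invoking the meta-converse, spotting the appropriate symmetry of the BSC to make $\beta$ constant across inputs, and then running the standard Neyman-Pearson calculation for testing Binomial$(n,p)$ against Binomial$(n,1/2)$. This is precisely the reason the excerpt flags the derivation as ``straightforward given previous results''. The small care required is (i) to verify the translation invariance of $Q_Y$ so that Corollary~\ref{cor:constant} applies, and (ii) to handle the discreteness of Hamming-weight tests by the standard randomization at the boundary, both of which are routine.
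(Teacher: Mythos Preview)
Your proposal is correct and matches the paper's approach: the paper simply says ``follows by identical reasoning to \cite[Theorem~35]{PPV2010} applied to Theorem~\ref{thm:meta2},'' which is exactly what you do---choose $Q_Y$ uniform, use the BSC's translation symmetry so that $\beta_{1-\epsilon_i}(W^n(\cdot|x^n),Q_Y)$ is input-independent (putting you in the constant-$\beta$ setting of Corollary~\ref{cor:constant}), and then evaluate $\beta_\alpha^n$ via the Neyman--Pearson threshold test on Hamming weight. Your explicit invocation of Corollary~\ref{cor:constant} is a clean way to extract the per-class bound, and the remainder is the standard PPV computation.
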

\begin{proof}
Follows by identical reasoning to~\cite[Theorem 35]{PPV2010} applied to Theorem~\ref{thm:meta2}.
\end{proof}

\begin{figure}
\centering
\includegraphics[width=4.2 in]{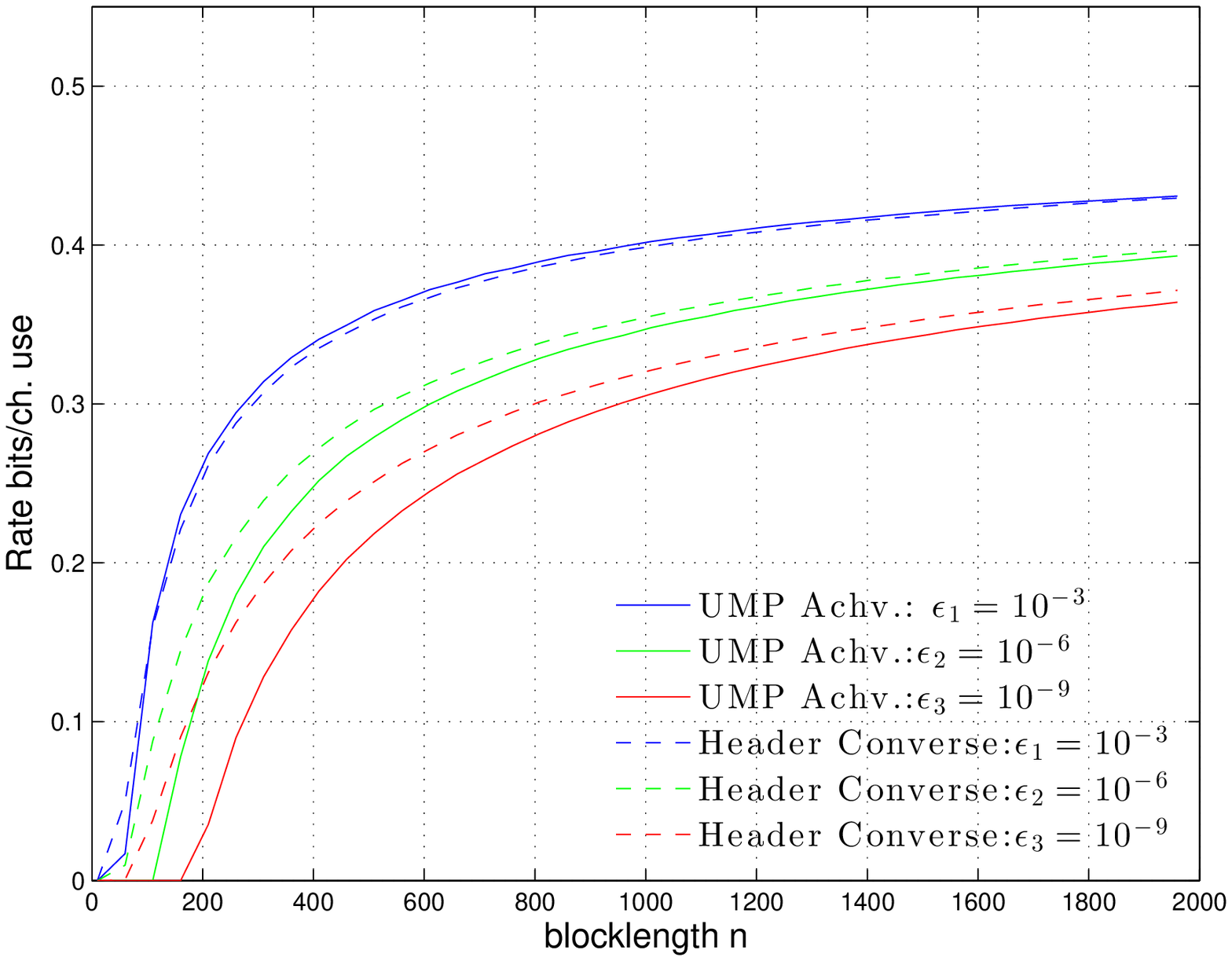}
\caption{Comparison for BSC($0.11,n$) of UMP code in Corollary~\ref{cor:DTBSC} vs. header code converse Corollary~\ref{cor:HeaderConverseBSC}), with $m=3$. For UMP code the parameter $\blambda =( \frac{1}{3}, \frac{1}{3}, \frac{1}{3})$ was selected. For the header code only values of $n_0$ which can support at least one codeword in every class were considered. The best rate across all of such codes was plotted for each class.}
 \label{fig:BSCplot2}
\end{figure}

\begin{corollary}[Header Converse Bound, BSC]\label{cor:HeaderConverseBSC}
Let $\beta_\alpha^n$ be as in~(\ref{eq:BECbeta}). Then, any $\Mepsilon$-UMP code for the BSC($p,n$) designed via the header construction must satisfy
\begin{align}
m \leq \frac{1}{\beta_{1-\epsilon_0}^{n_0}},
\end{align}
and
\begin{align}
M_i &\leq \frac{1}{\beta_{1-(\epsilon_i - \epsilon_0)}^{n-n_0}} \mbox{ if } \epsilon_i \geq \epsilon_0\\
M_i & = 0 \mbox{ otherwise,}
\end{align}
for some $0\leq n_0 \leq n$ and $0 \leq \epsilon_0 \leq 1$.
\end{corollary}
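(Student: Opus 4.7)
The plan is to exploit the sequential structure of the header construction: the code decomposes into a homogeneous ``header code'' on $n_0$ channel uses that identifies the class, followed by a homogeneous ``data code'' on the remaining $n-n_0$ channel uses that identifies the message within the claimed class. Because decoding is performed sequentially (decode header, then decode data under the claimed class), the overall error event for a message in class $i$ is contained in the union of the header error event and the data error event for class $i$. A single application of the union bound will therefore decompose the per-class error budget into a header share and a data share, and each share can then be controlled by the homogeneous meta-converse for the BSC already packaged in Corollary~\ref{cor:metaBSC}.

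More concretely, I would first fix an arbitrary header-constructed $\Mepsilon$-UMP code and let $n_0\in\{0,\dots,n\}$ be its header length. Let $\epsilon_0$ denote the error probability of the header code viewed as an $m$-ary homogeneous code over $\mathrm{BSC}(p,n_0)$, and let $\delta_i$ denote the error probability of the $i$th data code viewed as an $M_i$-ary homogeneous code over $\mathrm{BSC}(p,n-n_0)$. A pointwise union bound on the error event, averaged (or maximized, as appropriate) over messages in class $i$, yields
\begin{equation}
\epsilon_i \;\le\; \epsilon_0 + \delta_i,
\end{equation}
so $\delta_i \le \epsilon_i - \epsilon_0$. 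I would then invoke Corollary~\ref{cor:metaBSC} specialized to $m=1$ (equivalently \cite[Theorem~35]{PPV2010}) separately for the header code and for each data code, obtaining
\begin{equation}
m \;\le\; \frac{1}{\beta_{1-\epsilon_0}^{n_0}},\qquad M_i \;\le\; \frac{1}{\beta_{1-\delta_i}^{\,n-n_0}}.
\end{equation}

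The last step is to replace $\delta_i$ by the worst admissible data-error budget $\epsilon_i-\epsilon_0$ using monotonicity of $\alpha\mapsto\beta_\alpha$: because $\beta_\alpha$ is non-decreasing in $\alpha$, the inequality $\delta_i \le \epsilon_i-\epsilon_0$ forces $\beta_{1-\delta_i}^{\,n-n_0} \ge \beta_{1-(\epsilon_i-\epsilon_0)}^{\,n-n_0}$, which gives the claimed bound on $M_i$ whenever $\epsilon_i\ge\epsilon_0$. When $\epsilon_i<\epsilon_0$, the relation $\delta_i \le \epsilon_i-\epsilon_0<0$ is infeasible, so the data subcodebook for class $i$ cannot contain any codeword and $M_i=0$. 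Finally I would note that $n_0$ and $\epsilon_0$ are not free parameters chosen by the converse: they are dictated by the code at hand, which is precisely the ``for some $0\le n_0\le n$ and $0\le\epsilon_0\le 1$'' clause in the statement.

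The only mildly subtle point, and thus the main place to be careful, is the precise bookkeeping of maximum versus average error probability when performing the union bound: one should take $\epsilon_0$ to be the maximum header-decoding error across classes, so that for every class $i$ the per-message data error can be bounded by $\epsilon_i-\epsilon_0$ without loss. Once this is fixed, the argument reduces to two applications of the homogeneous BSC meta-converse plus the monotonicity of $\beta_\alpha$, all of which are already available from the preceding corollaries.
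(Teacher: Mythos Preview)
Your overall strategy---decompose the header construction into a header sub-code and a data sub-code and apply the homogeneous BSC meta-converse to each---matches the paper's one-line proof. However, the union-bound step you insert contains a direction error. You write ``yields $\epsilon_i \le \epsilon_0 + \delta_i$, so $\delta_i \le \epsilon_i - \epsilon_0$'', but from $A\le B+C$ one obtains $C\ge A-B$, not $C\le A-B$. More fundamentally, the union bound furnishes an \emph{upper} bound on the overall error in terms of the component errors; that is the direction used for achievability (as in Corollary~\ref{cor:HeaderBSC}), not for a converse. A converse on the data sub-code requires an \emph{upper} bound on $\delta_i$ given the overall budget $\epsilon_i$, and the inclusion ``overall error $\subseteq$ header error $\cup$ data error'' simply does not yield one.

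The paper sidesteps this by reading the header construction as parametrized by its design choices: the header length $n_0$ and the header error budget $\epsilon_0$ are parameters of the given code, and the data budget for class $i$ is $\epsilon_i-\epsilon_0$ by design. With that reading the proof is literally two invocations of \cite[Theorem~35]{PPV2010}---once for the $(m,\epsilon_0)$ header code over BSC$(p,n_0)$ and once for each $(M_i,\epsilon_i-\epsilon_0)$ data code over BSC$(p,n-n_0)$---with no union-bound argument required. Your monotonicity-of-$\beta_\alpha$ observation is correct and would be the right tool had you established $\delta_i\le\epsilon_i-\epsilon_0$, but that inequality is an assumption about how the header code is built, not a consequence of the union bound.
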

\begin{proof}
The result follows by applying~\cite[Theorem 35]{PPV2010} twice: once to construct a homogenous code with $m$ codewords over BSC($p, n_0$) and again to construct a homogeneous code with $M_i$ codewords over BSC($p, n-n_0$).
\end{proof}

\subsection{Binary Erasure Channel}

\begin{corollary}[UMP Bound, BEC]\label{cor:DTBEC}
For any ${\bm \lambda} \in \calL_m$, there exists an $\Mepsilon$-UMP code (maximum probability of error) for the  BEC($p,n$) with
\begin{align}
\epsilon_i \leq \sum_{t=0}^n {n \choose t} p^t (1-p)^{n-t} \min \left[1, \frac{M_i}{\lambda_i} 2^{t-n}\right]. \label{eq:bec}
\end{align}
\end{corollary}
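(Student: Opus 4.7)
The plan is to evaluate Corollary~\ref{cor:DTbnd} (specifically the compact form~(\ref{eq:DTcompact})) with the capacity-achieving input distribution for the BEC, in direct analogy with the BSC argument in Corollary~\ref{cor:DTBSC}. I would choose $P_{X_i^n}$ to be the equiprobable distribution on $\{0,1\}^n$ for every class $i$. This makes all $m$ input distributions identical, so the hypothesis $P_{X_i}W = P_{X_j}W$ required by Corollary~\ref{cor:DTbnd} is automatic, and I then just need to check the symmetry hypothesis that the CDF of $\bbP[\imath_{X^n;Y^n}(x^n;Y^n)\le \alpha]$ is independent of the conditioning $x^n$.

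Next I would compute the information density explicitly. For any output $y^n$ with exactly $t$ erasures, $W^n(y^n\mid x^n) = p^t(1-p)^{n-t}$ whenever $x^n$ agrees with $y^n$ on the $n-t$ non-erased positions, and zero otherwise. Summing over the $2^t$ compatible inputs yields the induced output distribution
\begin{align}
P_{Y^n}(y^n) = 2^{t-n}\, p^t(1-p)^{n-t},
\end{align}
and therefore $\imath_{X^n;Y^n}(x^n;y^n) = n - t$ (in bits) whenever the pair is compatible. Since, conditionally on any $X^n=x^n$, the number of erasures $T$ is $\mathrm{Bin}(n,p)$ independently of $x^n$, the random variable $\imath_{X^n;Y^n}(X^n;Y^n) = n-T$ has the same distribution for every $x^n$. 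This verifies the symmetry condition and allows me to invoke the compact form~(\ref{eq:DTcompact}).

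Applying~(\ref{eq:DTcompact}) then reduces to a direct computation. Using $\exp\{-[u]^+\} = \min\{1, 2^{-u}\}$ (with base-2 exponent), the bound becomes
\begin{align}
\epsilon_i \le \bbE\!\left[\min\!\left\{1,\; \tfrac{M_i}{\lambda_i}\, 2^{-(n-T)}\right\}\right] = \sum_{t=0}^{n}\binom{n}{t} p^t(1-p)^{n-t} \min\!\left\{1,\; \tfrac{M_i}{\lambda_i}\, 2^{t-n}\right\},
\end{align}
which is exactly~(\ref{eq:bec}).

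There is no real obstacle here: the entire argument is a plug-and-chug evaluation of an already-proved general bound. The only step requiring any care is the output distribution computation and the observation that the information density depends on $(x^n,y^n)$ only through the erasure count $t$, which immediately gives both the symmetry needed to use~(\ref{eq:DTcompact}) and the binomial expectation that produces the stated formula.
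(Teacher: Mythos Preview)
Your proposal is correct and takes essentially the same approach as the paper: choose the equiprobable input distribution, compute that under this choice the information density equals the number of non-erased positions (in bits), and then plug into the compact form~(\ref{eq:DTcompact}) of Corollary~\ref{cor:DTbnd}. You have simply made explicit the verification of the symmetry hypothesis and the binomial expectation that the paper leaves implicit.
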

\begin{proof}
Following~\cite{PPV2010} we notice that with the equiprobable input distribution on $X^n$ the information density is 
\begin{align}
i_{X^n;Y^n}(x^n; y^n) = \left\{ \begin{array}{l}
  \#\{j: y_j \neq e\}\cdot \log 2, \\ \quad \mbox{ if $y^n$ and $x^n$ agree on non-erased} \\ \quad \mbox{ positions,} \\
  -\infty, \mbox{ otherwise.}
       \end{array} \right. \nonumber
\end{align}
The result follows by computing~(\ref{eq:DTcompact}).
\end{proof}

\begin{corollary}[UMP Converse, BEC]\label{cor:metaBEC}
Any $\Mepsilon$-UMP code over  BEC($p,n$) must satisfy
\begin{align}
\epsilon_i \geq  \sum_{l=0}^n {n \choose l} p^l (1-p)^{n-l} \left(1 - \frac{\lambda_i 2^{n-l}}{M_i} \right)^+
\end{align}
for some $\blambda \in \calL_m$.
\end{corollary}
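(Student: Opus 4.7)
The plan is to apply the meta-converse of Theorem~\ref{thm:meta2}, specialized to a BEC-friendly choice of output distribution, and then carry out the Neyman--Pearson computation explicitly. Specifically, I would take $P^i_X$ to be the equiprobable distribution on $\{0,1\}^n$ for every class, so that the induced output distribution is $Q_Y(y) = p^l(1-p)^{n-l}\, 2^{-(n-l)}$ for every $y\in\{0,1,e\}^n$ with exactly $l$ erasures. The key observation is that the likelihood ratio
\begin{align*}
\frac{\mathrm{d}P_{Y|X=x}}{\mathrm{d}Q_Y}(y) =
\begin{cases} 2^{n-l}, & y \text{ agrees with } x \text{ on the non-erased positions,}\\ 0, & \text{otherwise,}\end{cases}
\end{align*}
depends on $y$ only through its erasure count $l$, and not on $x$. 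Consequently $\beta_\alpha(P_{Y|X=x},Q_Y)$ is constant over all $x \in \{0,1\}^n$, so Corollary~\ref{cor:constant} with $\rvF = \{0,1\}^n$ yields some $\blambda\in\calL_m$ with $\beta_{1-\epsilon_i}(P_{Y|X=x},Q_Y) \le \lambda_i/M_i$ for every class $i$.

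Next, I would compute $\beta_\alpha(P_{Y|X=x},Q_Y)$ exactly. Parametrizing a randomized test by the probability $z_l \in [0,1]$ that it accepts a $y$ consistent with $x$ and having $l$ erasures (with inconsistent $y$'s never accepted, since their ratio is zero), I obtain
\begin{align*}
\alpha = \sum_{l=0}^n \binom{n}{l} p^l(1-p)^{n-l}\, z_l,\qquad \beta = \sum_{l=0}^n \binom{n}{l} p^l(1-p)^{n-l}\, 2^{-(n-l)} z_l.
\end{align*}
The Neyman--Pearson problem of maximizing $\alpha$ subject to $\beta\le\eta$, $0\le z_l\le 1$, is a simple LP whose water-filling solution is $z_l = \min(1,\eta\, 2^{n-l})$: this choice is feasible since $\beta = \sum_l \binom{n}{l}p^l(1-p)^{n-l}\min(2^{-(n-l)},\eta)\le\eta$, and any residual slack can only be spent on inconsistent $y$'s, which increase $\beta$ without helping $\alpha$. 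Hence
\begin{align*}
\alpha^*(\eta) = \sum_{l=0}^n \binom{n}{l} p^l(1-p)^{n-l}\, \min\!\left(1, \eta\, 2^{n-l}\right).
\end{align*}

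Finally, plugging $\eta = \lambda_i/M_i$ into the meta-converse gives $1-\epsilon_i \le \alpha^*(\lambda_i/M_i)$, and rearranging with the identity $1-\min(1,x) = (1-x)^+$ yields the claimed lower bound
\begin{align*}
\epsilon_i \ge \sum_{l=0}^n \binom{n}{l} p^l(1-p)^{n-l}\left(1 - \frac{\lambda_i\, 2^{n-l}}{M_i}\right)^+.
\end{align*}
The substantive step is the Neyman--Pearson LP analysis; invoking Corollary~\ref{cor:constant} and the final algebraic rearrangement are mechanical once the symmetry of the likelihood ratio in $x$ is recognized.
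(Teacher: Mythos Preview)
Your Neyman--Pearson computation is incorrect, and the error is not cosmetic: it makes the argument prove a stronger bound than the meta-converse with the capacity-achieving output distribution actually delivers.

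The optimal test maximizing $\alpha$ subject to $\beta\le\eta$ is a \emph{threshold} test on the likelihood ratio $2^{n-l}$: set $z_l=1$ for $l<l^\ast$, $z_l=0$ for $l>l^\ast$, and randomize at $l^\ast$. Your proposed $z_l=\min(1,\eta\,2^{n-l})$ is merely feasible, not optimal; for every $l$ with $\eta\,2^{n-l}<1$ you leave slack that could be reallocated to increase $\alpha$. (Your own sentence ``any residual slack can only be spent on inconsistent $y$'s'' contradicts the fact that those $z_l$ are strictly below~$1$.) Concretely, take $n=2$, $p=\tfrac12$, $\eta=\tfrac12$: your rule gives $z=(1,1,\tfrac12)$ and $\alpha=\tfrac78$, while the threshold test $z=(1,1,\tfrac34)$ attains $\beta=\tfrac12$ exactly and $\alpha=\tfrac{15}{16}$. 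Thus the meta-converse with this $Q_Y$ only yields $\epsilon\ge\tfrac{1}{16}$, whereas the bound you claim (and the corollary states) is $\epsilon\ge\tfrac18$. Since the two-codeword code $\{00,11\}$ achieves $\epsilon=\tfrac18$, the stated bound is tight here and genuinely stronger than what Corollary~\ref{cor:constant} with the product capacity-achieving $Q_Y$ can give.

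The paper does not obtain the bound this way. It combines Theorem~\ref{thm:meta2} with \cite[Theorem~23]{Polaynskiy2013}, where Polyanskiy shows that for the BEC the minimax meta-converse is \emph{not} saturated by the i.i.d.\ capacity-achieving output law; the optimal $Q_{Y^n}$ is a non-product mixture, and with that choice the meta-converse evaluates exactly to $\sum_l\binom{n}{l}p^l(1-p)^{n-l}(1-2^{n-l}/M)^+$. Equivalently, one can bypass $\beta$ entirely with the direct counting argument (conditioned on an erasure pattern with $l$ erasures, any decoder correctly identifies at most $2^{n-l}$ codewords of class~$i$, so the class-$i$ error is at least $(1-\lambda_i 2^{n-l}/M_i)^+$ once the decoder's output partition is accounted for via the $\lambda_i$'s). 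Either route works; the LP you wrote down does not.
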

\begin{proof}
Follows by combining Theorem~\ref{thm:meta2} and~\cite[Theorem 23]{Polaynskiy2013}.
\end{proof}

\begin{corollary}[Header Achievability Bound, BEC]\label{cor:HeaderBEC}
For any $0\leq n_0 \leq n$, there exists an $\Mepsilon$-UMP code (maximum probability of error) for the  BEC($p,n$) with
\begin{align}
\epsilon_i &\leq \sum_{t=0}^{n_0} {n \choose t} p^t (1-p)^{n_0-t} \min \left[1, (m-1)2^{t-n_0-1}\right] \notag \\
&\qquad + \sum_{t=0}^{n-n_0} {(n-n_0) \choose t} p^t (1-p)^{(n-n_0)-t} \min \left[1, (M_i-1) 2^{t-(n-n_0)-1}\right].
\end{align}
\end{corollary}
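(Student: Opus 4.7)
The plan is to mirror the construction used in Corollary~\ref{cor:HeaderBSC}, splitting the $n$ channel uses into a prefix of length $n_0$ that encodes the message class and a suffix of length $n-n_0$ that encodes the message within the chosen class. First I would apply the homogeneous DT bound for the BEC (the BEC counterpart of~\cite[Theorem 34]{PPV2010}) to BEC($p, n_0$) with $m$ equiprobable messages. With the uniform input distribution on $\{0,1\}^{n_0}$, the information density equals $(n_0 - t)\log 2$ when the channel produces $t$ erasures and the candidate codeword agrees with the received word on the non-erased positions, and $-\infty$ otherwise. Specializing~(\ref{eq:DTcompact}) with this information density yields the first summation in the claim, with prefactor $(m-1)\,2^{t-n_0-1}$.

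Next, for each $i$, I would separately apply the same homogeneous DT bound to BEC($p, n-n_0$) with $M_i$ equiprobable messages, giving $m$ data sub-codes whose (average) probability of error is bounded by the second summation, with prefactor $(M_i-1)\,2^{t-(n-n_0)-1}$. Since the homogeneous DT bound is an achievability statement, standard expurgation then yields a codebook for which the same bound holds under the maximum probability of error criterion (with possibly a factor-$2$ loss absorbed into the $-1$ offsets, exactly as in the BSC version).

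To assemble the UMP code I would concatenate each class-$i$ data codeword with the class-$i$ header codeword, and use the natural two-stage decoder that first decodes the header and then, conditioned on the header estimate, applies the corresponding data decoder. By the union bound, the probability that a class-$i$ codeword is decoded incorrectly is at most the sum of the header-decoding error and the class-$i$ data-decoding error, which is exactly the right-hand side of the claimed inequality. This is a per-codeword statement, so it holds under the maximum probability of error criterion simultaneously for all $i$.

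There is no substantive obstacle: the construction is an immediate analogue of the BSC header result, and all that the proof requires is invoking the known homogeneous DT bound twice on BECs of block lengths $n_0$ and $n-n_0$, and then combining the two errors via the union bound induced by the two-stage decoder. The only care point is tracking the $(m-1)$ and $(M_i-1)$ offsets and the $2^{t-n'-1}$ factors, which come from applying the standard average-error homogeneous DT bound rather than the UMP form of Corollary~\ref{cor:DTBEC}.
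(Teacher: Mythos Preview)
Your proposal is correct and essentially identical to the paper's approach: the paper's proof simply says to apply~\cite[Theorem 37]{PPV2010} (the homogeneous BEC DT bound) twice, once over BEC($p,n_0$) for the $m$-message header and once over BEC($p,n-n_0$) for the $M_i$-message data code, then combine by union bound. Your write-up just fleshes out the two-stage decoder and the information-density computation that the paper leaves implicit; the only cosmetic point is that the relevant PPV reference for the BEC is Theorem~37 rather than Theorem~34.
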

\begin{proof}
The result follows by applying~\cite[Theorem 37]{PPV2010} twice: once to construct a homogenous code with $m$ codewords over BEC($p, n_0$) and again to construct a homogeneous code with $M_i$ codewords over BEC($p, n-n_0$).
\end{proof}

\begin{figure}
\centering
\includegraphics[width=4.2 in]{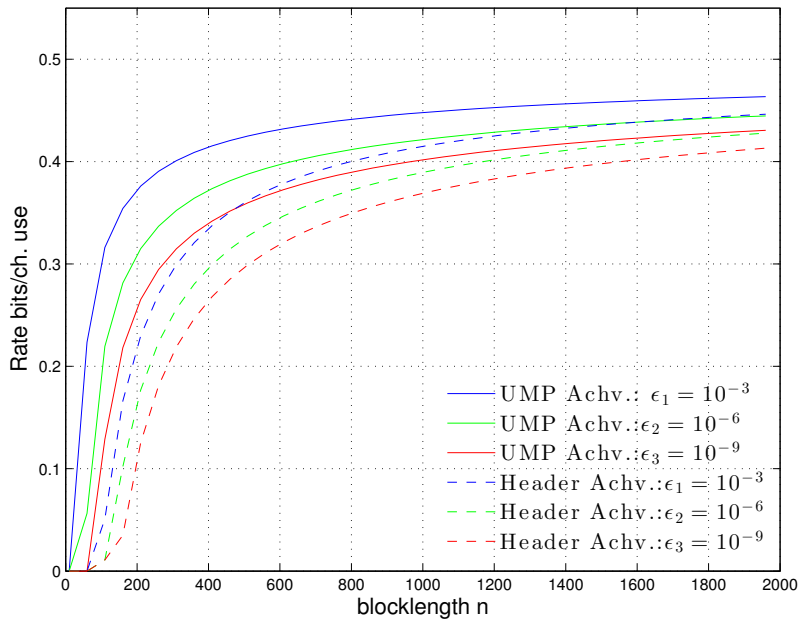}
\caption{Comparison for BEC($0.5,n$) of UMP code in Corollary~\ref{cor:DTBEC} vs. header codes in Corollary~\ref{cor:HeaderBEC}, with $m=3$. For the UMP code the parameter $\blambda =( \frac{1}{3}, \frac{1}{3}, \frac{1}{3})$ was selected. For the header code only values of $n_0$ that can support at least one codeword in every class were considered. The best rate across all of such codes is plotted for each class.}
 \label{fig:BECplot1}
\end{figure}

Letting $m=1$ and $n_0 = 0$ Corollary~\ref{cor:HeaderBEC} reduces to
\begin{align}
\epsilon &\leq \sum_{t=0}^{n} {n \choose t} p^t (1-p)^{n-t} \min \left[1, (M-1)2^{t-n-1}\right] \label{eq:homoBEC}
\end{align}
which is exactly~\cite[Theorem 37]{PPV2010}.
Comparing~(\ref{eq:homoBEC}) and~(\ref{eq:bec}) we can again attribute the $M-1$ term being replaced by $\frac{M_i}{\lambda_i}$ to the presence of multiple classes in the code and $2^{t-n-1}$ being replaced by $2^{t-n}$ to the fact that we use maximum probability of error bound to obtain Corollary~\ref{cor:DTBEC}.

\begin{corollary}[Header Converse Bound, BEC]\label{cor:HeaderConverseBEC}
Any $\Mepsilon$-UMP code over  BEC($p,n$) must satisfy
\begin{align}
\epsilon_i \geq  \sum_{l=0}^{n_0} {n_0 \choose l} p^l (1-p)^{n_0-l} \left(1 - \frac{ 2^{n_0-l}}{m} \right)^+ +  \sum_{l=0}^{n-n_0} {n-n_0 \choose l} p^l (1-p)^{(n-n_0)-l} \left(1 - \frac{2^{(n-n_0)-l}}{M_i} \right)^+
\end{align}
for some $0\leq n_0 \leq n$. 
\end{corollary}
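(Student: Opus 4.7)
The statement is a converse specialized to the header sub-class of UMP codes, so the natural strategy is to mirror the achievability proof of Corollary~\ref{cor:HeaderBEC}: decompose any header-based code into two independent homogeneous sub-codes on disjoint BEC segments, and apply the sharp single-code BEC converse of \cite[Theorem~23]{Polaynskiy2013} to each piece. This is exactly the pattern used to establish the BSC analog in Corollary~\ref{cor:HeaderConverseBSC}, but with the BSC converse \cite[Theorem~35]{PPV2010} replaced by its BEC counterpart.

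Concretely, I would fix a header length $n_0 \in \{0,1,\dots,n\}$ and view the header-based UMP code as the concatenation of (i) a homogeneous ``header'' code of size $m$ carried on the first $n_0$ channel uses and (ii), for each class $i$, a homogeneous ``data'' code of size $M_i$ carried on the remaining $n-n_0$ channel uses. Because the BEC is memoryless, these two sub-channels are independent BEC$(p,n_0)$ and BEC$(p,n-n_0)$. The BEC converse of \cite[Theorem~23]{Polaynskiy2013} says that any homogeneous code of size $M'$ over BEC$(p,n')$ has error probability at least $\sum_{l=0}^{n'}\binom{n'}{l}p^l(1-p)^{n'-l}\,(1 - 2^{n'-l}/M')^+$, the intuition being that, conditional on $l$ erasures, at most $2^{n'-l}$ of the $M'$ codewords can be disambiguated. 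Applying this bound with $(M',n')=(m,n_0)$ yields the first sum in the statement as a lower bound on the header decoding error $\epsilon_h$, and applying it with $(M',n')=(M_i,n-n_0)$ yields the second sum as a lower bound on the class-$i$ data decoding error $\epsilon_{d,i}$.

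The remaining step is to combine these two per-segment lower bounds into a single lower bound on the overall class-$i$ error $\epsilon_i$. This is where I expect the principal obstacle to lie: independence of the two sub-channels alone gives only $\epsilon_i = \epsilon_h + (1-\epsilon_h)\,\epsilon_{d,i} \ge \max\{\epsilon_h,\epsilon_{d,i}\}$, which is strictly weaker than the additive form in the statement. To recover additivity I would follow the parameterization used in Corollary~\ref{cor:HeaderConverseBSC}: introduce an auxiliary error-budget split $\epsilon_0 \in [0,\epsilon_i]$ with $\epsilon_h \le \epsilon_0$ and $\epsilon_{d,i} \le \epsilon_i - \epsilon_0$ (the existence of such an $\epsilon_0$ is guaranteed by the union-bound inequality $\epsilon_h + \epsilon_{d,i} \ge \epsilon_i$), apply \cite[Theorem~23]{Polaynskiy2013} separately with budgets $\epsilon_0$ and $\epsilon_i - \epsilon_0$, and then sum the resulting lower bounds. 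Taking the infimum over the split $\epsilon_0$ and the choice of $n_0 \in \{0,\dots,n\}$ delivers the claimed existence of some $0\le n_0\le n$ for which the additive lower bound holds.
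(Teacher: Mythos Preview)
Your approach is essentially the paper's: apply the homogeneous BEC converse to the header sub-code on BEC$(p,n_0)$ and to the class-$i$ data sub-code on BEC$(p,n-n_0)$, then combine. The paper's one-line proof cites \cite[Theorem~38]{PPV2010} rather than \cite[Theorem~23]{Polaynskiy2013}, but these are the same BEC converse, and the paper is just as terse as Corollary~\ref{cor:HeaderConverseBSC} on the combination step.

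One remark on your additivity discussion: the inequality $\epsilon_h+\epsilon_{d,i}\ge\epsilon_i$ (which is what $\epsilon_i=\epsilon_h+(1-\epsilon_h)\epsilon_{d,i}$ gives) does \emph{not} guarantee a split with $\epsilon_h\le\epsilon_0$ and $\epsilon_{d,i}\le\epsilon_i-\epsilon_0$; that would require $\epsilon_h+\epsilon_{d,i}\le\epsilon_i$, the opposite direction. The paper does not address this point either, and indeed the BSC analog (Corollary~\ref{cor:HeaderConverseBSC}) is stated with an explicit $\epsilon_0$ parameter precisely to sidestep the issue, whereas the BEC statement here drops it. So your instinct that the clean additive form hides a subtlety is correct; the paper's proof simply does not engage with it.
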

\begin{proof}
The result follows by applying~\cite[Theorem 38]{PPV2010} twice: once to construct a homogenous code with $m$ codewords over BEC($p, n_0$) and again to construct a homogeneous code with $M_i$ codewords over BEC($p, n-n_0$).
\end{proof}

\subsection{On Achievability via Coset Codes}
In this section we address the use of coset codes to construct UMP codes. Motivated by the coset construction of~\cite{book:Gallager} we present a construction where the UMP code is a union of coset codes. This allows efficient encoding. To decode it is, in general, necessary to decode with respect to every sub-code. Thus, decoding complexity scales with the number of message classes, $m$.

\begin{figure}
\centering
\includegraphics[width=4.2 in]{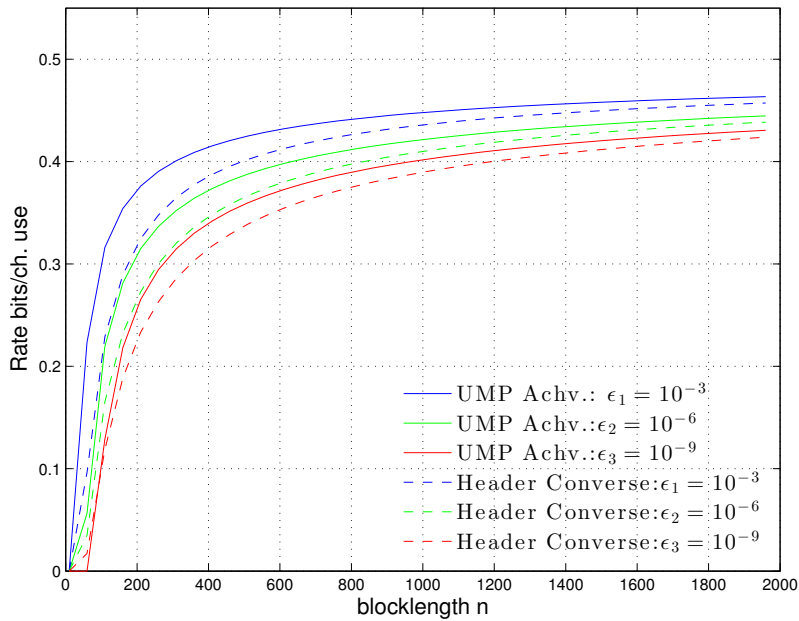}
\caption{Comparison for BEC($0.5,n$) of UMP Code in Corollary~\ref{cor:DTBEC} vs. Header Codes converse (in Corollary~\ref{cor:HeaderBEC}), with $m=3$. For UMP Code the parameter $\blambda =( \frac{1}{3}, \frac{1}{3}, \frac{1}{3})$ was selected. For the header code only values of $n_0$ which can support at least one codeword in every class were considered. The best rate across all of such codes was plotted for each class.}
 \label{fig:BECplot2}
\end{figure}

\begin{theorem}[Achievability via Coset Codes]\label{thm:coset}
Let $k_1, \dots, k_m$ be $m$ positive integers and define
\begin{align*}
M_i =  2^{k_{i}}, \quad i \in \{1, 2, \dots, m\}.
\end{align*}
Then for any $\blambda \in \calL_m$ there exists an $\Mepsilon$-UMP code (average probability of error) where $\calC = \bigcup \calC_i$ over BSC($p,n$) (respectively, BEC($p,n$)) satisfying~(\ref{eq:bsc}) (respectively,~(\ref{eq:bec})) such that each subcode $\calC_i$ is a coset of some linear code. 
\end{theorem}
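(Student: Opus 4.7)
The plan is to inductively build up the UMP codebook $\calC = \bigcup_{i=1}^{m} \calC_i$ one class at a time, where each $\calC_i$ is a coset of a random linear code. Draw $G_i$ uniformly at random from $\{0,1\}^{n\times k_i}$ and $b_i$ uniformly at random from $\{0,1\}^n$, all mutually independent, and set $\calC_i = \{G_i s + b_i : s \in \{0,1\}^{k_i}\}$, so $|\calC_i| = 2^{k_i} = M_i$. Two properties of this ensemble are key: (i) each codeword $c_{i,m}$ is marginally uniform on $\{0,1\}^n$ because of the random shift $b_i$; and (ii) any pair of distinct codewords $(c_{i,m}, c_{i,m'})$ is pairwise independent and jointly uniform on $\{0,1\}^n \times \{0,1\}^n$. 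These are exactly the marginal and pairwise properties relied upon by the DT-style analysis behind Corollary~\ref{cor:DTbnd}, which is why pairwise independence suffices to match its bound.

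The decoder is the sequential threshold rule of Theorem~\ref{thm:DTbnd} with constant thresholds $\tau_j(x) \equiv M_j/\lambda_j$: sweep through the codewords class-by-class in the order $1, 2, \dots, m$, and output the first codeword $c \in \calC_j$ that satisfies $\imath(c;Y) > \log(M_j/\lambda_j)$. Crucially, an error in class $i$ depends only on $\calC_1, \dots, \calC_i$, never on classes yet to be constructed, so we may fix the classes one at a time without disturbing the already-achieved bounds. Having fixed $(G_1, b_1), \dots, (G_{i-1}, b_{i-1})$ such that each of the first $i-1$ classes meets the bound~(\ref{eq:bsc}) (resp.~(\ref{eq:bec})), draw $(G_i, b_i)$ from its ensemble and show that the expected class-$i$ average error (conditioned on the earlier classes) also meets the bound; by the probabilistic method some realization achieves it, which we fix before moving to class $i+1$.

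What remains is the expected-error calculation for class $i$. The threshold-failure term $\bbP[\imath(c_{i,m};Y) \leq \log \tau_i]$ and the within-class collision terms $\bbP[\imath(c_{i,m'};Y) > \log \tau_i]$ for $m'\neq m$ reduce directly to their i.i.d.\ random-coding counterparts via (i) and (ii). The subtle term is the cross-class collision $\sum_{j<i}\sum_{w'\in\calM_j} \bbP[\imath(c_{j,w'};Y) > \log \tau_j]$, in which the $c_{j,w'}$ are \emph{fixed} vectors while $Y$ is the channel output produced by the random $c_{i,m} = G_i m + b_i$. This is where we must lean on channel symmetry: for the BSC, $Y = c_{i,m}\oplus N$ with $N$ independent of $c_{i,m}$, so uniformity of $c_{i,m}$ makes $Y$ uniform on $\{0,1\}^n$, which in turn makes the Hamming distance $d_H(c_{j,w'},Y) \sim \mathrm{Bin}(n,1/2)$ irrespective of the particular $c_{j,w'}$; for the BEC, given the erasure pattern, uniformity of $c_{i,m}$ makes the non-erased positions of $Y$ uniform so that $c_{j,w'}$ agrees with them with probability $2^{-r}$ (where $r$ is the number of non-erased coordinates), again independent of $c_{j,w'}$. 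Thus each cross-class probability matches its i.i.d.\ random-coding value, and summing the three contributions reproduces the bound in~(\ref{eq:bsc}) or~(\ref{eq:bec}). The main technical obstacle is precisely this cross-class step: if the channel were not symmetric, fixing earlier classes' codewords could skew the collision probabilities and break the clean inductive argument.
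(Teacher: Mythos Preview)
Your proposal is correct and follows essentially the same route as the paper: sequential (inductive) construction of coset subcodes with random $(G_i,b_i)$, the same sequential threshold decoder with $\tau_j\equiv M_j/\lambda_j$, the same three-term error decomposition, and the same crucial observation that channel symmetry makes the random codeword $c_{i,m}$ induce a uniform (or uniform-on-non-erased-positions) output $Y$, so the cross-class collision probability against the already-fixed $c_{j,w'}$ equals its i.i.d.\ value. The paper phrases the cross-class step as bounding by $\sup_{x^n}\bbP[\imath(x^n;Y^n)>\log\tau_j]$ to match the form of~(\ref{eq:DTbnd}) before specializing, whereas you go straight to the BSC/BEC evaluation, but this is only a cosmetic difference.
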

\begin{proof}
We will show that under the stated conditions, we can construct $\calC_1, \calC_2, \dots, \calC_m$ such that $\calC$ satisfies~(\ref{eq:DTbnd}). The rest of the Theorem follows since~(\ref{eq:bsc}) and~(\ref{eq:bec}) can be obtained by specializing~(\ref{eq:DTbnd}) appropriately.
 
{\bf Code Construction:}
We will construct the code as follows: Let $G_i$ be a $k_i \times n$ generator matrix and $v_i$ be a $1\times n$ coset shift. Define $\calM_i := \{u_i : u_i \mbox{ is a } 1\times k_i \mbox{ binary vector}\}$. Then $\calC_i :=  \{u_iG_i + v_i : u_i \in \calM_i \}$  where multiplications and additions are over $\bbF_2$.
 
To show such code exists we sequentially generate each $(G_i, v_i)$ independently at random starting with $(G_1, v_1)$. We will show that the resulting code has good error properties and select some $(G_i, v_i)$ from the ensemble that meets the expected performance.   

{\bf Decoding Rule:}
We use a sequential threshold decoder, as in the UMP dependence testing bound, 
\begin{align}
\rvg(y^n) \!:= \!\argmin_{i,w}\left\{c_{i,w} : i_{X^n;Y^n}(c_{i,w};y^n) \!>\! \log \tau_i \right\}, 
\end{align}
where $ i \in \{1,2, \dots, m\}$, $w \in \{1,2, \dots, M_i\}$, $\tau_i = \frac{M_i}{\lambda_i}$ for all $i$, $P_{X^n, Y^n} (x^n, y^n)=  P_{X^n}(x^n)W^n(y^n|x^n)$ and $P_{X^n}$ is the uniform distribution on $\bbF_2^n$.

{\bf Error Analysis:} We will prove that the error for $\calC$ satisfies~(\ref{eq:DTbnd}) by induction on sub-codes. Consider the base case, $i=1$. We generate entries of $G_1$ and $v_1$ in an \iid manner according to a Bernoulli($\frac{1}{2}$) distribution. Let $c_{1,1} = u_1 G_1 +v_1$ be the codeword sent and $\tilu_1G_1 + v_1$ be some other codeword. The two codewords are pairwise independent and so we have that for some $G_1$ and $v_1$,
\begin{align}
\epsilon_1 &\leq \bbP \left[i_{X^n;Y^n}(X^n;Y^n) \leq \log\tau_1 \right] +(M_1 -1) \bbP\left[i_{X^n;Y^n}(\barX^n; Y^n) >  \log \tau_1 \right]
\end{align}
where $P_{\barX^n; Y^n}(x^n, y^n) = P_{X^n}(x^n) P_{X^n}W^n(y^n)$.

Now, suppose $(G_1, v_1), \dots, (G_{i-1}, v_i)$ are fixed. Generate entries of $(G_i, v_i)$ in an \iid manner according to a Bernoulli($\frac{1}{2}$) distribution. Suppose the random vector $u_iG_i+v_i$ is the true codeword sent. Then the probability that the information density of the true codeword and the output vector  is lower than the decoding threshold is bounded by, 
\begin{align}
 &\bbP \left[i_{X^n;Y^n}(u_iG_i+v_i;Y^n)  \leq \log\tau_i \right] = \bbP \left[i_{X^n;Y^n}(X^n;Y^n) \leq \log\tau_i \right]. 
\end{align}
The probability of confusing $u_i G_i + v_i$ with some other $\tilu_iG_i+v_i$ is, by pairwise independence and the uniform distribution induced,
\begin{align}
&\bbP  \left[ \left. i_{X^n;Y^n}(\tilu_iG_i+v_i;Y^n) >\log\tau_i \right| \tilu_i \neq u_i\right] = \bbP\left[i(\bar{X}^n; Y^n) >  \log \tau_i \right].
\end{align}

Finally to bound the probability of confusion with $\tilx^n = \tilu_jG_j+v_j$, a codeword in another class $j<i$, observe that $u_iG_i+v_i$ induces an equiprobable distribution on $Y^n$ and
\begin{align}
 &\bbP  \left[  i_{X^n;Y^n}(\tilx^n;Y^n) >\log\tau_j \right] \leq \sup_{x^n \in \rvA} \bbP\left[i_{X^n;Y^n}(x^n; Y^n) >  \log \tau_j \right] . \label{eq:DTLinear}
\end{align}

Since, given a random $(G_i,v_i)$ pair a codeword $u_iG_i+v_i$ satisfies these bounds for all $u_i \in \calM_i$,  the error averaged over all codewords must too. So there must exist a $(G_i, v_i)$ pair such that
\begin{align}
\epsilon_i &\leq \bbP \left[i_{X^n;Y^n}(X^n;Y^n) \leq \log\tau_i\right] +(M_i -1) \sup_x \bbP\left[i_{X^n;Y^n}(x^n; Y^n) >  \log \tau_i \right] \notag\\
& \quad+ \sum_{j = 1}^{i-1} M_j \sup_x \bbP\left[i_{X^n;Y^n}(x^n; Y^n) >  \log \tau_j \right]  
\end{align}
which shows~(\ref{eq:DTbnd}) for equiprobable $P_{X^n}$ and is sufficient to show~(\ref{eq:bsc}) for the BSC and~(\ref{eq:bec}) for the BEC.
\end{proof}

\section{Asymptotic Theorems}
\label{sec:asymptotic}
In this section we state two asymptotic theorems for the DMC. We analyze fixed error asymptotics and moderate deviations asymptotics for UMP codes and show that in both cases the performance loss compared to a homogeneous code with equivalent parameters is captured by some $\Lambda \in \calL$ (cf. equation~(\ref{eq:calL})). 

In our theorem statements we allow the number of classes $m_n$ to scale as a function of block length. One motivation for such scaling is the use of UMP codes for joint source-channel codes as in~\cite{Csiszar1982}. Note that in~\cite{Csiszar1982} the number of UMP classes needed is connected with the the number of type classes of the source. Thus, for a discrete memoryless source $m_n$ scales as a polynomial in block length. Examples of other interesting sources include~\cite{CS1996}, where the number of type classes scales exponentially in $\sqrt{n}$.

Recall that $W^n$ is a DMC with input alphabet $\calA$ and output alphabet $\calB$ if we can write,
\begin{align}
W^n(y^n|x^n)=\prod_{j=1}^n W(y_j|x_j).
\end{align}
We will apply single-shot bounds of Section~\ref{sec:finite} taking $W^n$ as the channel. We take $\rvA = \calA^n$ (respectively $\rvB = \calB^n$) to be the channel input (respectively output) alphabet. 

\begin{theorem}[Fixed Error UMP Asymptotics]
\label{thm:asmp}
Suppose that $W$ is such that $V(P^\ast_{X},W) > 0$ for all $P^\ast_X \in \Pi$. Let 
\begin{itemize}
\item $m_n$ be a sequence of class sizes (growing arbitrarily fast) in $n$,
\item $\epsilon_i$ be a sequence of error probabilities such that
\begin{align*}
\inf_{ i\in\mathbb{N}} \epsilon_i > 0 \mbox{ and } \sup_{ i\in\mathbb{N}} \epsilon_i < 1.
\end{align*}
\end{itemize}
Then, for any $\Lambda \in \calL$ there is a sequence of $\left( (M_{n,i})_{i=1}^{m_n},  (\epsilon_i)_{i=1}^{m_n} \right)$-UMP codes such that
\begin{align}
\log M_{n,i} \geq nC - \sqrt{nV_{\epsilon_i}} Q^{-1}(\epsilon_i) +  \theta_i(n)  - \log \frac{1}{\Lambda_{n,i}}.
\end{align}
Conversely, any sequence of $\left( (M_{n,i})_{i=1}^{m_n},  (\epsilon_i)_{i=1}^{m_n} \right)$-UMP codes must satisfy
\begin{align}
\log M_{n,i} \leq nC - \sqrt{nV_{\epsilon_i}} Q^{-1}(\epsilon_i)  + \tilde{\theta}_i(n) -  \log \frac{1}{\Lambda_{n,i} }
\end{align}
for some $\Lambda \in \calL$. 

The remainder terms $\theta_i(n)$ and $\tilde{\theta}_i$ satisfy
\begin{align}
K(\underline{e}, \overline{e},W) \leq \theta_i(n) \mbox{ and } \tilde{\theta}_i \leq \frac{1}{2} \log n +\bar{K}(\underline{e}, \overline{e},W)  \label{eq:remainder}
\end{align}
where $\underline{e} = \inf_{ i\in\mathbb{N}} \epsilon_i$, $\overline{e} = \sup_{ i\in\mathbb{N}} \epsilon_i$, and $K(\underline{e}, \overline{e},W), \bar{K}(\underline{e}, \overline{e},W)$ are constants which depend on $\underline{e}$, $\overline{e}$, and $W$.

If $W$ is symmetric and singular in the sense of~\cite{preprint:AltWag2013} the remainder terms for the achievability statement further satisfy
\begin{align}
 \theta_i(n)  \leq \bar{K}(\underline{e}, \overline{e},W).  \label{eq:singular}
\end{align}
\end{theorem}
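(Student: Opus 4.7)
The plan is to apply the single-shot UMP bounds of Section~\ref{sec:finite} to the product channel $W^n$, choosing each class's input distribution as an $n$-fold product of an appropriately selected capacity-achieving distribution $P^\ast_X \in \Pi$. The central observation is that both Corollary~\ref{cor:DTbnd} and Theorem~\ref{thm:meta2} depend on $M_{n,i}$ only through the ratio $M_{n,i}/\Lambda_{n,i}$, so the analysis for each class reduces to familiar homogeneous asymptotics applied to an \emph{effective} code size of $\log(M_{n,i}/\Lambda_{n,i})$. When $\Pi$ is not a singleton, I would choose the CAD minimizing $V(P,W)$ when $\epsilon_i<1/2$ and maximizing it when $\epsilon_i\geq 1/2$, using a Bernoulli-mixture construction as in~\cite{TomTan2013} to close any residual gap.

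For the achievability direction, I would fix $\Lambda \in \calL$, choose $P_{X_i^n} = (P^\ast_X)^n$ for every $i$ (so the induced output distribution is common to all classes, meeting the hypothesis of Corollary~\ref{cor:DTbnd}), and apply the compact bound~(\ref{eq:DTcompact}) with $\lambda_i=\Lambda_{n,i}$. The information density $\imath_{X^n;Y^n}(X^n;Y^n)$ is an i.i.d.\ sum with mean $nC$ and variance $nV(P^\ast_X,W)>0$, so setting $\log M_{n,i} = nC - \sqrt{nV_{\epsilon_i}}\,Q^{-1}(\epsilon_i) + K(\underline{e},\overline{e},W) - \log(1/\Lambda_{n,i})$ for a sufficiently large constant $K$, a Berry-Esseen bound on $\bbP[\imath_{X^n;Y^n}\leq \log(M_{n,i}/\Lambda_{n,i})]$ together with the standard $\bbE[\exp\{-[\cdot]^+\}]$ estimate---exactly as in the proof of~\cite[Thm.~46]{PPV2010}---makes the right-hand side of~(\ref{eq:DTcompact}) at most $\epsilon_i$. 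Uniformity in $i$ follows because the Berry-Esseen constant depends only on $W$ while $Q^{-1}(\epsilon_i)$ is uniformly bounded under the hypothesis $\underline{e}\leq\epsilon_i\leq\overline{e}$.

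For the converse direction, I would apply~(\ref{eq:converse1}) of Theorem~\ref{thm:meta2} with $Q_{Y^n} = (P^\ast W)^n$, the $n$-fold product of the capacity-achieving output distribution. This yields a single $\Lambda \in \calL$ such that, for every $i$,
\begin{align*}
M_{n,i} \;\leq\; \frac{\Lambda_{n,i}}{\displaystyle\inf_{P^i_{X^n}} \beta_{1-\epsilon_i}\!\bigl(P^i_{X^n Y^n},\, P^i_{X^n} \times Q_{Y^n}\bigr)}.
\end{align*}
A Berry-Esseen argument on the log-likelihood ratio $\log \rmd P_{Y^n|X^n}/\rmd Q_{Y^n}$, following the derivation of~\cite[Thm.~48]{PPV2010} and~\cite{TomTan2013}, gives $-\log \beta_{1-\epsilon_i} \leq nC - \sqrt{nV_{\epsilon_i}}\,Q^{-1}(\epsilon_i) + \tfrac{1}{2}\log n + \bar{K}(\underline{e},\overline{e},W)$ uniformly in $i$; taking logs and rearranging yields the stated converse, with the remainder constants having exactly the form asserted in~(\ref{eq:remainder}).

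For singular symmetric $W$, the $\tfrac{1}{2}\log n$ term in the achievability is removed by substituting the refined analysis of Altu\u{g}-Wagner~\cite{preprint:AltWag2013} in place of the Berry-Esseen step. Since that analysis depends only on the distribution of the per-letter information density under $P^\ast_X$---and the UMP reduction to the effective size $M_{n,i}/\Lambda_{n,i}$ leaves this distribution untouched---it carries over verbatim to deliver~(\ref{eq:singular}). The main obstacle I expect is the uniform-in-$i$ bookkeeping of Berry-Esseen remainders, ensuring a single constant $K(\underline{e},\overline{e},W)$ suffices for every class no matter how fast $m_n$ grows; a secondary difficulty, when $|\Pi|>1$, is constructing a single input distribution whose dispersion matches $V_{\epsilon_i}$ simultaneously for classes with $\epsilon_i$ on both sides of $1/2$, which I would handle by the aforementioned Bernoulli-mixture trick applied class by class.
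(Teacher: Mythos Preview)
Your high-level strategy---reduce each class to a homogeneous problem with effective size $M_{n,i}/\Lambda_{n,i}$ via the finite-blocklength bounds, then apply Berry--Esseen---is exactly what the paper does. Two of your concrete choices, however, would not deliver the claimed remainder terms for a general DMC.

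\textbf{Achievability.} The compact form~(\ref{eq:DTcompact}) is valid only when the conditional distribution of $\imath(x;Y)$ does not depend on $x$, which fails for non-symmetric channels. The paper therefore works from Theorem~\ref{thm:DTbnd} itself, choosing thresholds $\tau_i(x^n)=+\infty$ whenever $\var[\imath(X^n;Y^n)\mid X^n=x^n]<nV/2$ and $\tau_i(x^n)=\tilM_{n,i}$ otherwise. This is precisely the device inside the proof of~\cite[Thm.~46]{PPV2010} that you cite, and it is what makes Lemma~47 of~\cite{PPV2010} (Theorem~\ref{thm:PPV} here) applicable to the $\sup_{x^n}$ term. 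If you start from~(\ref{eq:DTcompact}) or even~(\ref{eq:DTlambda}) with a constant threshold, that supremum cannot be controlled uniformly over sequences $x^n$ of small conditional variance.

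\textbf{Converse.} Taking $Q_{Y^n}=(P^\ast W)^n$ is the original PPV choice; it gives a valid converse but not the sharp $\tfrac{1}{2}\log n + O(1)$ of~(\ref{eq:remainder}) for arbitrary DMCs, because codewords of type far from $\Pi$ can have conditional divergence variance far from $V_{\epsilon_i}$. The paper instead relaxes $\beta_{1-\epsilon_i}$ to an information-spectrum divergence via~\cite[Lem.~2]{TomTan2013} and uses the Tomamichel--Tan mixture output distribution (an $n^{-1/2}$-net around $Q^\ast$ together with a uniform mixture over all $(PW)^n$, $P\in\calP_n$). Your citation of~\cite{TomTan2013} is the right one, but the product $Q_{Y^n}$ you wrote is not the distribution that makes their argument work.

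\textbf{Singular symmetric case.} Despite the wording of the theorem, the refinement~(\ref{eq:singular}) is on the \emph{converse} side: the paper's proof substitutes the output distribution of~\cite{preprint:AltWag2013} for the Tomamichel--Tan one and reruns the converse. The DT-based achievability already has $\theta_i(n)\ge K$, an $O(1)$ term, so there is no $\tfrac{1}{2}\log n$ to remove there.
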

The proof outline is as follows. We follow the approach of~\cite[Theorem 45]{PPV2010}. To show achievability we use the UMP achievability bound of Theorem~\ref{thm:DTbnd} and bound each term in~(\ref{eq:DTbnd}) using the Berry-Esseen theorem. The converse follows by using Theorem~\ref{thm:meta2} together with the approach of Tomamichel and Tan~\cite{TomTan2013} to obtain~(\ref{eq:remainder}) and the approach of Altu\u{g} and Wagner~\cite{preprint:AltWag2013} to obtain~(\ref{eq:singular}). See Appendix~\ref{appnd:asymptotic} for proof.

\begin{remark}
For $m=1$ Theorem~\ref{thm:asmp} reduces to the best results known in literature for most DMCs. A notable exception is the achievability bound when $W$ is non-singular for which~\cite{thesis:Polyanskiy} showed using the RCU bound that
\begin{align}
\frac{1}{2} \log n + K(\underline{e}, \overline{e},W) \leq \theta_i(n). 
\end{align}
This extension is not possible in our case due to the previously mentioned difficulty of extending the RCU bound to the framework of UMP codes.
\end{remark}

For a general DMC and $m$ growing faster than $\poly(n)$, there is a tradeoff in the sizes of different message classes of a UMP code. Two particularly interesting regimes are $m$ growing exponentially in $\sqrt{n}$ and $m$ growing exponentially in $n$. In these two regimes the tradeoffs are in the dispersion and capacity terms (respectively). 

For a symmetric singular DMC and $m$ growing as a function of $n$ there is a tradeoff in the sizes of different message classes of a UMP code. A particular regime of interest is $m_n=\poly(n)$ where the tradeoffs become apparent in the third-order $O(\log n)$ term. For $m$ constant no meaningful results can be proved since the current normal approximations do not quantify the constant term even for homogeneous codes. 

To state our next result we define a number of regularity conditions on two positive sequences $\left((\rho_n)_{n=1}^\infty, (\lambda_n)_{n=1}^\infty\right)$.
\begin{enumerate}
\item The ``homogenous'' moderate deviations condition is satisfied if
\begin{align}
\rho_{n} \to 0, \mbox{ and } n\rho_{n}^2 \to \infty  \label{eq:rho}
\end{align}
\item The ``positivity conditions'' is satisfied if
\begin{align}
(\rho_{n} - \lambda_n) >0 \label{eq:positive}
\end{align}
for all $n$ sufficiently large.
\item The ``speed of convergence'' condition is satisfied if
\begin{align}
\liminf_{n\to \infty}n(\rho_{n} - \lambda_n)^2 = \infty. \label{eq:infinity}
\end{align}
\end{enumerate}
Note, the fact that $\lambda_n > 0$  for $n$ sufficiently large together with homogeneous and positivity conditions imply $(\rho_n - \lambda_n) \to 0$. Sequences that satisfy all three of these conditions are said to satisfy moderate deviations regularity conditions.

\begin{theorem}[Moderate Deviations UMP Asymptotics]
\label{thm:md}
Suppose that $W$ is such that $V(P^\ast_{X},W) > 0$ for all $P^\ast_X \in \Pi$. Fix $\Lambda \in \calL$ and a collections of sequences $\left( (\rho_{n,i})_{n=1}^\infty \right)_{i=1}^\infty$ such that for each $i$
the pair of sequences $\left((\rho_{n,i})_{n=1}^\infty, \left(\frac{1}{n}\log\frac{1}{\Lambda_{n,i}}\right)_{n=1}^\infty \right)$ satisfy moderate deviations regularity conditions. Then, there exists a sequence of $\MepsilonMD$-UMP codes satisfying
\begin{align}
 M_{n,i} = \lfloor 2^{nC - n\rho_{n,i}} \rfloor \label{eq:MDrate}
\end{align}
and
\begin{align}
\limsup_{n \to \infty} \frac{1}{n(\rho_{n,i} - \frac{1}{n} \log \frac{1}{\Lambda_{n,i}})^2} \log \epsilon_{n,i} \leq -\frac{1}{2V_{\min}}.
\end{align}

Conversely, consider a sequence of $\MepsilonMD$-UMP codes satisfying ~(\ref{eq:rho}) and~(\ref{eq:MDrate}). Then, there exists some $\Lambda \in \calL$ such that for each $i$ the following holds: 
\begin{itemize}
\item if $\left((\rho_{n,i})_{n=1}^\infty, \left(\frac{1}{n}\log\frac{1}{\Lambda_{n,i}}\right)_{n=1}^\infty \right)$ satisfies moderate deviations regularity conditions then,
\begin{align}
&\liminf_{n \to \infty} \frac{1}{n(\rho_{n,i} - \frac{1}{n} \log \frac{1}{\Lambda_{n,i}})^2} \log \epsilon_{n,i} \geq -\frac{1}{2V_{\min}},
\end{align}
\item otherwise
\begin{align}
&\liminf_{n \to \infty} \epsilon_{n,i} > 0.
\end{align}
\end{itemize}
\end{theorem}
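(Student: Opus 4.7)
The plan is to adapt the homogeneous moderate-deviations arguments of~\cite{AltWag2010, PolVer2010} by feeding the UMP achievability Corollary~\ref{cor:DTbnd} and the meta-converse Theorem~\ref{thm:meta2} into the standard moderate-deviations tail bounds for sums of i.i.d.\ random variables. The key observation is that in both the achievability and converse bounds, the simplex parameter $\blambda$ (here, $\Lambda_{n,i}$) enters additively as $-\log(1/\Lambda_{n,i})$ in the effective decoding threshold. Writing $\eta_{n,i} := \rho_{n,i} - \tfrac{1}{n}\log\tfrac{1}{\Lambda_{n,i}}$, the regularity conditions~(\ref{eq:rho})--(\ref{eq:infinity}) are exactly what is needed to guarantee (i) $\eta_{n,i} > 0$ eventually and (ii) $n\eta_{n,i}^2 \to \infty$, which together place the tail event at the moderate-deviations scale of $\imath_{X^n;Y^n}(X^n;Y^n)$.

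For achievability, fix a capacity-achieving input distribution $P^\ast_X$ attaining $V_{\min}$, take $P_{X_i^n} = (P^\ast_X)^n$ for every $i$, and apply~(\ref{eq:DTcompact}) with $\lambda_i = \Lambda_{n,i}$ and $M_i = M_{n,i} = \lfloor 2^{nC - n\rho_{n,i}}\rfloor$. The threshold inside the expectation is $\log(M_{n,i}/\Lambda_{n,i}) = nC - n\eta_{n,i} + O(1)$, and the information density decomposes as a sum of $n$ i.i.d.\ terms with mean $C$ and variance $V_{\min}$. Splitting the expectation at the threshold upper-bounds~(\ref{eq:DTcompact}) by the moderate-deviations probability
\begin{align*}
\bbP\!\left[\tfrac{1}{n}\sum_{j=1}^n (Z_j - C) \leq -\eta_{n,i} + o(\eta_{n,i})\right]
\end{align*}
plus an exponentially smaller ``above-threshold'' remainder (handled by the usual $\int_0^\infty e^{-u}P[\imath>T+u]\,du$ argument). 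The moderate-deviations principle for i.i.d.\ sums then upper-bounds the logarithm of this probability by $-\tfrac{n\eta_{n,i}^2}{2V_{\min}}(1+o(1))$; dividing by $n\eta_{n,i}^2$ and taking $\limsup$ yields the achievability claim uniformly in $i$.

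For the converse, Theorem~\ref{thm:meta2} hands us, for every $n$, a probability vector $\bm\lambda^{(n)}\in\calL_{m_n}$ such that
\begin{align*}
M_{n,i}\,\beta_{1-\epsilon_{n,i}}\bigl(P^i_{X^n Y^n},\, P^i_{X^n}\times (P^\ast_X W)^n\bigr) \leq \lambda^{(n)}_i
\end{align*}
for each $i \leq m_n$; stacking these rows (padded with zeros for $i > m_n$) produces a candidate $\Lambda \in \calL$. The standard $\beta$-to-information-spectrum inequality, applied with $\gamma = (M_{n,i}/\Lambda_{n,i})\,2^{-\tau}$, then rewrites this as
\begin{align*}
\epsilon_{n,i} \geq \bbP\!\left[\imath_{X^n;Y^n}(X^n;Y^n) \leq nC - n\eta_{n,i} - \tau + O(1)\right] - 2^{-\tau}.
\end{align*}
Choosing $\tau$ sublinear in $n\eta_{n,i}^2$ so that $2^{-\tau}$ is negligible against the target exponent, the matching moderate-deviations lower bound for i.i.d.\ sums delivers the claimed $\liminf$. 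If the positivity condition fails infinitely often, the deviation in the tail becomes non-vanishing and the CLT forces $\liminf_n \epsilon_{n,i} > 0$; if the speed condition fails, the ordinary CLT on $\imath$ again yields the same conclusion.

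The main obstacle I anticipate is the simultaneous uniformity in $i$ of the $\Lambda \in \calL$ produced by the converse: Theorem~\ref{thm:meta2} only guarantees the existence of some $\bm\lambda^{(n)}$ for each fixed $n$, and one must verify that the concatenation into a doubly-indexed array genuinely lies in $\calL$ as defined in~(\ref{eq:calL}) and governs every class at the same time. A secondary technical difficulty is the choice of the slack $\tau$ in the $\beta$-to-information-spectrum step: it must be small enough not to consume the moderate-deviations budget $n\eta_{n,i}^2$, yet large enough that $2^{-\tau}$ does not dominate the lower bound on $\epsilon_{n,i}$. The resulting balance is precisely what dictates the three-way case split appearing in the theorem statement.
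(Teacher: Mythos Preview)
Your achievability sketch is essentially the paper's: both feed the DT-type bound (you cite~(\ref{eq:DTcompact}), the paper uses the equivalent~(\ref{eq:DTforMD})) with an i.i.d.\ capacity-achieving input, split at a threshold $nC - \gamma n\eta_{n,i}$, and invoke the moderate-deviations principle for i.i.d.\ sums. One minor caution:~(\ref{eq:DTcompact}) requires the conditional CDF of $\imath(x;Y)$ not to depend on $x$, which fails for generic DMCs; you should instead start from~(\ref{eq:DTlambda}) or~(\ref{eq:DTforMD}) and handle the $\sup_{x^n}$ by the same change-of-measure bound.

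Your converse, however, has a genuine gap. When you apply Theorem~\ref{thm:meta2} with $Q_{Y^n}=(P^\ast_X W)^n$ and pass to the information-spectrum inequality, the resulting probability
\[
\bbP\!\left[\imath_{X^n;Y^n}(X^n;Y^n)\le nC-n\eta_{n,i}-\tau\right]
\]
is taken under the \emph{code's} joint distribution $P^i_{X^nY^n}$. The codeword $X^n$ is \emph{not} i.i.d., so ``the matching moderate-deviations lower bound for i.i.d.\ sums'' does not apply. This is precisely the difficulty that the paper's converse is built to overcome: it first decomposes each class into constant-composition sub-codes (Corollary~\ref{cor:constant}), applies the meta-converse with the type-matched product output $Q^n_{P_0,Y}=(P_0W)^n$, absorbs the type-counting penalty $|\calA|\log(n+1)$, and then minimizes over types. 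Only after this reduction is the information density a sum of \emph{independent} terms with controllable mean $nI(P_0,W)$ and variance $nV(P_0,W)$, to which a refined Berry--Esseen lower bound (Rozovsky's theorem) applies. A further compactness argument on the minimizing type sequence $P_{n,i}\to P_i^\ast$ then separates the cases $I(P_i^\ast,W)<C$ (where Chebyshev already gives error bounded away from zero) and $I(P_i^\ast,W)=C$ (where $V(P_{n,i},W)\to V(P_i^\ast,W)\ge V_{\min}$ and Rozovsky yields the exponent $-\gamma^2/(2V_{\min})$). The obstacle you flagged---stitching the per-$n$ simplex vectors into a single $\Lambda\in\calL$---is in fact routine; the constant-composition reduction is the step your outline is missing.
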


Here the tradeoffs are not apparent if $m_n$ growing exponentially in $\sqrt{n}$ since then $\frac{1}{n}\log \frac{1}{\Lambda_{n,i}} = o(\rho_{n,i})$ for all valid $\rho_{n,i}$. If it is growing any faster, however, we can observe degradation in the speed of convergence to the moderate deviations exponent. Thus, the moderate deviations setting interpolates the loss observed for fixed error asymptotic and error exponent regimes.

\section{Concluding Remarks}
\label{sec:conclusion}

%
Throughout this paper we have used the set $\calL_m$ and its asymptotic counterpart $\calL$ to capture the tradeoffs between different message classes in a UMP code. It may be useful to give an intuitive interpretation of the $\calL_m$ set. We interpret each element of $\calL_m$ as capturing a partitioning of `resources' (e.g., decoding space) between different classes. This is the main idea behind our converse bound of Theorem~\ref{thm:meta2}; there the common output distribution $Q_Y$ is used to tie the $m$ sub-codes together. The same parameterization appears in our achievability bounds of Corollary~\ref{cor:DTbnd}  and Theorem~\ref{thm:kappabeta}. This suggests that such resource `sharing' can be accomplished in a rather efficient way. Next, we may wonder if UMP codes parameterized by one element of $\calL_m$ are better or worse than codes parameterized by another element of $\calL_m$. To answer this question it is helpful to relate them to some operational quantity. This is discussed next.

\subsection{Operational Meaning of $\calL_m$}
Recall from Section~\ref{sec:problem} that one measure of ``goodness'' proposed for UMP codes is the expected rate (see Definition~\ref{def.expectedRate}). Suppose we fix $m$ error probability constraints $(\epsilon_i)_{i=1}^m$ and study the corresponding possible sizes of $m$ message classes. The finite block length bounds tell us that given the $(\epsilon_i)_{i=1}^m$ constraints there is a family of UMP codes parametrized by $\calL_m$. W nat wish to maximize the expected rate over this family of codes. Ignoring the third order terms in Theorem~\ref{thm:asmp} we obtain the following normal approximation for the size of each code at finite $n$ for a given $\blambda \in \calL_m$,
\begin{align}
\log M_i \approx nC - \sqrt{nV} Q^{-1} (\epsilon_i) - \log \frac{1}{\lambda_{i}}, \quad 1\leq i \leq m. 
\end{align}
Let us fix some prior probabilities $(\mu_1, \dots, \mu_m)$ on the $m$ message classes and consider maximizing the expected rate given $(\epsilon_i)_{i=1}^m$,
\begin{align}
 & \max_{\blambda \in \calL_m} R ({\bm \mu})=  \max_{\blambda \in \calL_m} \frac{1}{n} \sum_{i=1}^m \mu_i (\log M_i - \log \mu_i) \\
 & \approx  \max_{\blambda \in \calL_m} \frac{1}{n}\sum_{i=1}^m \mu_i ( nC - \sqrt{nV} Q^{-1} (\epsilon_i) - \log \frac{1}{\lambda_{n,i}} - \log\mu_i )\label{eqn:subs_logM}
\end{align}
The first two terms in \eqref{eqn:subs_logM} are constant since they do not involve $\bm{\lambda}$. Let $A = \frac{1}{n}\sum_{i=1}^m \mu_i (nC- \sqrt{n} Q^{-1}(\epsilon_i))$. Then, we have 
\begin{align}
  \max_{\blambda \in \calL_m} R ({\bm \mu}) &=  A+\max_{\blambda\in\calL }\frac{1}{n} \sum_{i=1}^m \mu_i \log \lambda_{i} -  \frac{1}{n}\sum_{i=1}^m \mu_i \log \mu_{i} \\
  &= A + \frac{1}{n}\sum_{i=1}^m \mu_i  \log\mu_i  - \frac{1}{n} \sum_{i=1}^m \mu_i \log \mu_{i} = A. \label{eqn:prop_betting}
\end{align}
Equation~\eqref{eqn:prop_betting} follows from the fact that the $\blambda$ that maximizes the expected rate over $\calL_m$ is given by proportional betting with $\lambda_i = \mu_i$ for all $1 \le  i\le m$~\cite[Theorem 6.1.2]{book:CT}. In other words, the UMP code that maximizes the expected rate given a prior message class distribution is one with $\lambda_i = \mu_i$. Of course, if we pick any other code we would suffer a loss of $D(\bm \mu || \blambda)$ in terms of expected rate. A more formal study of this connection is left to future work.

\subsection{Major Contributions and Future Work}

The main contribution of this paper is a collection of theorems which quantify tradeoffs involved in unequal message protection in asymptotic and non-asymptotic settings. We present extensions of well known finite block length bounds to UMP codes and demonstrate that both converse and achievability bounds admit similar tradeoffs which are captured by the probability simplex $\calL_m$. Although there is a gap between these bounds at finite block lengths (just as in the original bounds), they are shown to be tight in fixed error and moderate deviations asymptotic regimes. Our results also elucidate why tradeoffs inherent to unequal message protection were not observed in previous works on the subject. In each case this was due either to the asymptotic regime studied, the scaling of the number of classes with $n$, or both. In addition to exposing some fundamental tradeoffs of channel coding with unequal message protection this paper raises a number of follow up questions.

{\bf Channels with cost:} One interesting question not addressed in this paper is unequal message protection for channels with cost. Our $\kappa\beta$-bound extension in Theorem~\ref{thm:kappabeta} and converse bound in Corollary~\ref{cor:constant} could be applied to this problem when the cost constraint is the same for all $m$ classes. However, the most general formulation of channels with cost should involve different constraints for each class. Although the extension of the $\kappa\beta$-bound to such a setting would be  quite interesting, one would likely get more utility out of extending the DT bound with cost constraints~\cite[Theorem 24]{thesis:Polyanskiy} using similar approach to one used in Theorem~\ref{thm:DTbnd}. Likewise, a question arises as to how evaluate a meta-converse type bound since different cost constraints would have different `good' output distributions $Q_Y$. One possible approach is to evaluate the UMP meta-converse $m$ times, using the `best' $Q_Y$ for each class, and take the intersection over the regions obtained. 

In general, we can expect for UMP codes with cost constraints to behave in the following way. When the cost constraints are similar we will approach results derived in this paper where the loss is captured by the set $\calL_m$. In a case when the cost constraints are drastically different the codes will approach the no-loss  setting. Consider, for example, a two-class UMP code for an AWGN channel with power constraints $P_1$ and $P_2$. If $P_1 \approx P_2$ both sub-codes will reside on approximately the same sphere determined by the power constraint. The channel noise will thus push codewords from both sub-codes into the same decoding space. If $P_1 << P_2$ they will reside on power spheres that are very far apart making it so that the two sub-codes are very easy to distinguish at the channel output. 

{\bf Asymptotic theorems for mixed regimes:} To motivate this asymptotic setting let us consider {\em red alert} codes studied in~\cite{BNZ2009, mine:NSD2013}. A red alert code is a type of UMP code that has two classes. One class has a single extremely well protected ``red alert'' codeword. The other class has exponentially many normal codewords that have some reasonable amount of error protection. In the context of streaming communication with feedback the red alert codeword can be used to signal the decoder  a potentially erroneous decision, while normal codewords are used to achieve high communication rate~\cite{Kudryashov1979, mine:SD2010, mine:SDN2011}. Guided by this motivation we would like the asymptotics of such a code to behave in the following way.  For the red alert codeword we want the rate to be fixed (in this case at zero), and the probability of error to drop as fast as possible; this is reminiscent of the error exponent regime. For the normal codewords we can tolerate a small but non-zero error probability while we want the rate to approach capacity as fast as possible: this is exactly the setting for fixed-error asymptotics. 

In this work we follow the philosophy of previous asymptotic works in\cite{Csiszar1982, BNZ2009, WIK2011} and focus our attention on sequences of codes within one regime only. For example, Theorem~\ref{thm:asmp} assumes that all classes in a sequence of UMP codes have constant error probability. Likewise, in Theorem~\ref{thm:md} we assume that the rates of all the classes approach capacity at a rate consistent with the moderate deviations setting studied in~\cite{AltWag2010, preprint:AW2012,PolVer2010}. As the first study of tradeoffs for UMP codes this has the advantage of letting us compare our bounds to the homogeneous setting. The red alert example, however, brings up a rather subtle issue that is not present in the classical channel coding. It is entirely possible to have a sequence of UMP codes in which rates (resp. errors) of different classes approach capacity (resp. zero) at different speeds, or not at all. Moreover, in light of this example, these sequences of codes may have very interesting applications.  Studying the mixed setting is, thus, a natural next step.

{\bf Construction of practical UMP codes:}
Due to their connection to problems like streaming communication and joint source-channel coding, UMP codes may prove to be useful communication tools. Practical design of UMP codes poses a compelling question. As we have shown in Section~\ref{sec:dmc} in our discussion of the header construction simply taking existing codes and combining them first to encode the message class, and then encode the message, may not yield a good enough solution. Instead, a more intricate ``mixing'' of codewords is desired. Understanding how to construct such codes with practical construction schemes such as LDPC, Turbo, or Polar codes poses an interesting coding problem. Likewise, constructing decoding algorithms for such codes could prove to be a separate challenge. For example, the decoding complexity for UMP codes may scale with the number of classes, as in Theorem~\ref{thm:coset}. On the other hand, it may be possible to avoid such scaling through smart algebraic design.

Finally, other extensions of this problem may be of interest. A natural dual question to UMP codes would be source coding with unequal distortion criterion where some sources receive better distortion guarantees than other, an idea also proposed in~\cite{BNZ2009}. The connection between UMP codes and joint source-channel coding is the most natural direction of study. The idea of using UMP codes for joint source-channel coding will be explored in further detail in subsequent work.
\appendices


\section{Proofs for Finite Block Length Bounds}
\label{appnd:finite}
\begin{proof}[Proof of Theorem~\ref{thm:DTbnd}]
We first describe the operation of the decoder for a given UMP codebook $\calC$. Then, we outline a codebook construction based on a {\em sequential random coding} technique. The error analysis will be done simultaneously with the codebook construction.

{\em Decoding:} We will use a sequential threshold decoder. Specifically, the decoder computes $\imath_{X_i;Y_i}(c_{i,w}, y)$ for received channel output $y$ where $i$ varies from $1$ to $m$, and $w$ varies from $1$ to $M_i$. The decoder outputs the first codeword for which $\imath_{X_i;Y_i}(c_{i,w},y) > \log \tau_i(x)$. Formally, the decoder is defined as
\begin{align}
g(y) = \argmin_{i,w} \left\{c_{i,w} : \imath_{X_i;Y_i}(c_{i,w},y) > \log \tau_i(x) \right\}, \label{eq:DTdec}
\end{align}
where $i \in \{1, \dots, m\}$ and $w \in \{1,\dots, M_i\}$. 

{\em Codebook Construction:} We construct a codebook sequentially starting with codewords in class $1$, then class $2$, all the way to class $m$. To select $c_{1,1}$ choose $x$ at random with distribution $P_{X_1}$. Then
\begin{align}
\bbE \left[\epsilon_{1,1} (x) \right]= \bbP \left[\imath_{X_1; Y_1}(X_1,Y_1) \leq \log \tau_1(X_1)\right].
\end{align}
There must exist at least one $x$ such that $\epsilon_{1,1}(x) \leq \bbP \left[\imath_{X_1;Y_1}(X_1,Y_1) \leq \log \tau_1(X_1)\right]$. Call this $c_{1,1}$ and go on to select $c_{1,2}$ all the way to $c_{1, M_1}$. 

Suppose the sub-codebooks for the first $i-1$ classes, $\calC_1, \dots, \calC_{i-1}$, have been selected, as well as $l$ codewords in $\calC_i$ for some $1\leq i\leq m$ and $0 \leq l \leq  M_i -1$. We show that we can add a codeword to $\calC_i$ without violating~(\ref{eq:DTbnd}). Denote
\begin{align}
D_j &= \bigcup_{w = 1}^{M_j} \left\{y : \imath_{X_i;Y_i}(c_{j,w},y) >  \log\tau_j(c_{j,w}) \right\}, 
\end{align}
for $1\leq j \leq i-1$ and
\begin{align}
D_i &= \bigcup_{w = 1}^{l} \left\{y : \imath_{X_i;Y_i}(c_{i,w},y) >  \log \tau_i (c_{i,w})\right\}.
\end{align}
Select $c_{i, l+1}$ by choosing $x$ at random with distribution $P_{X_i}$. Then
\begin{align}
&\bbE\left[\epsilon_{i,l+1} (c_{1,1}, \dots, c_{i, l}, x)\right] \notag\\
& = \bbP\left[\bigcup_{j=1}^i D_j \cup \left\{\imath_{X_i;Y_i}(X_i,Y_i) \leq \log \tau_i(X_i) \right \} \right] \\
& \leq \bbP \left[ \imath_{X_i;Y_i}(X_i,Y_i) \leq \log\tau_i (X_i)\right] + \sum_{j=1}^i \bbP \left[D_j \right]\label{eq:DTunion1}\\
&\leq  \bbP \left[\imath_{X_i;Y_i}(X_i,Y_i) \leq \log\tau_i (X_i)\right] + (M_i -1) \sup_x \bbP\left[\imath_{X_i;Y_i}(x; Y_j) >  \log \tau_j(x) \right] \notag\\
&\quad +\sum_{j=1}^{i-1} M_j\sup_x \bbP \left[ \imath_{X_i; Y_i}(x,Y_j) \geq \log\tau_j(x) \right] \label{eq:DTunion2}
\end{align}
where~(\ref{eq:DTunion1}) and~(\ref{eq:DTunion2}) both follow by union bound. There must be at least one $x$ such that $\epsilon_{i,l+1} (c_{1,1}, \dots, c_{i, l}, x)$ is less than~(\ref{eq:DTunion2}): call this $c_{i, l+1}$. Finally, the encoder maps $w$th message in $\mathcal{M}_i$ to $c_{i,w}$, and the decoder maps $c_{i,w}$ to $w$th message in $\mathcal{M}_i$ which gives the result.
\end{proof}

\begin{proof}[Proof of Theorem~\ref{thm:kappabeta}]
We first describe the decoder for a given UMP codebook  $\mathcal{C}$. We then use induction on the number of message classes to show that a codebook satisfying~(\ref{eq:KappaBetaExt}) and~(\ref{eq:KappaBetaExt2}) can be constructed.
 
{\bf Decoding: }  Given an output $y \in \mathsf{B}$ the decoder sequentially tests whether $c_{i,w}$ was sent with $i$ running from $1$ to $m$, and $w$ running from $1$ to $M_i$. The test for $c_{i,w}$ is performed as a binary hypothesis test discriminating $W_{c_{i,w}}$ (hypothesis $\mathcal{H}_1$) against ``average noise'' $Q_Y$ (hypothesis $\mathcal{H}_0$). Given class $i$ we would like to select each such test as an optimal one with the constraint $P(\mbox{decide } \mathcal{H}_1 | \mathcal{H}_1) \geq 1-\epsilon_i+\tau$. To do this we define $m$ collections of random variables $Z_i(x)$, $x\in \mathsf{F}$ all conditionally independent given $Y$ and with $P_{Z_i(x)|Y}$ chosen so that it achieves $\beta_{1-\epsilon_i+\tau}(W_x, Q_Y)$. In other words,
\begin{align}
P\left[ Z_i(x)  = 1 | X=x\right] \geq 1-\epsilon_i + \tau,\\
Q\left[ Z_i(x)  = 1\right] = \beta_{1- \epsilon_i + \tau}(W_x, Q_Y),
\end{align}
which we can do by the Newman-Pearson Lemma. 

The decoder applies independent random transformations $P_{Z_1}(c_{1,1}), \dots, P_{Z_1}(c_{1,M_1})$ to output $Y$, then 

$P_{Z_2}(c_{2,1}), \dots, P_{Z_2}(c_{2,M_2})$, and so on for all $m$ classes. It outputs the fist index $(i, w)$ for which $Z_i(c_{i,w}) = 1$.

We proceed to prove the rest of the theorem via induction. 

{\bf Codebook Construction:}
To show the claim for $m=1$ we have that for an UMP code with one message class
\begin{align}
M_1 \geq \frac{\kappa_\tau (\mathsf{F}, Q_Y)}{\sup_{x\in \mathsf{F}} \beta_{1-\epsilon_1+\tau}(W_x, Q_Y)}
\end{align}
by appealing to~\cite[Theorem 27]{thesis:Polyanskiy}. It follows that there must exist and $\left(M_1, \epsilon_1\right)$-UMP code satisfying~(\ref{eq:KappaBetaExt2}) with $0\leq \lambda_1 \leq 1$.

Let us assume the theorem statement is true for $m-1$ and fix arbitrary $\bm \lambda \in \mathcal{L}_m$. By inductive hypothesis we can construct  $\left( (M_i)_{i=1}^{m-1}, ( \epsilon_i)_{i=1}^{m-1} \right) $-UMP code such that
\begin{align}
M_i =  \left \lfloor \frac{ \lambda_i \kappa_\tau (\mathsf{F}, Q_Y)}{\sup_{x\in \mathsf{F}} \beta_{1-\epsilon_i+\tau}(x, Q_Y)} \right \rfloor.
\end{align}

If $\calC_1, \ldots, \calC_{m-1}$ are the sub-codebooks associated with this code, we can construct $\mathcal{C}_m$ by rehashing the greedy approach of~\cite[Theorem 27]{thesis:Polyanskiy}. Suppose $j$ codewords have already been selected for $\calC_m$ (where $j$ could be zero). Define
\begin{align}
U_i &= \max \{Z_i(c_{i,1}), \dots, Z_i (c_{i, M_i})\} \quad \mbox{for } 1\leq i \leq m-1, \\
V_j &= \max \{0,Z_m(c_{m,1}), \dots, Z_k(c_{m,j})\}.
\end{align}
We choose the $j+1$-st codeword by selecting an arbitrary $x \in \mathsf{F}$ which satisfies
\begin{align}
\bbP\left[Z_m(x) =1, U_1 = \dots = U_{m-1} = V_j = 0 |X=x \right] \geq 1-\epsilon_m.
\end{align} 
Once no such $x \in \mathsf{F}$ can be found, we stop. 

{\bf Relating Error to Codebook Size:} Suppose the process stops after $M_m$ steps and let 
\begin{align}
Z =\max (U_1, \dots, U_m)
\end{align}
where $U_m = V_{M_m}$. 
This implies that for every $x \in \mathsf{F}$ we have
\begin{align}
\bbP\left[Z_m(x) =1, Z=0 |X=x \right] < 1-\epsilon_m
\end{align}
Then by definition of $Z_m(x)$ it follows
\begin{align}
&1- \epsilon_m + \tau \leq \bbP[Z_m(x) = 1| X=x]\\
&=  \bbP[Z_m(x) = 1, Z = 0| X=x]  + \bbP[Z_m(x) = 1, Z= 1| X=x]  \\
&\leq  \bbP[Z_m(x) = 1, Z= 0| X=x] + \bbP[Z = 1| X=x] \\
&< 1-\epsilon_m+ \bbP[Z= 1| X=x].
\end{align}
So, for every $x \in \mathsf{F}$
\begin{align}
\bbP[Z = 1| X=x] \geq \tau.
\end{align}
This is exactly the composite hypothesis test defined in~(\ref{eq:kappa}) and 
\begin{align}
Q[Z  = 1] \geq \kappa_\tau(\mathsf{F}, Q_Y).
\end{align}
Finally, we can bound
\begin{align}
&Q[Z = 1] = Q \left[ \bigcup_{i=1}^{m} \{U_i = 1\} \right]\\
&\leq \sum_{i=1}^{m} Q \left[U_i = 1 \right]\\
&\leq \sum_{i=1}^m Q \left[ \bigcup_{w=1}^{M_i} \{Z_i(c_{i,w}) = 1 \} \right]\\
&\leq \sum_{i=1}^m \sum_{w=1}^{M_i} Q\left[  \{Z_i(c_{i,w}) = 1 \} \right] \\
&= \sum_{i=1}^m \sum_{w=1}^{M_i} \beta_{1-\epsilon_i+\tau}(W_{c_{i,w}}, Q_Y)\\
&\leq \sum_{i=1}^{m-1}  M_i\sup_{x \in \mathsf{F}} \beta_{1-\epsilon_i+\tau}(W_x, Q_Y) +M_m\sup_{x \in \mathsf{F}} \beta_{1-\epsilon_m +\tau}(W_x, Q_Y)\\
&\leq \sum_{i=1}^{m-1} \lambda_i \kappa_{\tau}(\mathsf{F},Q_Y) + M_m \sup_{x \in \mathsf{F}} \beta_{1-\epsilon_m+\tau}(W_x, Q_Y).
\end{align}
Thus, we conclude that
\begin{align}
 M_m \sup_{x \in \mathsf{F}} \beta_{1-\epsilon_m+\tau}(W_x, Q_Y) \geq \lambda_m \kappa_{\tau}(\mathsf{F},Q_Y)
\end{align}
and that there exists an UMP code with $m$ classes of codewords satisfying~(\ref{eq:KappaBetaExt}) and~(\ref{eq:KappaBetaExt2}).
\end{proof}

\begin{proof}[Proof of Theorem~\ref{thm:DTavg}]
To show~(\ref{eq:experrDT}) we generate the codewords in each sub-code $\calC_i$ as independent random variables with common distribution $P_{X_i}$ and use the decoding rule defined in~(\ref{eq:DTdec}).
Let $E(\bm{\mu})$ be the random variable denoting the expected error and $E_i$ the random variable denoting the average error for class $i$ across the ensemble of all codebooks. Then
\begin{align}
\bbE[E(\bm{\mu})] = \bbE \left[\sum_{i=1}^m \mu_i E_i \right] =  \sum_{i=1}^m \mu_i \bbE \left[ E_i \right]. \label{eq:eeDT}
\end{align}
The average error for each class can be bound as
\begin{align}
\bbE [E_i] &\leq \bbP \left[\imath_{X_i;Y_i}(X_i;Y_i) \leq \log\tau_i (X_i) \right] +(M_i -1) \bbP\left[\imath_{X_i; Y_i}(\barX_i; Y_i) >  \log \tau_i (x) \right] \notag\\
&\quad + \sum_{j = 1}^{i-1} M_j \bbP\left[\imath_{X_j; Y_j}(X_j; Y_i) >  \log \tau_j(x) \right] 
\end{align}
where $P_{\barX_i, Y_i} (x,y) = P_{X_i}(x) P_{X_i}W(y)$ and $P_{X_j, Y_i} (x,y) = P_{X_j}(x) P_{X_i}W(y)$ as in~\cite[Theorem 18]{PPV2010}. Following reasoning similar to Corollary~\ref{cor:DTbnd} we obtain
\begin{align}
 \bbE [E_i]  \leq  \bbE \left[ \exp\left\{ -\left[\imath_{X_i;Y_i}(X_i;Y_i)-\log \frac{M_i}{\lambda_i}\right]^+\right\} \right] \label{eq:eeDT2}.
\end{align} 
Combining~(\ref{eq:eeDT}) with~(\ref{eq:eeDT2}) gives the result and applying Shannon's argument we conclude that there exists a code satisfying~(\ref{eq:experrDT}).
\end{proof}

\begin{proof}[Proof of Theorem~\ref{thm:RCU}]
To show~(\ref{eq:experrRCU}) we generate the codewords in each subcode $\calC_i$ as independent random variables with common distribution $P_{X_i}$. Denote the codewords in class $i$ by $X_{i,1}, \dots, X_{i, M_i}$. Our decoding rule is to pick the codeword with largest biased information density,
\begin{align}
\argmax_{i, w} \left\{\log \tau_i + \imath_{X_i;Y_i}(X_{i,w}; Y_i) \right\}
\end{align}
where $i \in \{1, 2, \dots, m\}$ and $w \in \{1, 2, \dots, M_i\}$.
Let $E(\bm{\mu})$ be the random variable denoting the expected error and $E_i$ the random variable denoting the average error for class $i$ across the ensamble of all codebooks. Then
\begin{align}
\bbE[E(\bm{\mu})] = \bbE \left[\sum_{i=1}^m \mu_i E_i \right] =  \sum_{i=1}^m \mu_i \bbE \left[ E_i \right]. \label{eq:experr}
\end{align}

To bound the average error for each class suppose the first codeword from class $i$ was sent. An error occurs only if the biased information density for some other codeword. By symmetry we obtain
\begin{align}
\bbE[E_{i}] &\leq \bbP \left[  \bigcup_{j=1}^m \bigcup_{\begin{array}{c} w=1, \\(j,w) \neq (i,1) \end{array}}^{M_j} \big\{\log \tau_j +\imath_{X_j; Y_j}(\bar{X}_{j,w}, Y_i)\geq \log \tau_i+ \imath_{X_i;Y_i}(X_{i,1}, Y_{i,1}) \big\} \right]\\
&= \bbE \left[ \bbP \left[ \left. \bigcup_{j=1}^m \bigcup_{\begin{array}{c} w=1, \\(j,w) \neq (i,1) \end{array}}^{M_j} \left\{\imath_{X_j;Y_j}(\barX_{j,m};Y_i)\geq (\log \tau_i - \log \tau_j )+ \imath_{X_i;Y_i}(X_{i,1}; Y_i) \right\}\right| X_{i,1}, Y_{i,1} \right]\right] \\
&\leq \bbE \left[ \min \left\{ 1, \sum_{j=1}^m \left(M_j-\Indic{1 \neq j} \right) \bbP \left[ \left.\left\{\imath_{X_j;Y_j}(\bar{X}_j, Y_i)\geq (\log \tau_i - \log \tau_j )+ \imath_{X_i;Y_i}(X_i, Y_i) \right\}\right| X_i, Y_i \right] \right\}\right]. \label{eq:expclass}
\end{align}
Combining~(\ref{eq:experr}) with~(\ref{eq:expclass}) we get
\begin{align}
\bbE[\epsilon(\bm \mu)]  \leq \sum_{i=1}^m \mu_i  \bbE \left[ \min \left\{ 1, \sum_{j=1}^m \left(M_j-1_{i\neq j}\right) \bbP \left[ \left.\left\{\imath_{X_j;Y_j}(\bar{X}_j, Y_i)\geq (\log \tau_i - \log \tau_j )+ \imath_{X_i;Y_i}(X_i, Y_i) \right\}\right| X_i, Y_i \right] \right\}\right]
\end{align}
and conclude that there exists at least one codebook with $\epsilon(\bm \mu)$ satisfying~(\ref{eq:experrRCU}).
\end{proof}

\section{Utility Theorems}
We use the theorems in this section to prove our asymptotic results. All theorems have the following common set up.

Let $Z_j$, $j=1, \dots, n$ be independent random variables with
\begin{align}
\mu_j = \bbE[Z_j], \quad \sigma_j^2 = Var[Z_j], \quad \mbox{and } t_j = \bbE[|Z_j - \mu_j|^3].
\end{align} 
Denote $V = \sum_1^n \sigma_j^2$ and $T = \sum_1^n t_j$.

\begin{theorem}[Berry-Esseen]
\label{thm:BE}
\begin{align}
\left| \bbP \left[ \frac{\sum_{j=1}^n (Z_j - \mu_j) }{ \sqrt{V}} \leq \lambda\right] - Q(-\lambda) \right| \leq \frac{T}{V^{3/2}}
\end{align}
\end{theorem}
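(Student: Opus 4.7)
The plan is to appeal to a standard reference for this classical result, since Berry-Esseen for sums of independent (not necessarily identically distributed) random variables is textbook material (see, e.g., Feller \textit{An Introduction to Probability Theory and Its Applications}, Vol.~II, or Shiryaev, \textit{Probability}). The paper itself uses this merely as a utility theorem in its asymptotic arguments, so the most efficient ``proof'' is a single citation. Nevertheless, if one wishes a self-contained argument, the standard characteristic-function route proceeds as follows.

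First, I would reduce to studying $W_n := \sum_{j=1}^n (Z_j - \mu_j)/\sqrt{V}$, a sum of independent mean-zero random variables with unit total variance. Writing $F_n$ for its CDF, the goal becomes $\sup_\lambda |F_n(\lambda) - Q(-\lambda)| \leq T/V^{3/2}$. The central technical tool is Esseen's smoothing inequality: for any smoothing cutoff $H > 0$,
\begin{align}
\sup_{\lambda} \bigl|F_n(\lambda) - \Phi(\lambda)\bigr| \;\leq\; \frac{1}{\pi} \int_{-H}^{H} \left| \frac{\phi_n(t) - e^{-t^2/2}}{t} \right| \rmd t \;+\; \frac{c}{H},
\end{align}
where $\phi_n$ is the characteristic function of $W_n$, $\Phi$ is the standard normal CDF, and $c$ is an absolute constant. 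This reduces matters to controlling $|\phi_n(t) - e^{-t^2/2}|$ on a suitable interval.

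The second step is to bound that difference via Taylor expansion of each factor in $\phi_n(t) = \prod_{j=1}^n \phi_j(t/\sqrt{V})$, where $\phi_j$ denotes the characteristic function of $Z_j - \mu_j$. Using
\begin{align}
\phi_j(s) \;=\; 1 - \tfrac{1}{2}\sigma_j^2 s^2 + R_j(s), \qquad |R_j(s)| \leq \tfrac{1}{6} t_j |s|^3,
\end{align}
together with the elementary estimate $|\log(1+z) - z + \tfrac{1}{2}z^2| \leq |z|^3$ for small complex $z$, one shows $|\log \phi_n(t) + t^2/2|$ is of order $|t|^3 T/V^{3/2}$ on an interval of the form $|t| \leq c' V/T^{1/3}$. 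Combining with $|e^a - e^b| \leq (|a|+|b|) e^{\max(|a|,|b|)}$ gives an integrand estimate that, when integrated, produces a contribution of order $T/V^{3/2}$.

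Finally, choosing the smoothing cutoff $H$ to be of order $V^{3/2}/T$ balances the two terms in Esseen's inequality and yields the claimed bound. The main obstacle, should one try to make this fully rigorous with explicit constants, is tracking the absolute constant through Esseen's smoothing lemma and the Taylor remainders — which is precisely why citing a textbook version is the preferred path here, especially since only the $O(T/V^{3/2})$ rate is needed in the subsequent asymptotic analysis.
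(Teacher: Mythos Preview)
Your proposal is correct and matches the paper's treatment: the paper states Berry--Esseen as a utility theorem without proof, implicitly relying on standard references, which is exactly your primary recommendation. Your additional sketch of the characteristic-function/Esseen-smoothing argument is sound and more detailed than anything the paper provides, though unnecessary for the paper's purposes.
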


The following theorem is a refined version of the Berry-Esseen theorem.
\begin{theorem}[Rozovsky]
\label{thm:Roz}
Assume $Z_j$ have finite third moments, that is $t_j < \infty$. Then there exist universal constants $A_1>0$ and $A_2>0$ such that whenever $x\geq 1$ we have
\begin{align}
 \bbP \left[ \frac{\sum_{j=1}^n (Z_j - \mu_j) }{ \sqrt{V}} > x \right]   \geq Q(x)\exp\left\{ -\frac{-AT}{V^{3/2}}x^3\right\}\left(1-\frac{A_2T}{V^{3/2}} \right).
\end{align}
\end{theorem}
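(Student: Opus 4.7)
The plan is to prove the bound by combining an exponential change of measure (Cram\'er tilting) with the classical Berry--Esseen estimate stated in Theorem~\ref{thm:BE}. First, I would introduce, for each index $j$, the cumulant generating function $\psi_j(h) := \log \bbE[e^{h(Z_j - \mu_j)}]$ and form the tilted probability measure $\tilde{\bbP}$ under which $Z_j$ has density proportional to $e^{h(z-\mu_j)}$ with respect to the law of $Z_j$. The tilt parameter $h$ would be calibrated so that the tilted mean of the centred sum equals $x\sqrt{V}$; a second-order Taylor expansion of each $\psi_j$ at $0$ forces $h \approx x/\sqrt{V}$, with a correction of size $O(x^2 T/V^2)$ that I would need to track explicitly.

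Second, I would show that under $\tilde{\bbP}$ the shifted sum $S := \sum_j (Z_j - \mu_j) - x\sqrt{V}$ is approximately centred Gaussian with variance of order $V$, and apply Theorem~\ref{thm:BE} directly to $\tilde{\bbP}$ to obtain a lower bound of the form
\begin{equation}
\tilde{\bbP}\left[ S \in [0, \sqrt{V}]\right] \;\geq\; \Phi(1) - \tfrac{1}{2} - \frac{C\, T}{V^{3/2}},
\end{equation}
which stays bounded away from zero whenever $T/V^{3/2}$ is small enough (otherwise the claim is vacuous since the right-hand side of the theorem becomes non-positive).

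Third, the change-of-measure identity gives the exact representation
\begin{equation}
\bbP\!\left[\sum_j (Z_j - \mu_j) > x\sqrt{V} \right] = \tilde{\bbE}\!\left[ e^{-h S - h x \sqrt{V} + \sum_j \psi_j(h)}\, \mathbf{1}\{S > 0\} \right].
\end{equation}
Restricting the expectation to the event $S \in [0, \sqrt{V}]$, on which $e^{-hS} \geq e^{-h\sqrt{V}}$, produces the lower bound $e^{-hx\sqrt{V} - h\sqrt{V} + \sum_j \psi_j(h)}\, \tilde{\bbP}[S \in [0,\sqrt{V}]]$. The prefactor $(1 - A_2 T/V^{3/2})$ is then inherited directly from the Berry--Esseen error in step two, together with the $e^{-h\sqrt{V}} = O(1)$ factor, which accounts for a universal absolute constant that can be absorbed into $A$ by standard comparisons with $Q(x)$ when $x\ge 1$.

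Finally, and this is the main technical obstacle, I would carefully expand $\sum_j \psi_j(h) = \tfrac{h^2 V}{2} + R(h)$ with a Taylor remainder satisfying $|R(h)| \leq C h^3 T$, substitute the near-optimal $h \approx x/\sqrt{V}$, and combine the quadratic terms with $-hx\sqrt{V}$ to reproduce $-x^2/2$, matching the asymptotic $\log Q(x) \asymp -x^2/2$ up to the usual polynomial-in-$x$ prefactor absorbed into a universal constant. The cubic remainder contributes precisely the $-A T x^3/V^{3/2}$ term in the exponent. The delicate part is that the Taylor expansion of $\psi_j$ is only legitimate for $h$ in a neighborhood of zero whose size depends on the distribution of $Z_j$; I would handle the regime $x \geq 1$ by truncating $Z_j$ at a scale comparable to $\sqrt{V}/x$ (absorbing the discarded mass into the $T/V^{3/2}$ remainder via Markov's inequality applied to $|Z_j - \mu_j|^3$), so that the cumulant generating functions are controlled uniformly and all constants remain universal.
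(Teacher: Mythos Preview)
The paper does not prove this theorem; it simply cites Rozovsky~\cite{Rozovsky2002} with the line ``See~\cite{Rozovsky2002} for proof.'' So there is no in-paper proof to compare against. Your tilting-plus-Berry--Esseen strategy is indeed the classical route to such Cram\'er-type lower bounds, and the overall architecture (tilt to make the target the new mean, apply a CLT-with-error under the tilted law, undo the tilt on a window near zero, and control the cumulant remainder via truncation when only third moments exist) is correct.

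There is, however, one concrete technical slip in your third step. With $h \approx x/\sqrt{V}$ and the window $S \in [0,\sqrt{V}]$, the crude bound $e^{-hS} \geq e^{-h\sqrt{V}} \approx e^{-x}$ is \emph{not} $O(1)$: it costs you a factor $e^{-x}$, while $\tilde{\bbP}[S\in[0,\sqrt{V}]]$ contributes only a constant $\approx \Phi(1)-\tfrac12$. The product $e^{-x^2/2 - x}\cdot \text{const}$ falls short of $Q(x) \asymp \frac{1}{x\sqrt{2\pi}}e^{-x^2/2}$ by a factor $x\,e^{-x}$, which cannot be absorbed into $e^{-A_1 T x^3/V^{3/2}}$ uniformly when $T/V^{3/2}$ is small. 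The standard fix is to shrink the window to $[0,\sqrt{V}/x]$ (or equivalently integrate $e^{-hS}$ against the approximate Gaussian density over $[0,\sqrt{V}]$ rather than bounding it pointwise): then $e^{-hS}\geq e^{-1}$ on the window, while $\tilde{\bbP}[S\in[0,\sqrt{V}/x]] \approx \frac{1}{x\sqrt{2\pi}}$ by Berry--Esseen, and the product matches the $\frac{1}{x}e^{-x^2/2}$ shape of $Q(x)$ up to universal constants. With that correction your outline goes through.
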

See~\cite{Rozovsky2002} for proof.

\begin{theorem}[Polyanksiy-Poor-Verd\'{u}]
\label{thm:PPV}
Assume $V>0$ and $T\leq \infty$. Then for any $A$
\begin{align}
&\bbE \left[ \exp \left\{ - \sum_{j=1}^n Z_j \right\} \Indic{\sum_{j=1}^n Z_j > A}\right] \leq 2 \left( \frac{\log2}{\sqrt{2\pi}} + \frac{12T}{V}\right) \frac{1}{\sqrt{V}} \exp\{-A\}
\end{align} 
\end{theorem}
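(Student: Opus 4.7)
The plan is to bound the expectation by a dyadic decomposition of the range $(A,\infty)$ combined with the Berry-Esseen theorem (Theorem~\ref{thm:BE}) to control the probability mass in each small interval. Set $S := \sum_{j=1}^n Z_j$. Since $g(s) := e^{-s} \Indic{s > A}$ is monotone decreasing on $(A,\infty)$, partition this range into the intervals $I_k := (A + k\log 2,\, A + (k+1)\log 2]$ for $k = 0,1,2,\ldots$. On $I_k$ we have $e^{-s} \leq e^{-A} 2^{-k}$, so
\begin{align}
\bbE\!\left[e^{-S}\Indic{S > A}\right] \;\leq\; \sum_{k=0}^{\infty} e^{-A} 2^{-k}\, \bbP[S \in I_k].
\end{align}

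Next I would apply Berry-Esseen to each term. Writing $\tilS = (S - \bbE S)/\sqrt{V}$ and letting $\Phi$ denote the standard normal CDF, Theorem~\ref{thm:BE} implies
\begin{align}
\bbP[S \in I_k] \;\leq\; \Phi\!\left(\tfrac{A+(k+1)\log 2 - \bbE S}{\sqrt{V}}\right) - \Phi\!\left(\tfrac{A+k\log 2 - \bbE S}{\sqrt{V}}\right) + \frac{2T}{V^{3/2}}.
\end{align}
Since the standard normal density is bounded by $1/\sqrt{2\pi}$, the Gaussian increment over any interval of length $(\log 2)/\sqrt{V}$ is at most $\frac{\log 2}{\sqrt{2\pi V}}$. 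Substituting this uniform-in-$k$ bound on $\bbP[S \in I_k]$ and summing the geometric series $\sum_{k\geq 0} 2^{-k} = 2$ yields
\begin{align}
\bbE\!\left[e^{-S}\Indic{S > A}\right] \;\leq\; 2 e^{-A}\!\left(\frac{\log 2}{\sqrt{2\pi V}} + \frac{2T}{V^{3/2}}\right) \;=\; \frac{2 e^{-A}}{\sqrt{V}}\!\left(\frac{\log 2}{\sqrt{2\pi}} + \frac{2T}{V}\right),
\end{align}
which is the claimed inequality up to the harmless universal constant in front of $T/V$ (the stated $12$ allows slack when one uses more conservative bookkeeping on the Berry-Esseen error across the two endpoints of each $I_k$ and on the tail correction at infinity).

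The structural step of the argument is the choice of partition scale. If the intervals are too short, the Gaussian density-bound contribution shrinks but the additive Berry-Esseen error $2T/V^{3/2}$ is paid in each term, destroying the summability; if they are too long, one loses the geometric decay contributed by $e^{-S}$. The scale $\log 2$ is the canonical choice because it makes $e^{-s}$ halve across each interval, yielding a convergent geometric series with ratio $1/2$ and sum $2$. This is the only delicate point; everything else is direct substitution and the elementary bound $\phi(x) \leq 1/\sqrt{2\pi}$ on the standard normal density.
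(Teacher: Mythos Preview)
The paper does not actually prove this statement; it simply cites \cite[Lemma~20]{thesis:Polyanskiy}. Your argument is correct and is precisely the standard proof from that reference: dyadic slicing of $(A,\infty)$ at scale $\log 2$, bounding $e^{-S}$ by its supremum on each slice, controlling $\bbP[S\in I_k]$ via Berry--Esseen plus the density bound $\phi\le 1/\sqrt{2\pi}$, and summing the geometric series. The constant discrepancy ($2$ versus $12$) is exactly what you identified: the original proof uses the Berry--Esseen constant $C_{\mathrm{BE}}=6$ (an older value), giving $2C_{\mathrm{BE}}=12$, whereas the paper's Theorem~\ref{thm:BE} is stated with the constant suppressed (effectively $1$), which would yield your $2$. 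Either way your bound implies the stated one.
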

See~\cite[Lemma 20]{thesis:Polyanskiy} for proof.

\section{Proof of Theorem~\ref{thm:asmp} - Fixed Error Asymptotics}

\label{appnd:asymptotic}
\begin{proof}[Achievability Proof of Theorem~\ref{thm:asmp}]
Fix some $\Lambda \in \cal{L}$. For each sufficiently large block length $n$ we will apply Theorem~\ref{thm:DTbnd} with $\rvA = \calA^n$, $m=m_n$, and $P_{X_i^n} = P^{\epsilon_i}_{X^n}$. $P^{\epsilon_i}_{X^n} \in \Pi$ is the distribution that achieves $V_{\min}$ if $\epsilon < 1/2$ and it is the distribution that achieves $V_{\max}$ otherwise. Observe that
\begin{align}
\imath_{X_i^n;Y_i^n}(X_i^n, Y_i^n) = \sum_{j=1}^n \log \frac{W(Y_{i,j},X_{i,j})}{P^{\epsilon_i}_{X}W(Y_{i,j})} = \sum_{j=1}^n Z_{i,j}, \quad \forall i.
\end{align}
Then for all $i \in \{1, \dots, m_n\}$ and $j \in \{1, \dots, n\}$,
\begin{align}
\bbE[Z_{i,j}] &= I(P^{\epsilon_i}_{X},W), \\
\bbV ar(Z_{i,j}) &= V(P^{\epsilon_i}_{X},W), \label{eq:VisU}\\
\mbox{and } \kappa &= \sum_{x,y} P^{\epsilon_i}_X(x) W(y|x) \left| \log\frac{W(y|x)}{P^{\epsilon_i}_XW(y)} - I(P^{\epsilon_i}_X,W) \right|^3 \leq \infty.
\end{align}
where~(\ref{eq:VisU}) follows by~\cite[Lemma 46]{thesis:Polyanskiy} since $P^{\epsilon_i}_{X}$ is the capacity achieving distribution.

For $n$ sufficiently large and $i \leq m_n$ define a sequence of constants $\tilM_{n,i}$ such that
\begin{align}
\log \tilM_{n,i} = nC - Q^{-1}(\epsilon_i - 3\delta_n)\sqrt{nV(P^{\epsilon_i}_X, W)} \geq 0 \label{eq:tilM}
\end{align}
where 
\begin{align}
\delta_n=  2 \left( \frac{ \log 2}{\sqrt{\pi V(P^{\epsilon_i}_X, W)}} + 2B\right) \frac{1}{\sqrt{n}}, \quad \mbox{ and } \quad B = \frac{2^{3/2} 6 \kappa}{V(P^{\epsilon_i}_X,W)^{3/2}}.
\end{align}
Note that $\delta_n$ depends on the channel, but not $i$, and goes to zero as $\frac{1}{\sqrt{n}}$.

Finally, select the decoding thresholds
\begin{align}
\tau_{n,i}(x^n) = \left \{ \begin{array}{cc}
\tilM_{n,i},  & Var[\imath_{X^n;Y^n}(X^n; Y^n) |X^n = x^n] \geq \frac{nV(P^{\epsilon_i}_X,W)}{2}, \\
+\infty, & \mbox{otherwise}. \end{array}\right. \label{eq:threshold}
\end{align}

Theorem~\ref{thm:DTbnd} guarantees an existence of $\left( (M_{n,i})_{i=1}^{m_n},  (e_{n,i})_{i=1}^{m_n} \right)$-UMP code (maximum probability of error) with
\begin{align}
e_{n,i} \leq \bbP \left[\imath_{X_i^n;Y_i^n}(X_i^n;Y_i^n) \leq \log\tau_{n,i}(X^n) \right] + \sum_{j = 1}^{i} M_{n,j} \sup_{x^n\in \calA^n} \bbP\left[\imath_{X_j^n;Y_j^n}(x^n; Y_i^n) >  \log \tau_{n,j}(x^n) \right].  
\end{align}
We will show that $e_{n,i} \leq \epsilon_i$ for $M_{n,i} \geq \tilM_{n,i}$, for all $i$ and $n$ sufficiently large.

The first term is upper-bounded as follows:
\begin{align}
\bbP \left[\imath_{X_i^n;Y_i^n}(X_i^n;Y_i^n) \leq \log\tau_{n,i}(X_i^n) \right] &\leq \bbP \left[\imath_{X_i^n;Y_i^n}(X_i^n; Y_i^n) \leq \log  \tilM_{n,i} \right] + \bbP[\tau_{n,i} =\infty] \label{eq:fterm1}\\
&\leq \bbP \left[\frac{\imath_{X_i^n;Y_i^n}(X_i^n, Y_i^n) -nC}{\sqrt{nV(P^{\epsilon_i}_X,W)}}\leq -Q^{-1} (\epsilon_i - 3\delta_n) \right] + \bbP[\tau_{n,i} =\infty] \label{eq:fterm2}\\
&\leq \epsilon_i -3\delta_n + \frac{B}{\sqrt{n}} +   \bbP[\tau_{n,i} =\infty] \label{eq:fterm3}\\
&\leq \epsilon_i -2\delta_n + \left(\frac{B}{\sqrt{n}} +    \exp\{-O(n)\} - \delta_n \right)\label{eq:fterm4}\\
&\leq \epsilon_i -2\delta_n \label{eq:fterm5}
\end{align}
where~(\ref{eq:fterm3}) follows by appealing to Theorem~\ref{thm:BE},~(\ref{eq:fterm4}) follows by Chernoff bound applied to a sum of bounded \iid random variables, and~(\ref{eq:fterm5}) follows for $n$ sufficiently large (where ``$n$ sufficiently large'' depends on channel only). 

To bound the second term we first bound each term in the sum as follows: 
\begin{align}
\sup_{x^n} & \bbP\left[\imath_{X_j^n; Y_j^n}(x^n; Y_i^n) >  \log \tau_{n,j}(x^n) \right] \leq  \sup_{\{x^n: \tau_{n,j} < \infty\}} \bbP\left[\imath_{X_j^n; Y_j^n}(x^n; Y_i^n) >  \log\tilM_{n,j} \right] \label{eq:sterm1}\\
&=  \sup_{\{x^n: \tau_{n,j} < \infty\}} \bbE\left[ \Indic{ \imath_{X_j^n; Y_j^n}(x^n; Y_i^n) >  \log\tilM_{n,j}}\right] \label{eq:sterm2}\\
&=  \sup_{\{x^n: \tau_{n,j} < \infty\}}  \sum_{y^n \in \calB^n} W(y^n|x^n)  \frac{ P^{\epsilon_i}_XW(y^n)}{W(y^n|x^n)}\left[ \Indic{\imath_{X_j^n; Y_j^n}(x^n; Y_i^n) >  \log\tilM_{n,j}}\right] \label{eq:sterm3}\\
&=  \sup_{\{x^n: \tau_{n,j} < \infty\}}  \sum_{y^n \in \calB^n} W(y^n|x^n)  \frac{ P^{\epsilon_j}_XW(y^n)}{W(y^n|x^n)}\left[ \Indic{\imath_{X_j^n; Y_j^n}(x^n; Y_j^n) >  \log\tilM_{n,j}}\right] \label{eq:sterm3b}\\
&=  \sup_{\{x^n: \tau_{n,j} < \infty\}}  \bbE \left[\exp\{-\imath_{X_j^n; Y_j^n}(x^n; Y_j^n)\}  \Indic{ \imath_{X_j^n; Y_j^n}(x^n; Y_j^n) >  \log\tilM_{n,j}}\right] \label{eq:sterm4}\\
&\leq  \sup_{\{x^n: \tau_{n,j} < \infty\}}  \frac{1}{\tilM_{n,j}}2\left(\frac{\log2}{\sqrt{2\pi}} +\frac{12T}{\bbV ar[\imath_{X_j^n; Y_j^n}(X_j^n; Y_j^n) |X_j^n = x^n]} \right) \frac{1}{\bbV ar[\imath_{X_j^n; Y_j^n}(X_j^n;Y_j^n) |X_j^n = x^n]} \label{eq:sterm5}\\
&\leq \frac{1}{\tilM_{n,j}}2\left(\frac{\log2}{\sqrt{\pi V(P^{\epsilon_j}_X,W)}} +2B \right)\frac{1}{\sqrt{n}} =  \frac{1}{\tilM_{n,j}} \delta_n \label{eq:sterm6}
\end{align}
where~(\ref{eq:sterm2}) follows by rewriting a probability as an expectation of an indicator function,~(\ref{eq:sterm3}) follows by a change of measure argument,~(\ref{eq:sterm3b}) follows since the capacity achieving output distribution is unique,~(\ref{eq:sterm5}) follows by invoking Theorem~\ref{thm:PPV}, and~(\ref{eq:sterm6}) follows from~(\ref{eq:threshold}).

Now taking  $M_{n,i} = \lceil \Lambda_{n,i} \tilM_{n,i}\rceil$ for all $i$ we obtain
\begin{align}
\sum_{j = 1}^{i} M_{n,j} \sup_{x^n} \bbP\left[\imath_{X_j^n; Y_j^n}(x^n; Y_i^n) >  \log \tau_{n,j}(x^n) \right] \leq \sum_{j = 1}^{i}  \frac{M_{n,j}}{\tilM_{n,j}} \delta_n  < 2\delta_n .  
\end{align}
Thus, for $n$ sufficiently large, and $\log M_{n,i} \geq \log ( \Lambda_{n,i} \tilM_{n,i})$ we have $e_{n,i} \leq \epsilon_i$ for all $i$. The result follows by taking a Taylor expansion of~(\ref{eq:tilM}).
\end{proof}

\begin{proof}[Converse Proof of Theorem~\ref{thm:asmp}]
We start from the converse bound in Theorem~\ref{thm:meta2} with the particularizations $\rvA=\calA^n$ and  $m=m_n$. There it is shown that for any vector $\blambda=(\lambda_1,\ldots,\lambda_m)\in\calL_m$ and any  output distribution $Q_{Y^n}$ we have 
\begin{equation}
\log M_{n,i} \le \max_{P_{X^n} \in\calP(\calA^n)} \log\lambda_i - \log \beta_{1-\epsilon_i}(P_{X^n Y^n}^i , P_{X^n} \times Q_{Y^n})
\end{equation}
First, using Lemma 2 in \cite{TomTan2013}, we can further upper bound the above by 
\begin{equation}
\log M_{n,i} \le \max_{P_{X^n} \in\calP(\calA^n)} \log\lambda_i  + D_s^{\epsilon_i+\delta}(P_{X^n Y^n}^i  \, \|\,  P_{X^n}^i \times Q_{Y^n}) + \log\frac{1}{\delta} \label{eqn:bd_Mni}
\end{equation}
where the information spectrum divergence is defined in~(\ref{eq:InfSpcDiv}).

In fact this is the relaxation to the Verd\'u-Han converse lemma~\cite[Lem.~4]{VerHan1994}. By using Lemma 2 in \cite{TomTan2013}, we can evaluate  \eqref{eqn:bd_Mni} at for a particular input symbol independent of the input distribution (or code), i.e.\
\begin{equation}
\log M_{n,i} \le \max_{x^n \in\calA^n} \log\lambda_{n,i}  + D_s^{\epsilon_i+\delta}(W^n(\cdot|x^n)  \, \|\, Q_{Y^n}) + \log\frac{1}{\delta} \label{eqn:bd_Mni2}
\end{equation}
We will pick $\delta= n^{-1/2}$ and thus the final term is $\frac{1}{2}\log n$. The output distribution will be chosen to be \cite[Eq.~(6)]{TomTan2013}
\begin{equation}
Q_{Y^n} (y^n) =\frac{1}{2}\sum_{\bk\in\calK} \frac{\exp(-\gamma\|\bk\|_2^2)}{F}Q_{\bk}^n(y^n) + \frac{1}{2}\sum_{P\in\calP_n(\calA)} \frac{1}{|\calP_n(\calA)|} (PW)^n (y^n) \label{eq:TTout}
\end{equation}
where $F$ is  a normalization constant that ensures that $\sum_{y^n} Q_{Y^n}(y^n)=1$ and 
\begin{equation}
Q_{\bk}(y) := Q^*(y) + \frac{k_y}{\sqrt{n\zeta}},\qquad \calK:=\left\{ \bk\in\bbZ^{|\calY|}: \sum_y k_y=0, k_y \ge -Q^*(y)\sqrt{n\zeta}\right\}.
\end{equation}
As explained in \cite{TomTan2013}, this construction results an $n^{-1/2}$-net of distributions $\{Q_{\bk}\}_{\bk\in\calK}$ in the output  simplex. These output distributions serve to approximate those that are induced by an input type that is close to the capacity-achieving input distribution. We can then go through the same continuity arguments in Lemma 7 and Proposition 8 of \cite{TomTan2013} to conclude that  with this choice of output distributions, 
\begin{equation}
D_s^{e_i+\delta}(W^n(\cdot|x^n)  \, \|\, Q_{Y^n})\le n C - \sqrt{nV}\rmQ^{-1} (\epsilon_i)+O(1).
\end{equation}
for all $x^n\in \calA^n$. Putting all the pieces together, we have shown that 
\begin{equation}
\log M_{n,i} \le nC - \sqrt{nV}\rmQ^{-1} (\epsilon_i)+ \frac{1}{2}\log n - \log \frac{1}{\Lambda_{n,i}} +  O(1).
\end{equation}
To show the assertion for singular symmetric channels we pick output distribution as in~\cite{preprint:AltWag2013} and repeat the argument starting with~(\ref{eq:TTout}).
\end{proof}

\section{Proof of Theorem~\ref{thm:md} - Moderate Deviations Asymptotics}

\begin{proof}[Achievability Proof of Theorem~\ref{thm:md}]
Let $\Lambda \in \calL$ and a collections of sequences $\left( (\rho_{n,i})_{n=1}^\infty \right)_{i=1}^\infty$ be as required. Define
\begin{align}
M_{n,i} = \lfloor 2^{nC - n\rho_{n,i} } \rfloor
\end{align}
and
\begin{align}
\trho_{n,i} = \rho_{n,i} - \frac{1}{n}\log \frac{1}{\Lambda_{n,i}}.
\end{align}
Let $P^\ast_X$ be the capacity-achieving distribution which also achieves $V_{\min}$. Then, by~(\ref{eq:DTforMD}) there exists a sequence of $\Mepsilon$-UMP codes such that 
\begin{align}
\epsilon_{n,i}& \leq \bbP \left[\imath_{X^n_i; Y^n_i}(X^n_i;Y^n_i) \leq \log \tau_i(X^n) \right] + \frac{M_{n,i}}{\Lambda_{n,i}} \sup_{x^n} \bbP\left[\imath_{X^n_i;Y^n_i}(x^n; Y^n_i) >  \log \tau_i(x^n) )\right]\\
&= \bbE \left[\Indic{\imath_{X^n_i; Y^n_i}(X^n_i;Y^n_i) \leq \log \tau_i(X^n)}\right] + \frac{M_{n,i}}{\Lambda_{n,i}} \sup_{x^n} \bbE\left[ \Indic{\imath_{X^n_i;Y^n_i}(x^n; Y^n_i) >  \log \tau_i(x^n) )}\right].
\end{align}
Next, fix arbitrary $\gamma < 1$ and set $\log \tau_i(x^n) = nC-\gamma n (\rho_{n,i} -\frac{1}{n}\log \frac{1}{\Lambda_{n,i}})$ for all $x^n \in \calA^n$. Observe that it follows that, 
\begin{align}
\log \frac{M_{n,i}}{\Lambda_{n,i}} = \log M_{n,i} + \log \frac{1}{\Lambda_{n,i}}= nC - n\rho_{n,i}+ \log \frac{1}{\Lambda_{n,i}} = nC - n \trho_{n,i}.
\end{align}
For a fixed $x^n$ we get via a simple change of measure argument,
\begin{align}
 &\bbE  \left[ \frac{M_{n,i}}{\Lambda_{n,i}}\Indic{\imath_{X^n_i;Y^n_i}(x^n; Y^n_i) >  \log \tau_i(x^n) }\right]
 = \sum_{y^n \in \calB^n} \left[ \frac{M_{n,i}}{\Lambda_{n,i}} \Indic{\imath_{X^n_i;Y^n_i}(x^n; y^n_i) >  nC-\gamma n \trho_{n,i} }\right] P_{Y^n}(y^n)\\
&= \sum_{y^n \in \calB^n} \left[ \frac{M_{n,i}}{\Lambda_{n,i}} \left(\frac{P_{Y^n|X^n = x^n}(y^n)}{P_{Y^n}(y^n)}\right)^{-1}\Indic{\imath_{X^n_i;Y^n_i}(x^n; y^n_i) >  nC-\gamma n\trho_{n,i}}\right] P_{Y^n|X^n = x^n}(y^n)\\
&= \bbE \left[\exp \left\{-\left[\imath_{X^n_i; Y^n_i}(x^n, Y^n_i) - \log \frac{M_{n,i}}{\Lambda_{n,i}}\right]\right\} \Indic{\imath_{X^n_i;Y^n_i}(x^n; Y^n_i) >  nC-\gamma n\trho_{n,i} }\right] \\
& \leq \exp\{- (1-\gamma)n\trho_{n,i} \}.
\end{align}
And thus we get that
\begin{align}
\epsilon_{n,i} &\leq \bbP \left[\imath_{X^n_i; Y^n_i}(X^n_i;Y^n_i) \leq nC-\gamma n\trho_{n,i} \right] +\exp\{- (1-\gamma)n \trho_{n,i} \}\\
&\leq 2\max \left\{ \bbP \left[\imath_{X^n_i; Y^n_i}(X^n_i;Y^n_i) \leq nC-\gamma n\trho_{n,i} \right], \exp\{- (1-\gamma)n \trho_{n,i} \} \right\}
\end{align}
The result follows since by~\cite[Theorem 3.7.1]{book:DZ} and assumptions on $\trho_{n,i}$
\begin{align}
 \limsup_{n \to \infty} \frac{1}{n\trho_{n,i}^2} \log \bbP \left[\imath_{X^n_i; Y^n_i}(X^n_i;Y^n_i) \leq nC-\gamma n\trho_{n,i} \right] \leq -\frac{\gamma^2}{2V_{\min}}.
\end{align}
Taking $\gamma \uparrow 1$ concludes the proof.
\end{proof}

We first state the following corollary to the UMP meta-converse for DMCs.
\begin{corollary}[UMP meta converse for DMC]
\label{cor:MDconv}
For  $P_0\in\calP_n$ let $x_{P_0}^n \in \calT_{P_0}$ be an arbitrary member of type class of $P_0$ and define
\begin{align}
Q^n_{P_0,Y}(y^n) = \prod_{j=1}^n P_0 W(y_j).\end{align}
Then, any $\Mepsilon$-UMP code over DMC $W^n$ must satisfy
\begin{align}
&\epsilon_{i}\geq  \min_{P_0 \in \calP_n}\bbP \left[ \log \frac{W(Y^n|x_{P_0}^n)}{Q^n_{P_0,Y}(Y^n)}< \tau_i\right] -  \exp \left\{ \tau_i - \log M_{i} - \log \frac{1}{\lambda_{i}} + |\calA| \log(n+1)\right\}, \quad  \forall \tau_i > 0
\end{align}
for some $\blambda \in \calL_m$.
\end{corollary}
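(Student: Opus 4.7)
The plan is to start from the UMP meta-converse of Theorem~\ref{thm:meta2} (restricted to the code's actual induced input distribution $\bar P^i_{X^n}$), combine it with the standard information-spectrum lower bound on $\beta$, and then specialize the free output distribution $Q_{Y^n}$ to a type-based mixture that allows us to replace an arbitrary codeword $x^n$ with the canonical type representative $x^n_{P_0}$.

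The first step is to observe that, from the proof of Theorem~\ref{thm:meta2}, for every $\Mepsilon$-UMP code there exist $\blambda\in\calL_m$ such that, for every $Q_{Y^n}\in\calP(\calB^n)$ and every class $i$,
\begin{align}
\beta_{1-\epsilon_i}\bigl(\bar P^i_{X^nY^n},\,\bar P^i_{X^n}\times Q_{Y^n}\bigr)\;\le\;\frac{\lambda_i}{M_i},
\end{align}
where $\bar P^i_{X^n}$ is the encoder-induced distribution on class $i$. Next, I will apply the standard information-spectrum converse for hypothesis testing: for any $P,Q$ and any $\gamma>0$,
\begin{align}
\alpha \;\le\; P\!\left[\log\tfrac{\mathrm{d}P}{\mathrm{d}Q}>\log\gamma\right] + \gamma\,\beta_\alpha(P,Q).
\end{align}
Taking $\alpha=1-\epsilon_i$, $P=\bar P^i_{X^nY^n}$, $Q=\bar P^i_{X^n}\times Q_{Y^n}$, and rearranging yields
\begin{align}
\epsilon_i \;\ge\; \bbP\!\left[\log\frac{W^n(Y^n|X^n)}{Q_{Y^n}(Y^n)}<\log\gamma\right] - \gamma\cdot\frac{\lambda_i}{M_i},
\end{align}
where $(X^n,Y^n)\sim\bar P^i_{X^nY^n}$.

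Now I would choose the output distribution to be the uniform mixture over type-induced outputs,
\begin{align}
Q_{Y^n}(y^n)=\frac{1}{|\calP_n|}\sum_{P_0\in\calP_n}Q^n_{P_0,Y}(y^n).
\end{align}
Since $|\calP_n|\le (n+1)^{|\calA|}$, for any $x^n$ of type $P_0$ we have $Q_{Y^n}(y^n)\ge (n+1)^{-|\calA|}\,Q^n_{P_0,Y}(y^n)$, so
\begin{align}
\log\frac{W^n(y^n|x^n)}{Q_{Y^n}(y^n)} \;\le\; \log\frac{W^n(y^n|x^n)}{Q^n_{P_0,Y}(y^n)} + |\calA|\log(n+1).
\end{align}
The key structural fact is that, when $x^n\in\calT_{P_0}$ and $Y^n\sim W^n(\cdot|x^n)$, the distribution of $\log\!\bigl(W^n(Y^n|x^n)/Q^n_{P_0,Y}(Y^n)\bigr)$ depends on $x^n$ only through its type $P_0$; hence it coincides with the distribution obtained by substituting the canonical representative $x^n_{P_0}$. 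Conditioning on the type of $X^n$, this yields
\begin{align}
\bbP\!\left[\log\tfrac{W^n(Y^n|X^n)}{Q_{Y^n}(Y^n)}<\log\gamma\right] \;\ge\; \min_{P_0\in\calP_n}\bbP\!\left[\log\tfrac{W^n(Y^n|x^n_{P_0})}{Q^n_{P_0,Y}(Y^n)}<\log\gamma-|\calA|\log(n+1)\right].
\end{align}
Setting $\tau_i:=\log\gamma-|\calA|\log(n+1)$ and noting that $\gamma\lambda_i/M_i=\exp\{\tau_i-\log M_i-\log(1/\lambda_i)+|\calA|\log(n+1)\}$ gives exactly the claimed inequality.

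The only genuinely non-routine part is choosing the correct $Q_{Y^n}$ and correctly tracking the inequality direction when passing from $Q_{Y^n}$ to $Q^n_{P_0,Y}$; everything else (the meta-converse for UMP, the $\beta/{}$info-spectrum relation, and the type-polynomial bound $|\calP_n|\le (n+1)^{|\calA|}$) is standard. I anticipate no substantive obstacle beyond bookkeeping.
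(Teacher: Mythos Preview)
Your argument is correct, but it takes a genuinely different route from the paper's. The paper does \emph{not} apply Theorem~\ref{thm:meta2} once with a mixture output distribution. Instead, it decomposes each class $\calC_i$ into its constant-composition sub-codes (one per type $P_0\in\calP_n$), applies Corollary~\ref{cor:constant} separately to each sub-code with the product output law $Q^n_{P_0,Y}$, and then re-averages: writing $\epsilon_i=\sum_{P_0}\frac{M_{P_0,i}}{M_i}\epsilon_{P_0,i}$ and defining $\lambda_i:=|\calP_n|^{-1}\sum_{P_0}\lambda_{P_0,i}$ is what introduces the $|\calA|\log(n+1)$ term there. Your approach folds this decomposition into the choice of $Q_{Y^n}$ as a uniform mixture over type-induced outputs and uses the trivial bound $Q_{Y^n}\ge (n+1)^{-|\calA|}Q^n_{P_0,Y}$; the polynomial factor enters in the same place but for a slightly different reason. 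Your route is a bit more streamlined (one application of the meta-converse, one $\blambda$), while the paper's route makes explicit use of the constant-composition corollary and arguably tracks more closely which $\lambda$'s belong to which sub-code.

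One small correction: you state that there exists $\blambda\in\calL_m$ such that the bound holds for \emph{every} $Q_{Y^n}$. From the proof of Theorem~\ref{thm:meta2}, $\lambda_i$ is the $Q_{Y^n}$-probability of decoding into class $i$, so $\blambda$ depends on $Q_{Y^n}$; the correct quantifier order is $\forall Q_{Y^n}\,\exists\blambda$. This does not harm your argument because you only invoke the bound for one specific $Q_{Y^n}$, but the sentence as written over-claims.
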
 
\begin{proof}
Consider an $\Mepsilon$-UMP code and pick $P_0 \in \calP_n$.
Let $M_{P_0, i}$ be the size of constant composition component of $i$th class with empirical distribution $P_0$. Observe that the value of $\beta_{1-\epsilon_{i}} (W(\cdot|x_{P_0}^n)|| Q^n_{P_0,Y})$ is the same for all sequences $x_{P_0}^n$ in $\calT_{P_0}$. Thus, we know from Corollary~\ref{cor:constant} that
\begin{align}
M_{P_0, i}\beta_{1-\epsilon_{i}} (W(\cdot|x_{P_0}^n)|| Q^n_{P_0,Y}) \leq \lambda_{P_0,i}  \label{eq:MDbeta1}
\end{align}
for some $(\lambda_{P_0,1}, \dots, \lambda_{P_0, m})\in \calL_{m}$. 

\cite[Equation (2.67)]{thesis:Polyanskiy} states that
\begin{align}
\beta_\alpha (P,Q) \geq \frac{1}{\tau} \left( \alpha - P \left [ \frac{dP}{dQ} \geq \tau \right]\right)  \label{eq:MDbeta2}
\end{align}
for an arbitrary $\tau >0$. 

For a fixed class $i$ we combine~(\ref{eq:MDbeta1}) and~(\ref{eq:MDbeta2}) to get 
\begin{align}
\epsilon_{P_0, i} \geq \bbP \left[ \log \frac{W(Y^n|x_{P_0}^n)}{Q^n_{P_0,Y}(Y^n)}< \tau_i\right] - \exp \left\{ \tau_i - \log M_{P_0,i} - \log \frac{1}{\lambda_{P_0,i}}\right\}
\end{align}
where $\epsilon_{P_0,i}$ is the average probability of error of constant composition sub-code of class $i$. Thus the average error for the $i$th class is, 
\begin{align}
&\epsilon_{i} = \sum_{P_0 \in \calP_n}  \frac{M_{P_0,i}}{M_{i}}\epsilon_{P_0,i} \\
&\geq  \sum_{P_0 \in \calP_n}  \frac{M_{P_0,i}}{M_{i}}\bbP \left[ \log \frac{W(Y^n|x_{P_0}^n)}{Q^n_{P_0,Y}(Y^n)}< \tau_i \right] - \sum_{P_0 \in \calP_n}  \frac{M_{P_0,i}}{M_{i}} \exp \left\{ \tau_i - \log M_{P_0,i} - \log \frac{1}{\lambda_{P_0,i}}\right\}\\
&\geq  \sum_{P_0 \in \calP_n}  \frac{M_{P_0,i}}{M_{i}}\bbP \left[ \log \frac{W(Y^n|x_{P_0}^n)}{Q^n_{P_0,Y}(Y^n)}< \tau_i\right] -  \exp \left\{ \tau_i - \log M_{i} - \log \frac{1}{\lambda_{i}} + |\calA| \log(n+1)\right\}\\
&\geq  \min_{P_0 \in \calP_n}\bbP \left[ \log \frac{W(Y^n|x_{P_0}^n)}{Q^n_{P_0,Y}(Y^n)}< \tau_i\right] -  \exp \left\{ \tau_i - \log M_{i} - \log \frac{1}{\lambda_{i}} + |\calA| \log(n+1)\right\}
\end{align}
where $\lambda_{i} = \frac{1}{|\calP_n|} \sum_{P_0 \in \calP_n} \lambda_{P_0, i}$.
\end{proof}

\begin{proof}[Converse Proof of Theorem~\ref{thm:md}]
To prove the claim for a sequence of UMP codes we apply Corollary~\ref{cor:MDconv} for each $n$ to get 
\begin{align}
&\epsilon_{n,i}\geq  \min_{P_0 \in \calP_n}\bbP \left[ \log \frac{W(Y^n|x_{P_0}^n)}{Q^n_{P_0,Y}(Y^n)}< \tau_{n,i}\right] -  \exp \left\{ \tau_{n,i} - \log M_{n, i} - \log \frac{1}{\lambda_{n,i}} + |\calA| \log(n+1)\right\}, \quad  \forall \tau_{n,i} > 0 \label{eq:epsLB}
\end{align}
for all $1\leq i \leq m_n$ and some $(\lambda_{n,1}, \dots, \lambda_{n, m_n}) \in \calL_{m_n}$. 

Now we defined $\Lambda \in \calL$  by
\begin{align}
\Lambda_{n,i} &= \lambda_{n,i} \mbox{ if } 1\leq i \leq m_n\\
 \Lambda_{n,i} & =0 \mbox{ otherwise}.
\end{align}

Next, consider classes $i$ for which $\left((\rho_{n,i})_{n=1}^\infty, \left(\frac{1}{n}\log\frac{1}{\Lambda_{n,i}}\right)_{n=1}^\infty \right)$ satisfy the moderate deviations regularity conditions and for convenience define
\begin{align}
\trho_{n,i} = \rho_{n,i} - \frac{1}{n}\log \frac{1}{\Lambda_{n,i}}.
\end{align}

Pick arbitrary $\gamma>1$ and define
\begin{align}
\tau_{n,i} = nC-\gamma n\trho_{n,i}.
\end{align}
From assumptions on $( \rho_{n,i} )_{n=1}^\infty$ and $(\trho_{n,i})_{n=1}^\infty$ we know that $\trho_{n,i} \to 0$ and thus $\tau_{n,i} > 0$ for sufficiently large $n$. Evaluating~(\ref{eq:epsLB}) thus yields,
\begin{align}
\epsilon_{n,i} &\geq \min_{P_0 \in \calP_n} \bbP \left[ \log \frac{W(Y^n|x_{P_0}^n)}{Q^n_{P_0,Y}(Y^n)}< nC-\gamma n\trho_{n,i} \right] - \exp \left\{-n\trho_{n,i} (\gamma - 1) + |\calA|\log (n+1) \right\} 
\end{align}

Next, let $P_{n,i}$ be the type that achieves the minimum above for a given $n$ and $i$. By compactness of $\calP$ we may assume (by passing to a subsequence if necessary) that $P_{n,i} \to P_i^\ast$. 
We can say that
\begin{align}
\log \frac{W(Y^n|x_{P_{n,i}}^n)}{Q_Y^n(Y^n)} \sim \sum_{j=1}^n Z_j
\end{align}
where $Z_j$ are independent and
\begin{align}
\sum_{j=1}^n \bbE[Z_j] &= nI(P_{n,i}, W)\\
\sum_{j=1}^n \bbV ar[Z_j] &= nV(P_{n,i}, W)\\
\sum_{j=1}^n \bbE[|Z_j - \bbE[Z_j]|^3] &= nT(P_{n,i}, W)\\
\end{align}
Thus we obtain,
\begin{align}
\epsilon_{n,i} \geq B_{n,i} - \tilB_{n,i} \label{eq:twoterms}
\end{align}
where
\begin{align}
B_{n,i} &=\bbP \left[ \sum_{j=1}^n Z_j <nC-\gamma n\trho_{n,i} \right],\\
\tilB_{n,i} &= \exp \left\{-n\trho_{n,i}(\gamma - 1) + |\calA|\log (n+1) \right\}. 
\end{align}
Observe that if $I(P_i^\ast, W) <C$  then by Chebyshev's inequality $B_{n,i}$ converges to $1$ as $n\to \infty$. Otherwise, $I(P_i^\ast, W) =C$ and by continuity of $V(P,W)$ we have
\begin{align}
V(P_{n,i},W) \to V(P_i^\ast, W) \geq V_{\min} > 0.
\end{align}
Applying Theorem~\ref{thm:Roz} yields,
\begin{align}
B_{n,i} \geq Q\left( \frac{\gamma}{\sqrt{V(P_{n,i},W)}} \sqrt{n\trho_{n,i}^2} \right)\exp \left\{  - \frac{A_1 \gamma^3}{V^3(P_{n,i}, W)} n\trho_{n,i}^3\right\} \left(1 - \frac{\gamma A_2 T(P_{n,i}, W)}{V^2(P_{n,i},W)} \trho_{n,i} \right). \label{eq:roz}
\end{align}
And so,
\begin{align}
&\liminf_{n \to \infty} \frac{1}{n \trho_{n,i}^2} \log \bbP\left[ \sum_{j=1}^n Z_j - nI(P_{n,i},W)<-\gamma n\trho_{n,i}\right] \\
&\geq \lim_{n \to \infty} \frac{1}{n \trho_{n,i}^2} \log Q\left( \frac{\gamma}{\sqrt{V(P_{n,i},W)}} \sqrt{n\trho_{n,i}^2} \right)\\ 
&= -\frac{\gamma^2}{2V(P_i^\ast, W)} \geq -\frac{\gamma^2}{2V_{\min}}
\end{align}
Finally, observe that
\begin{align}
& \frac{1}{n \trho_{n,i}^2} \log \tilB_{n,i} =- \frac{1}{\trho_{n,i}}  \left( (\gamma -1) + \frac{|\calA|\log (n+1)}{n\trho_{n,i}} \right)\\
&=- \frac{1}{\trho_{n,i}}  \left( (\gamma -1) +\frac{1}{\sqrt{n}\trho_{n,i}} \frac{|\calA|\log (n+1)}{\sqrt{n}} \right) \to -\infty.
\end{align}
And
\begin{align}
\frac{1}{n \trho_{n,i}^2} \log \frac{\tilB_{n,i}}{B_{n,i}} \to -\infty
\end{align}
implies
\begin{align}
 \frac{\tilB_{n,i}}{B_{n,i}} \to 0,
\end{align}
so the second term in~(\ref{eq:twoterms}) is asymptotically insignificant compare to the first.

To complete the argument we need to consider the classes $i$ for which $\left((\rho_{n,i})_{n=1}^\infty, \left(\frac{1}{n}\log\frac{1}{\Lambda_{n,i}}\right)_{n=1}^\infty \right)$ do not satisfy all the moderate deviations regularity conditions. By assumption of the theorem~(\ref{eq:rho}) is always satisfied. In case that~(\ref{eq:positive}) is not satisfied let
\begin{align}
\tau_{n,i} = nC-2|\calA|\log(n+1).
\end{align}
If~(\ref{eq:positive}) is satisfied but~(\ref{eq:infinity}) is not satisfied let
\begin{align}
\tau_{n,i} = nC-2|\calA|\log(n+1)-n\trho_{n,i}.
\end{align}

Repeating the argument for the regular case we obtain that the second term in~(\ref{eq:epsLB}) goes to zero for infinitely many $n$ and RHS of~(\ref{eq:roz}) will be constant and the error is bounded away from zero.
\end{proof}
\section*{Acknowldegements}
This work was supported in part by the NSF under Grant CAREER 0844539, in part by Natural Science and Engineering Research Council of Canada (NSERC) Discovery Research Grant, and in part by  NUS    startup grant  WBS R-263-000-A98-750~(FoE). The authors would like to thank Sergio Verd\'{u} for insightful discussions. The authors would also like to thank the anonymous ISIT reviewrer and Sergio Verd\'{u} for suggesting the term `unequal message protection (UMP) codes'.


\bibliographystyle{IEEEtran}
\bibliography{Master}

\begin{thebibliography}{10}
\providecommand{\url}[1]{#1}
\csname url@samestyle\endcsname
\providecommand{\newblock}{\relax}
\providecommand{\bibinfo}[2]{#2}
\providecommand{\BIBentrySTDinterwordspacing}{\spaceskip=0pt\relax}
\providecommand{\BIBentryALTinterwordstretchfactor}{4}
\providecommand{\BIBentryALTinterwordspacing}{\spaceskip=\fontdimen2\font plus
\BIBentryALTinterwordstretchfactor\fontdimen3\font minus
  \fontdimen4\font\relax}
\providecommand{\BIBforeignlanguage}[2]{{%
\expandafter\ifx\csname l@#1\endcsname\relax
\typeout{** WARNING: IEEEtran.bst: No hyphenation pattern has been}%
\typeout{** loaded for the language `#1'. Using the pattern for}%
\typeout{** the default language instead.}%
\else
\language=\csname l@#1\endcsname
\fi
#2}}
\providecommand{\BIBdecl}{\relax}
\BIBdecl

\bibitem{PPV2010}
Y.~Polyanskiy, H.~Poor, and S.~Verdu, ``Channel coding rate in the finite
  blocklength regime,'' \emph{Information Theory, IEEE Transactions on},
  vol.~56, no.~5, pp. 2307--2359, May 2010.

\bibitem{thesis:Polyanskiy}
Y.~Polyanskiy, ``Channel coding: {Non}-asymptotic fundamental limits,'' Ph.D.
  dissertation, Princeton University, 2010.

\bibitem{Csiszar1982}
I.~Csisz\'ar, ``Joint source-channel exponent,'' \emph{Problems of Control and
  Information Theory}, vol.~9, no.~5, pp. 315--328, 1982.

\bibitem{WIK2011}
D.~Wang, A.~Ingber, and Y.~Kochman, ``The dispersion of joint source-channel
  coding,'' in \emph{Allerton Conference}, 2011, {arXiv:1109.6310}.

\bibitem{KosVer2012}
V.~Kostina and S.~Verdu, ``Lossy joint source-channel coding in the finite
  blocklength regime,'' in \emph{Information Theory Proceedings (ISIT), 2012
  IEEE International Symposium on}, July 2012, pp. 1553--1557.

\bibitem{FK2013}
L.~Farkas and T.~Koi, ``Random access and source-channel coding error exponents
  for multiple access channels,'' in \emph{Information Theory Proceedings
  (ISIT), 2013 IEEE International Symposium on}, July 2013, pp. 374--378.

\bibitem{mine:STD2014:jscc}
Y.~Shkel, V.~Tan, and S.~Draper, ``On mismatched unequal error protection for
  finite blocklength joint source-channel coding,'' in \emph{To appear - ISIT
  2014}, 2014.

\bibitem{Kudryashov1979}
B.~D. Kudryashov, ``Message transmission over a discrete channel with noiseless
  feedback,'' \emph{Problemy Peredachi Informatsii}, vol.~21, no.~1, pp. 3--13,
  1979.

\bibitem{mine:NSD2013}
B.~Nazer, Y.~Shkel, and S.~Draper, ``The {AWGN} red alert problem,''
  \emph{Information Theory, IEEE Transactions on}, vol.~59, no.~4, pp.
  2188--2200, April 2013.

\bibitem{mine:SD2010}
Y.~Shkel and S.~Draper, ``Cooperative reliability for streaming multiple
  access,'' in \emph{Information Theory Proceedings (ISIT), 2010 IEEE
  International Symposium on}, June 2010, pp. 1838--1842.

\bibitem{mine:SDN2011}
Y.~Shkel, S.~Draper, and B.~Nazer, ``On the cooperative red alert exponent for
  the {AWGN-MAC} with feedback,'' in \emph{Communication, Control, and
  Computing (Allerton), 2011 49th Annual Allerton Conference on}, Sept 2011,
  pp. 493--500.

\bibitem{BNZ2009}
S.~Borade, B.~Nakiboglu, and L.~Zheng, ``Unequal error protection: An
  information-theoretic perspective,'' \emph{Information Theory, IEEE
  Transactions on}, vol.~55, no.~12, pp. 5511--5539, Dec 2009.

\bibitem{Strassen1962}
V.~Strassen, ``{Asymptotische Absch\"{a}tzungen in Shannons
  Informationstheorie},'' in \emph{Trans. Third Prague Conf. Inf. Theory},
  Prague, 1962, pp. 689--723.

\bibitem{TomTan2013}
M.~Tomamichel and V.~Tan, ``A tight upper bound for the third-order asymptotics
  for most discrete memoryless channels,'' \emph{Information Theory, IEEE
  Transactions on}, vol.~59, no.~11, pp. 7041--7051, Nov 2013.

\bibitem{preprint:AltWag2013}
Y.~Altu\u{g} and A.~B. Wagner, ``The third-order term in the normal
  approximation for singular channels,'' \emph{{\tt arXiv:1309.5126 [cs.IT]}},
  Sep 2013.

\bibitem{AltWag2010}
------, ``Moderate deviation analysis of channel coding: Discrete memoryless
  case,'' in \emph{Int. Symp. Inf. Th.}, Austin, TX, 2010, {\tt arXiv:1208.1924
  [cs.IT]}.

\bibitem{preprint:AW2012}
Y.~Altug and A.~B. Wagner, ``Moderate deviations in channel coding,''
  \emph{CoRR}, vol. abs/1208.1924, 2012.

\bibitem{PolVer2010}
Y.~Polyanskiy and S.~Verd\'{u}, ``Channel dispersion and moderate deviations
  limits for memoryless channels,'' in \emph{Allerton Conference}, 2010.

\bibitem{book:Gallager}
R.~G. Gallager, \emph{{Information Theory and Reliable Communication}}.\hskip
  1em plus 0.5em minus 0.4em\relax New York: Wiley, 1968.

\bibitem{Polaynskiy2013}
Y.~Polyanskiy, ``Saddle point in the minimax converse for channel coding,''
  \emph{Information Theory, IEEE Transactions on}, vol.~59, no.~5, pp.
  2576--2595, May 2013.

\bibitem{CS1996}
I.~Csiszar and P.~Shields, ``Redundancy rates for renewal and other
  processes,'' \emph{Information Theory, IEEE Transactions on}, vol.~42, no.~6,
  pp. 2065--2072, Nov 1996.

\bibitem{book:CT}
T.~M. Cover and J.~A. Thomas, \emph{Elements of Information Theory},
  2nd~ed.\hskip 1em plus 0.5em minus 0.4em\relax Wiley-Interscience, 2006.

\bibitem{Rozovsky2002}
\BIBentryALTinterwordspacing
L.~Rozovsky, ``\BIBforeignlanguage{English}{Estimate from below for
  large-deviation probabilities of a sum of independent random variables with
  finite variances},'' \emph{\BIBforeignlanguage{English}{Journal of
  Mathematical Sciences}}, vol. 109, no.~6, pp. 2192--2209, 2002. [Online].
  Available: \url{http://dx.doi.org/10.1023/A%3A1014589618720}
\BIBentrySTDinterwordspacing

\bibitem{VerHan1994}
S.~Verd\'{u} and T.~S. Han, ``A general formula for channel capacity,''
  \emph{IEEE Trans. on Inf. Th.}, vol.~40, no.~4, pp. 1147--57, Apr 1994.

\bibitem{book:DZ}
A.~Dembo and O.~Zeitouni, \emph{Large Deviations Techniques and Applications},
  2nd~ed.\hskip 1em plus 0.5em minus 0.4em\relax Springer, 1998.

\end{thebibliography}

\end{document}